\newcommand{\mc}[1]{\mathcal{#1}}
\newcommand{\veps}{\varepsilon}
\newcommand{\lra}{\longrightarrow}
\newcommand{\tr}{\mathrm{tr}}
\newcommand{\w}{\omega}
\newcommand{\N}{\mathbb{N}}
\newcommand{\R}{\mathbb{R}}
\newcommand{\C}{\mathbb{C}}
\newcommand{\I}{\mathbb{I}}
\DeclareMathOperator{\spann}{span}
\DeclareMathOperator{\graph}{graph}
\newcommand{\bra}[1]{\langle #1|}
\newcommand{\ket}[1]{|#1\rangle}
\newcommand{\braket}[2]{\langle #1|#2\rangle}
\newcommand{\figg}[2]{\ensuremath{\vcenter{\hbox{{\includegraphics[ scale=#1]{#2}}}}}}
\newcommand{\figgg}[2]{\ensuremath{\vcenter{\hbox{{\includegraphics[ scale=#1]{Dib/#2}}}}}}
\def\cont{\tfrac{c}{\ \ }}
\DeclareMathOperator{\trace}{tr} 
\DeclareMathOperator{\spanned}{span}
\DeclareMathOperator{\spec}{spec}
\DeclareMathOperator{\Real}{Re}
\begin{document}

\title{Frustration free gapless Hamiltonians for Matrix Product States}

\author{C. Fern\'andez-Gonz\'alez\inst{1,2}
\and N. Schuch\inst{3,4}
\and M. M. Wolf\inst{5}
\and J. I. Cirac\inst{6}
\and D.~P\'erez-Garc\'ia\inst{2,7}}

\institute{Departamento de F\'{\i}sica de los Materiales, Universidad
Nacional de Educaci\'on a Distancia (UNED), 28040 Madrid, Spain
\and
Departamento de An\'alisis Matem\'atico \& IMI, Universidad
Complutense de Madrid, 28040 Madrid, Spain
\and 
Institut f\"ur Quanteninformation, RWTH Aachen, 52056 Aachen,
Germany
\and
Institute for Quantum Information,
California Institute of Technology, MC 305-16, Pasadena CA 91125, U.S.A.
\and
Department of Mathematics, Technische Universit\"at M\"unchen,
85748 Garching, Germany
\and
Max-Planck-Institut f\"ur Quantenoptik, Hans-Kopfermann-Str. 1,
D-85748 Garching, Germany
\and
Instiuto de Ciencias Matemáticas, ICMAT (CSIC-UAM-UC3M-UCM), Campus de Cantoblanco, 28049 Madrid, Spain}

\def\makeheadbox{}
\date{}

\maketitle

\begin{abstract}
For every Matrix Product State (MPS) one can always construct a so-called
parent Hamiltonian. This is a local, frustration free, Hamiltonian which has
the MPS as ground state and is gapped. Whenever that
parent Hamiltonian has a degenerate ground state space (the so-called non-injective case),
we construct another 'uncle' Hamiltonian which is also local and frustration free, has the same ground state space, but is
gapless, and its spectrum is $\R^+$. The construction is obtained by linearly
perturbing the matrices building up the state in a random direction, and then taking the limit
where the perturbation goes to zero. For MPS where the parent
Hamiltonian has a unique ground state (the so-called injective case) we
also build such uncle Hamiltonian with the same properties in the 
thermodynamic limit.
\end{abstract}

\section{Introduction}

One of the aims of condensed matter physics is to understand the low temperature behavior of locally interacting quantum systems. For that, one has to {\it solve} the Hamiltonian which is postulated to capture the physics of the system under study. This means obtaining a good description of its ground state which, in turn, allows to make predictions on the observable properties of the system. However, solving a Hamiltonian is, except for very few models, completely out of reach analytically and in general also very hard even numerically. This has motivated a reverse engineering approach, where one designs particular wavefunctions which try to capture some of the physical properties of the system, such as symmetries or frustration, and tries to find appropriate Hamiltonians for them for which one can give a rigorous mathematical treatment.

A first paradigmatic example of this approach is the AKLT state \cite{aklt}, introduced by Affleck, Kennedy, Lieb and Tasaki in 1988 as a way to understand the one dimensional antiferromagnetic spin-1 Heisenberg model. In this case, an associated nearest neighbor Hamiltonian was already introduced in \cite{aklt} and its spectral gap rigorously proven --as opposed to the gap of the rest of Haldane's phase where only numerical evicende is known. The AKLT construction was generalized later in 1992 by Fannes, Nachtergaele and Werner \cite{fannes:FCS}. They constructed the family of Finitely Correlated States, nowadays known as Matrix Product States (MPS), and found a gapped local Hamiltonian (called the {\it parent} Hamiltonian) for each one of them\footnote{The very same year, White introduced the Density Matrix Renormalization Group (DMRG) algorithm \cite{White}, which turned out to be extremely successful as a way to derive ground states from 1D gapped local Hamiltonians. It was only realized later that DMRG was a way to find the closest MPS to the target state, indicating that the family of MPS was indeed large enough to describe the low temperature physics of all gapped 1D systems. This was finally proven by Hastings in \cite{Hastings} (see also \cite{Verstraete-faithfully}) and later, with exponentially better parameters, by Arad, Kitaev, Landau and Vazirani in \cite{arad}.}. The other paradigmatic example is the Resonating Valence Bond (RVB) state, postulated in the seminal work of Anderson \cite{RVBS-Anderson} in 1987 as a way of explaining high-Tc superconductivity. This, together with the fact that such state on a geometrically frustrated lattice would be a topological quantum spin liquid, has attracted the attention of theoretical and experimental physicists for many years \cite{balents}. In this case, only very recently \cite{RVBS-PEPS,Seidel:RVBS}, and using the 2D version of the MPS parent Hamiltonian described in 
\cite{perez-garcia:pepsasgroundstates,schuch:peps-sym}, an associated local Hamiltonian has been found and the properties of that Hamiltonian are still under study.

The interest in this reverse engineering approach (obtaining Hamiltonians for a target quantum state) has boosted in recent years thanks to the role it plays in quantum information theory and in the problem of classifying quantum phases of matter. In the first case, one identifies valuable resource states: GHZ states \cite{GHZ}, cluster states \cite{1Dcluster}, topological codes \cite{kitaev:toriccode}, ..., and tries to find ways to engineer them and stabilize them for long times. In essentially all these cases, the desired states are MPS or their 2D generalization (PEPS), so the parent Hamiltonian construction  \cite{fannes:FCS} gives us already a solution. In the second, to show that two quantum states $\ket{\psi_0}$, $\ket{\psi_1}$ are in the same phase, one must engineer a smooth path of gapped Hamiltonians $H_\lambda$, $\lambda\in [0,1]$, so that $\ket{\psi_0}$ is a ground state of $H_0$ and $\ket{\psi_1}$ is a ground state of $H_1$. Here, the parent Hamiltonian construction was the the way to engineer the paths of Hamiltonians which led to the final classification of 1D phases presented in \cite{chen:classification,kitaev:classification,pollmann,schuch:phaseclassification}.

This parent Hamiltonian, despite being so useful as the main general reverse engineering construction, has a clear weakness, as pointed out by Chen et al in \cite{chen:parentH-robustness}. It is in general not robust against perturbations in the matrices which define the MPS. This issue is crucial for the applications described above. In this work, we analyze this problem and solve it in full detail by (1) identifying the Hamiltonian (we call it {\it uncle}) which is being perturbed when perturbing the MPS, (2) analyzing the properties of the uncle Hamiltonian and (3) characterizing for which perturbations this uncle Hamiltonian coincides with the parent.

In particular, we show that the parent Hamiltonian is robust if and only if the MPS is injective \cite{perez-garcia:mps-reps}. When this is not the case, which corresponds to systems with discrete symmetry breaking \cite{perez-garcia:mps-reps} (block-injective MPS), the uncle Hamiltonian has the following striking properties: (1) It has the same -finite
dimensional- ground space as the parent Hamiltonian, (2) it is also
frustration free, but (3) it is gapless and its spectrum
equals $[0,+\infty)$ in the thermodynamic limit, whereas parent Hamiltonians always show a spectral gap. These properties hold for
\emph{any} block-injective MPS, and for \emph{generic}
perturbations. Unlike the parent Hamiltonian, the uncle Hamiltonian
may change continuously under perturbations of the MPS even in the
presence of a discrete symmetry.

As a by-product we obtain new examples which shed new light on the properties inherent to critical systems and phase transitions, in the line initiated in \cite{wolf:phasetransitions-mps}. Notice that all our examples are generic, as gapless as they can be since the spectrum is the whole positive real line, but they are frustration free and there is no power-law decay of correlations within the ground space. Indeed, modifying our construction, for any injective MPS (which always has exponentially decaying correlations), one can obtain simple frustration free models with the same spectral properties but with the given MPS as unique ground state in the thermodynamic limit -- however, the finite dimensional ground spaces do not coincide in this injective case.  Finally, our analysis could help in the design of MPS-based algorithms to simulate systems with discrete symmetry breaking in 1D or, more generally, systems with topological order in 2D.

The paper is organized as follows:  In Section \ref{sec:defs} we start with the basic definitions. Section \ref{sec:example} works out the illustrative example of the GHZ-state which helps in understanding the main theorems presented in Section \ref{sec:main}. Section \ref{sec:injectiveMPS} finally deals with the case of injective MPS. For the sake of clarity, we have moved some of the technical proofs to the appendices, where one can also find how to treat formally the thermodynamic limit using the GNS construction and some basic facts we need about  spectral theory of unbounded operators on a Hilbert space. 

\section{\label{sec:defs}
Definitions}

In this section, we will provide the necessary definitions: We start by introducing Matrix Product States, their normal form,
and the injectivity property. Next, we show how parent Hamiltonians for
MPS are constructed.  We find that parent Hamiltonians change
discontinuously under certain perturbations, which motivates the
introduction of uncle Hamiltonians.

\subsection{Matrix Product States}

\begin{definition}[Matrix Product States]\label{MPS}
A state $\ket{\psi}\in (\C ^d)^{\otimes L}$ is called a Matrix Product
State (MPS) if it can be written as 
\begin{equation}
\label{eq:mps-def}
\ket{M(A)} = \sum _{i_1,...,i_L} 
    \tr[A_{i_1} \cdots A_{i_L}] \ket{ i_1,\dots,i_L},
\end{equation}
where the $A_i$, $i=1,\dots,d$, are $D\times D$ matrices ($D$ is called
the bond dimension).
\end{definition}

\noindent The matrices $\{A_i\}_i$ can also be thought of as a tensor $A$
with three indices $(A_i)_{\alpha \beta}$, where the matrix indices
$\alpha$ and $\beta$ are referred to as ``virtual indices'', and the index
$i$ as ``physical index''. Note that this definition of MPS is restricted
to translationally invariant states with periodic boundary conditions,
which we will be concerned with in this paper.

Throughout this paper, we will use a graphical notation for tensor
contractions such as in Eq.~(\ref{eq:mps-def}), cf.~\cite{schuch:peps-sym}:
A tensor with $k$ indices will be denoted as a box with $k$ legs; e.g.,
the tensor $(A_i)_{\alpha\beta}$ is denoted as
\[
\figgg{0.18}{Atenslabel}
\ \raisebox{-0.5em}{$\equiv$}
\raisebox{-0.2em}{\figgg{0.18}{Atens}}
\]
with the upper leg by convention denoting the physical index. We will
generally omit the labels.  The contraction of two tensors, i.e., summing
over a joint index, is denoted by concatenating the corresponding indices;
e.g., $\sum_\beta
(A_i)_{\alpha\beta}(B_j)_{\beta\gamma}$ is written as
\[
\figgg{0.16}{AB2label}
\ \raisebox{-0.5em}{$\equiv$}\ \ 
\raisebox{-0.2em}{\figgg{0.16}{AB2}}
\ \raisebox{-0.7em}{\ .}\ 
\]
We will also use the more compact notation 
\[A\cont B\]
 instead.  In this
graphical language, an MPS, Eq.~(\ref{eq:mps-def}), can be expressed as 
\[
\figgg{0.18}{AA_A2}\quad .
\]
Given a tensor $T\equiv (T_i)_{\alpha\beta}$, we will denote by
$\bra{\alpha}T\ket{\beta}$ the vector $\sum_i
\bra{\alpha}T_i\ket{\beta}\ket{i}=\raisebox{0.3em}{\figgg{0.16}{iTj}}$\,\
in the physical space.  More generally, matrix operations are generally
meant to act on the virtual degrees of freedom; e.g., $C=A\oplus B$ denotes
the tensor with components $C_i=A_i\oplus B_i$.

In order to compute expectation values etc., we need to contract tensor
networks with their adjoint.  Tensors with legs pointing down are always
complex conjugated,\footnote{Throughout, $\bar\cdot$ will denote the
complex conjugate and $\cdot^*$ the adjoint operator.}
\[
\raisebox{-0.35em}{\figgg{0.16}{Atenslabeldown}}
\ \equiv (\bar A_i)_{\alpha\beta}\ \equiv (A_i^*)_{\beta\alpha}\ .
\]
An object of particular interest is the ``transfer operator''
\[
E_A^B = \figgg{0.18}{EAB}\quad ,
\]
which we will occasionally interpret as a map from the right to the left
indices, $E_A^B:X\mapsto \sum_i A_i X  B_i^*$.

\subsection{Normal form}

MPS are stable under blocking: If we e.g.\ block pairs of physical sites,
$j_k\equiv(i_{2k-1},i_{2k})$, the state can again be expressed as an MPS
in the blocked indices, with tensors $A_{j_k}\equiv
A_{i_{2k-1}}A_{i_{2k}}$. By blocking a finite number of
sites~\cite{sanz:wielandt} and appropriate gauge transformations, any MPS
can be brought into a standard
form~\cite{fannes:FCS,perez-garcia:mps-reps}.  
\begin{theorem}[Standard form for
MPS~\cite{fannes:FCS,perez-garcia:mps-reps}, injectivity] 
\label{thm:std-form-inj}
After blocking, any MPS can be written in a standard form where the
matrices $A_i$ have the following properties: 
\begin{enumerate}
\item The $A_i$ are block-diagonal:
    $A_i=\oplus_{j=1}^\mathcal{D} A_i^j\otimes \Gamma_j$, 
    where $A_i^j\in\mathcal{M}_{l_j}$ (the space of $l_j\times l_j$
    matrices) and the $\Gamma_j$ are positive diagonal
    matrices.
\item The $A_i$ span the space of block-diagonal matrices:
    ${\rm span}_i A_i=\bigoplus_{j=1}^\mathcal{D}
    \mathcal{M}_{l_j}\otimes \Gamma_j$.
\item For all $j$, and for $A^j$ denoting the MPS tensor defined by the submatrices $A^j_i$,  the map $\mathcal E_j:=E_{A^j}^{A^j}$ has spectral radius one, with $1$ as the unique eigenvalue of modulus $1$, and with eigenvectors 
    $\mathcal E_j(\I)=\I$ and
    $\mathcal E^*_j(\Lambda_A^j)=\Lambda^j_A$, where
    $\Lambda^j_A>0$, $\trace(\Lambda_A^j)=1$.
\end{enumerate}
Property 2 with every $\Gamma_j=1$ is called \emph{block-injectivity}; in particular, if 
$\mathcal{D}=1$, and $\Gamma_1=1$, $A$ is called \emph{injective}. 
\end{theorem}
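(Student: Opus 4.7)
The plan is to apply Perron--Frobenius theory for the completely positive transfer map $\mathcal{E}:X\mapsto\sum_i A_i X A_i^*$ on the $D$-dimensional virtual space, combined with a suitable blocking of physical sites and a similarity transformation acting on the virtual indices. After rescaling I may assume $\mathcal{E}$ has spectral radius one. By standard CP-map theory its peripheral spectrum is a union of cyclic groups of roots of unity, so by first blocking $p$ physical sites (replacing $\{A_i\}_i$ by $\{A_{i_1}\cdots A_{i_p}\}_{i_1,\dots,i_p}$) I reduce to the case where $1$ is the unique eigenvalue of $\mathcal{E}$ of modulus one.

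Next I extract the direct-sum structure from the fixed-point spaces of $\mathcal{E}$ and $\mathcal{E}^*$. Choosing a maximal set of mutually orthogonal positive fixed points of $\mathcal{E}$, their common supports (read off against a faithful dual fixed point of $\mathcal{E}^*$) induce a decomposition $\C^D=\bigoplus_{j=1}^{\mathcal{D}} V_j\otimes W_j$, where the factors $W_j$ accommodate the multiplicity of the fixed point inside block $j$. Conjugating every $A_i$ by a single invertible $X$ aligns this decomposition with the standard basis and simultaneously brings the full fixed point to $\bigoplus_j\I_{V_j}\otimes\Gamma_j^2$ with $\Gamma_j>0$ diagonal; this forces $A_i=\bigoplus_j A_i^j\otimes\Gamma_j$, establishing property~1. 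Restricted to block $j$, the map $\mathcal{E}_j$ then inherits $\I$ as its unique fixed point, $1$ as its only peripheral eigenvalue, and a faithful dual fixed point $\Lambda_A^j>0$, which I rescale so that $\trace(\Lambda_A^j)=1$: this is property~3.

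For property~2 I invoke the quantum Wielandt inequality from \cite{sanz:wielandt} blockwise: since on each block $\mathcal{E}_j$ has $\I$ as its unique peripheral fixed point, after blocking a number of sites at most polynomial in $l_j$ the matrices $\{A_i^j\}_i$ span all of $\mathcal{M}_{l_j}$. Taking the maximum of these blocking lengths across $j$ and combining with the initial blocking by $p$ yields a single uniform blocking after which properties~1--3 hold simultaneously.

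The main obstacle I anticipate is coordinating this last blocking step: I must verify that the additional blocking does not reintroduce non-trivial peripheral spectrum and that the factorisation $A_i^j\otimes\Gamma_j$ is preserved. Concretely, the $\Gamma_j$'s arising from successive blockings multiply into new positive diagonals which must still be absorbable coherently on each block, and the similarity $X$ chosen in the second step must remain compatible with the new generators. Tracking this bookkeeping carefully — and checking that block-diagonality and the tensor form are stable under blocking — is what closes the argument while keeping all three properties in place.
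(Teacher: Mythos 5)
The paper does not prove this theorem at all: it is imported verbatim from the cited references \cite{fannes:FCS,perez-garcia:mps-reps} (with the blocking length controlled by \cite{sanz:wielandt}), so there is no in-paper proof to compare against. Your sketch follows the same standard route as those references — Perron--Frobenius theory for the CP transfer map, blocking away the peripheral roots of unity, reading off the decomposition $\C^D=\bigoplus_j V_j\otimes W_j$ from the fixed-point algebra, and invoking the quantum Wielandt inequality for the spanning property — and the bookkeeping you worry about at the end is indeed benign (blocking sends $A_i^j\otimes\Gamma_j$ to $(A^j_{i_1}\cdots A^j_{i_k})\otimes\Gamma_j^k$, which preserves the tensor form with a new positive diagonal factor, and primitivity of each block is inherited from the aperiodicity of $\mathcal{E}^p$).

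There is, however, one genuine gap in your treatment of property~2. The Wielandt inequality applied blockwise only gives that each $\{A^j_{i_1}\cdots A^j_{i_k}\}$ separately spans $\mathcal{M}_{l_j}$; property~2 asserts that the \emph{joint} span of the block-diagonal matrices $\oplus_j (A^j_{i_1}\cdots A^j_{i_k})\otimes\Gamma_j$ is the full direct sum $\bigoplus_j\mathcal{M}_{l_j}\otimes\Gamma_j$, which is a strictly stronger statement: if two blocks were related by a similarity, the span would be confined to a ``diagonal'' subspace. You must therefore (i) verify that after grouping equivalent irreducible components into the multiplicity factors $W_j$ (equivalently $\Gamma_j$), the surviving blocks $A^j$ are pairwise inequivalent, and (ii) show that inequivalence forces linear independence of the blockwise spans after sufficient blocking. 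Point (ii) is exactly where an estimate like Lemma~\ref{lemma:offdiagonal-damping} of the paper, $\rho(E^{A^i}_{A^j})<1$ for $i\ne j$, or the Burnside-type algebra argument of \cite{perez-garcia:mps-reps}, enters; without it your argument establishes only blockwise injectivity, not block-injectivity in the sense of the theorem.
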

For the rest of the paper, we will always consider MPS in this standard
form.

\begin{definition}[Span of a tensor, Projector corresponding to a tensor]
Given a tensor $(T_i)_{\alpha\beta}$ with two virtual indices $\alpha$,
$\beta$, and one physical index $i$ (which can be a blocked index), we
define the \emph{span of $T$} as
\[
\spann\{T\}:=\spann\big\{\sum_i\tr[T_iX]\,\ket{i}\,
    \big\vert\, X \in \mc M_D\big\}\ .
\]
Also, we define the \emph{projector corresponding to $T$, $\Pi[T]$}, as the orthogonal projector onto $\spann\{T\}^\perp$.
\end{definition}
In particular, $T$ can arise from blocking two or more tensors $A$ from an
MPS, $T=A\cont A\cdots$. E.g.,
\begin{equation}
\label{eq:span-2}
\spann\{A\cont A\}
    \equiv \spann\big\{\sum_{i,j}
	\tr[A_{i}A_{j}X]\ket{i,j}\big\vert X\in \mc M_D\big\}\ .
\end{equation}

Note that injectivity of a tensor $T$ implies $\dim (\spanned\{T\})=D^2$.

\begin{lemma}[Continuity of projector of a tensor]
\label{lemma:span-continuity}
Let $T(\veps)$ be a family of tensors.  If $T(0)$ is injective and
$T(\veps)$ is continuous around $0$, then $\Pi[T(\veps)]$ is continuous at
$0$.

More generally, if $T(0)$ is block-injective and $T(\veps)$ is continuous
and block-diagonal in the same basis around $0$, then $\Pi[T(\veps)]$
is continuous around $0$.
\end{lemma}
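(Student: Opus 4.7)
The plan is to identify $\spann\{T(\veps)\}$ as the image of an explicit linear map depending continuously on $\veps$, and then use the fact that, under injectivity of $T(0)$, this map has locally constant (maximal) rank. Once the image has fixed dimension, both it and its orthogonal complement vary continuously with $\veps$, which gives continuity of $\Pi[T(\veps)]$.

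Concretely, first I would introduce
\[
F_\veps\colon \mc M_D\lra \C^d,\qquad
F_\veps(X):=\sum_i\tr\!\big[T(\veps)_i X\big]\ket{i}.
\]
By definition of $\spann\{T(\veps)\}$, $\mathrm{image}(F_\veps)=\spann\{T(\veps)\}$, and the continuity of $T(\veps)$ at $0$ makes $\veps\mapsto F_\veps$ continuous in operator norm. By Lemma~\ref{lemma:inj-consequences} (1$\Leftrightarrow$2), injectivity of $T(0)$ is exactly injectivity of $F_0$, so $\rank(F_0)=D^2$. Since the rank of a continuously varying linear map is lower semicontinuous and $D^2$ is already the maximal possible rank of $F_\veps$, we conclude $\rank(F_\veps)=D^2$ on a neighbourhood of $0$; equivalently, $F_\veps^*F_\veps$ is invertible there and depends continuously on $\veps$. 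Using the standard formula for the orthogonal projector onto the image of an injective map,
\[
P_\veps:=F_\veps(F_\veps^*F_\veps)^{-1}F_\veps^*,
\]
continuity of matrix inversion on invertibles gives $\veps\mapsto P_\veps$ continuous at $0$, and hence $\Pi[T(\veps)]=\I-P_\veps$ is continuous at $0$.

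For the block-injective case I would repeat the same argument after restricting $F_\veps$ to the right subspace. If $T(\veps)_i$ is block-diagonal in a fixed basis, then $\tr[T(\veps)_iX]$ only depends on the block-diagonal part of $X$, so $F_\veps$ factors through the projection $X\mapsto X_{\mathrm{bd}}$ onto $\mc M_D^{\mathrm{bd}}:=\bigoplus_{j}\mc M_{l_j}$. Block-injectivity of $T(0)$ says $F_0|_{\mc M_D^{\mathrm{bd}}}$ is injective; lower semicontinuity of rank, applied to the continuous family of restricted maps, gives injectivity of $F_\veps|_{\mc M_D^{\mathrm{bd}}}$ on a neighbourhood of $0$. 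The explicit projector formula with $F_\veps|_{\mc M_D^{\mathrm{bd}}}$ in place of $F_\veps$ then yields continuity of $\Pi[T(\veps)]$ around $0$. The key subtlety—and the one place the hypotheses are really used—is that all $T(\veps)$ are block-diagonal in \emph{the same} basis: this fixes the domain $\mc M_D^{\mathrm{bd}}$, so the constant-rank argument applies. If the block structure were itself allowed to move with $\veps$, the relevant subspace could jump and the lemma would fail, mirroring the discontinuity of parent Hamiltonians that motivates this whole paper.
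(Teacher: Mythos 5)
Your proof is correct, and it takes a genuinely different route from the paper's. The paper builds a continuously varying basis $\ket{v_k(\veps)}=\sum_i\tr[T_i(\veps)X_k]\ket{i}$ from a fixed basis $X_k$ of the (block-)diagonal matrices and then runs Gram--Schmidt step by step, arguing that normalization and the successive orthogonal projections are each continuous as long as no vector degenerates to zero --- which is precisely what (block-)injectivity at $\veps=0$ guarantees locally. You instead package the whole span as the image of the linear map $F_\veps$, invoke lower semicontinuity of rank to get local constancy of $\rank F_\veps$, and then write the projector in closed form as $F_\veps(F_\veps^*F_\veps)^{-1}F_\veps^*$. The two arguments rest on the same essential observation (the dimension of $\spann\{T(\veps)\}$ cannot drop near $0$ because of injectivity, and cannot rise because it is already maximal on the relevant domain), but your closed formula avoids the somewhat delicate induction over Gram--Schmidt steps and makes the dependence on $\veps$ manifestly continuous in one stroke; the paper's version is more elementary and constructive. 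Your handling of the block case --- restricting the domain of $F_\veps$ to the fixed subspace of block-diagonal matrices, which is legitimate because $\tr[T(\veps)_iX]$ only sees the diagonal blocks of $X$ when $T(\veps)_i$ is block-diagonal --- correctly mirrors the paper's remark that block-diagonality of $T(\veps)$ in the \emph{same} basis keeps the rank constant, and your closing observation about why a moving block structure would break the lemma is exactly the phenomenon that motivates the uncle construction. One cosmetic caveat: in the paper's standard form the blocks carry weights $\Gamma_j$, so the span of the $T(0)_i$ is $\bigoplus_j\mc M_{l_j}\otimes\Gamma_j$ rather than literally $\bigoplus_j\mc M_{l_j}$; this only changes the domain of the restricted map by a fixed isomorphism and does not affect the argument.
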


\begin{proof}
The proof follows directly from the fact that a basis $X_k$ of the space
of \mbox{(block-)}dia\-gonal matrices yields a continuously changing basis of
$\spann\{T(\veps)\}$ by virtue of $X_k\mapsto \sum_i\tr[T_i(\veps)X_k]\ket i=\ket{v_k(\veps)}$.

From these bases, the Gram-Schmidt orthogonalization process leads also continuously to orthonormal changing bases of $\spann\{T(\veps)\}$, say $\{\ket{e_k(\veps)}\}$: Beginning from $\ket{v_1(\veps)}\ne 0$, its normalization is continuous. Then, orthonormalization of $\ket{v_2(\veps)}$ with respect to $\ket{e_1(\veps)}$ only involves projecting onto the complement of $\spann\{\ket{e_1(\veps)}\}$ and normalization, and both operations are continuous as long as we keep in every moment away from the null vector (this is where the importance of injectivity or block-injectivity stems, in the need of the dimension of $\spann\{T(0)\}$ to be kept locally). One can also consider the process as projecting $\ket{v_2(\veps)}$ onto $\spann\{\ket{e_1(\veps)}\}$, and then substracting the result to $\ket{v_2(\veps)}$, which yields a non-zero vector that must be normalized. With the rest of the vectors the process is similar with finitely many steps, which makes the orthonormalization process continuous if the injectivity conditions are held and every step keeps away enough from 0. Note that injectivity of $T(0)$ and continuity of $T$ at 0 is enough to ensure injectivity of $T(\varepsilon)$ for small $\varepsilon$.

The final projection $\Pi[T(\veps)]$ is also continuous in $\veps$, since projecting any vector onto every linear space $\spann\{\ket{e_k(\veps)}\}$ and substracting the result to the vector are continuous in $\veps$.

In the case $T(0)$ is block-injective, the fact that $T(\varepsilon)$ is also block-diagonal makes the rank of $\{\ket{v_k(\varepsilon)}\}_k$ constant around 0. This allows to guarantee the same result about continuity.
\hspace*{\fill}\qed

\end{proof}

\subsection{The parent Hamiltonian}
Let us now turn towards Hamiltonians for MPS.  We will restrict to
translationally invariant Hamiltonians, and denote the local terms by
lowercase letters, e.g., $h$. When necessary, subscripts indicate the
sites $h$ acts on, e.g., $h_{i,i+1}$.  We will identify the local operator
$h$ and the global operator $h\otimes\I$.  The total Hamiltonian will be
denoted by the corresponding uppercase letter, e.g., $H=\sum
h:=\sum_{i=1}^N h_{i,i+1}$. Generally, indices will wrap around the ends
of the chain (e.g., here $N+1\equiv 1$).

Every MPS $\ket{M(A)}$ induces Hamiltonians to which it is an exact ground
state: The reduced state on, say, two adjacent sites, $\rho_2$, is
supported on $\spann\{A\cont A\}$, and thus for $h:=\Pi[A\cont
A]$, $h\ket{M(A)}=0$.
If $D^2<d^2$ -- this can be achieved by blocking, and is the case for the
standard form of Theorem~\ref{thm:std-form-inj} -- $\Pi[A\cont A]$ is
non-trivial, and we obtain a non-trivial Hamiltonian with two-body terms
$h$ which has $\ket{M(A)}$ as its ground state.

If we use $\ket{M(A)}$ in the standard form of
Thm.~\ref{thm:std-form-inj}, this Hamiltonian is particularly
well-behaved:
\begin{definition}[Parent Hamiltonian]
\label{def:parentham}
Let $\ket{M(A)}$ be a \mbox{(block-)}injective MPS, i.e., satisfying
condition 2 of Theorem~\ref{thm:std-form-inj}, and let $h=\Pi[A\cont A]$.
Then, the Hamiltonian $H=\sum h$ is called the \emph{parent
Hamiltonian}.
\end{definition}

\begin{theorem}[Ground state space of parent Hamiltonian~\cite{nachtergaele:spectralgap,perez-garcia:mps-reps,schuch:peps-sym}]
\label{thm:gs-subspace-parent}
The parent Hamiltonian has a $\mc D$-fold degenerate ground state space
spanned by $\ket{M(A^j)}$; in particular, $\ket{M(A)}$ is one of its
ground states. Also, the parent Hamiltonian is gapped in the thermodynamic
limit. 
\end{theorem}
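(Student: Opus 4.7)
The plan is to decompose the statement into three standard pieces for parent Hamiltonians and treat each by techniques already developed in \cite{nachtergaele:spectralgap,perez-garcia:mps-reps,schuch:peps-sym}.

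First, I would show that each $\ket{M(A^j)}$ is a zero-energy ground state, yielding the lower bound $\dim\ker H\ge\mc D$. By the block-diagonal standard form of Theorem~\ref{thm:std-form-inj}, $\spann\{A^j\cont A^j\}\subseteq\spann\{A\cont A\}$, so the local projector $h=\Pi[A\cont A]$ annihilates $A^j\cont A^j$ and hence $h\ket{M(A^j)}=0$ for every local term. Summing, $H\ket{M(A^j)}=0$, and since $H\ge 0$ the state sits at the bottom of the spectrum and $H$ is frustration free. To see that the $\mc D$ candidates are genuinely independent in the thermodynamic limit, I would compute the Gram matrix of overlaps as a trace of long powers of the transfer operators $E^{A^i}_{A^j}$: the diagonal entries are of order one by Lemma~\ref{lemma:exptransferoperator}, whereas the off-diagonal ones decay exponentially in $L$ by Lemma~\ref{lemma:offdiagonal-damping}.

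Second, I would establish the matching upper bound $\dim\ker H\le\mc D$ via the intersection property. For a finite open chain of length $k$, one shows inductively that $\bigcap_i\ker h_{i,i+1}=\spann\{A\cont\cdots\cont A\}$ once $k$ exceeds the injectivity length. The inductive step uses the left inverse $A^{-1}$ of Lemma~\ref{lemma:inj-consequences}: contracting $A^{-1}$ against the leftmost tensor of a candidate kernel vector reduces the problem to a shorter chain, and injectivity then forces the remainder to have MPS form. Imposing periodic boundaries and intersecting the open-chain kernels over all cyclic shifts isolates exactly those tensor contractions compatible with a single block per site; by Lemma~\ref{lemma:offdiagonal-damping}, cross-block contributions are forbidden after closing the ring, leaving precisely $\bigoplus_j\spann\{A^j\cont\cdots\cont A^j\}$, which is $\mc D$-dimensional and coincides with $\spann\{\ket{M(A^j)}\}$.

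Third, for the spectral gap in the thermodynamic limit I would invoke Nachtergaele's martingale method~\cite{nachtergaele:spectralgap}. Finite-dimensionality combined with the intersection property on a window of $k_0$ sites yields a strictly positive local gap $\Delta_{k_0}$ above the local kernel. The recursive bound then lifts this to a global estimate of the form $\mathrm{gap}(H_L)\ge\Delta_{k_0}(1-\varepsilon_k)^2$, where $\varepsilon_k$ measures the failure of overlapping local ground-space projectors to commute. Lemma~\ref{lemma:exptransferoperator} implies that $\varepsilon_k$ decays exponentially in $k$, so taking $k$ sufficiently large makes the lower bound uniformly positive in $L$. The main obstacle is the second step: the inductive intersection argument requires handling the block-diagonal structure consistently along the whole chain so that only intra-block contractions survive the periodic identification; once this characterisation of the local kernels is in place, the dimension count and the Nachtergaele estimate follow by routine applications of Lemmas~\ref{lemma:exptransferoperator} and \ref{lemma:offdiagonal-damping}.
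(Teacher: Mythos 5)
The paper does not actually prove this theorem — it is imported wholesale from the cited references — but your sketch faithfully reproduces the standard argument of those references (frustration-freeness plus a Gram-matrix independence estimate via Lemmas~\ref{lemma:exptransferoperator} and \ref{lemma:offdiagonal-damping}, the intersection/closure machinery for the dimension count, and the martingale method for the gap), which is also exactly the template the paper adapts in Appendix~\ref{appendix:intersection-closure} for the uncle Hamiltonian. One minor imprecision: in your closure step Lemma~\ref{lemma:offdiagonal-damping} is not the tool that kills the extra states — for the block-diagonal parent tensor the open-chain kernel contains no cross-block contributions in the first place (only the block-diagonal part of $X$ enters $\tr[A_{i_1}\cdots A_{i_n}X]$), and the collapse from $\sum_j l_j^2$ dimensions to $\mathcal{D}$ upon closing the ring follows from the inverse-tensor argument as in the paper's proof of Lemma~\ref{lemma:closure}.
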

\begin{remark}\label{remark:parent-ham-weaker-cond}
Note that in order to ensure the correct ground state subspace in
Theorem~\ref{thm:gs-subspace-parent}, a weaker condition than the
injectivity of each tensor used in Definition~\ref{def:parentham} is
enough: It is sufficient to take projectors onto the orthogonal
complement of the span of $k+1$ sites, where $k$ is chosen such that
blocking $k$ sites makes the tensor injective~\cite{sanz:wielandt}.
\end{remark}

\subsection{Uncle Hamiltonians}

The parent Hamiltonian construction can be interpreted as a map from the
set of MPS to the set of Hamiltonians, $\hat H:A\mapsto \hat H(A)$, which
associates to any MPS $\ket{M(A)}$ its parent Hamiltonian $\hat H(A)$.
While this map is well-behaved in terms of the properties of $\hat
H(A)$, we are also interested in its \emph{continuity}:  If we change $A$
smoothly, $A\rightarrow A+\veps P$, does $\hat H(A)$ change smoothly as
well?  If it were so, this would allow us to study perturbations of the
system by looking at perturbations of the MPS tensor $A$.  For injective
MPS in their standard form, Lemma~\ref{lemma:span-continuity} tells us
that this is indeed the case.  On the other hand, if $A$ is
block-injective, Lemma~\ref{lemma:span-continuity} requires $P$ to be
block-diagonal as well, and it is indeed easy to see that \emph{random}
perturbations $P$ will lead to a discontinuous change in $\hat H(A+\veps
P)$, as $\mathrm{rank}(\spann\{A\cont A\})$ is different for $\veps\ne0$.

This discontinuity motivates the introduction of uncle Hamiltonians, which
are robust under specific perturbations of the MPS tensor.

\begin{definition}[Uncle Hamiltonians]
Let $\ket{M(A)}$ be a MPS in standard form, and let $P\in \mc M_D$. Then,
the \emph{uncle Hamiltonian induced by $P$} is the Hamiltonian
\begin{equation}
\label{eq:parent-def}
H_P' := \lim_{\varepsilon\rightarrow0} \hat H(A+\varepsilon P)\ ,
\end{equation}
whenever this limit exists.
This is, the uncle Hamiltonian induced by a given perturbation is the limit of the parent Hamiltonian of
the perturbed MPS for the perturbation going to zero.
\end{definition}
For $A$ an injective tensor, the limit exists and is equal to
the parent Hamiltonian for every tensor perturbation, following Lemma~\ref{lemma:span-continuity}:
Parent and uncle Hamiltonian coincide for injective MPS. Thus, we focus
our attention on the uncle Hamiltonians of block-injective MPS; most of
this paper is devoted to studying their properties. 

In the following, $H=\sum h$ will denote the parent Hamiltonian, while
$H'_P=\sum h'_P$ denotes the uncle Hamiltonian, where we will occasionally
omit the subscript $P$. Recall that this limit Hamiltonian depends on the perturbation, therefore a uniform limit for every perturbation as usually understood does not exist in general.

\section{\label{sec:example}
Example: The GHZ state, the Ising model, and the XY model}

We will start our discussion of the properties of the uncle with the GHZ
state, which has the Ising model as its
parent Hamiltonian. The unnormalized GHZ state can be expressed as an MPS
\begin{equation*}
\ket{\mathrm{GHZ}}
    = \ket{00\cdots0}+\ket{11\cdots1},
    =\sum _{i_1,\dots,i_L} 
	\tr[A_{i_1} \dots A_{i_L}] \ket{ i_1,\dots,i_L}
\end{equation*}
with  $i_j \in \{0,1\}$, where 
$A_0= \left( \begin{smallmatrix}
1 & 0 \\ 0 & 0 \end{smallmatrix} \right)$ and
$A_1=\left( \begin{smallmatrix}
0 & 0 \\ 0 & 1 \end{smallmatrix} \right)$. Following
Definition~\ref{def:parentham},
the parent Hamiltonian of the Ising model can be constructed from the
span of two sites,
\[
\spann\{A\cont A\}=\spann\{\ket{00},\ket{11}\}\ ,
\]
which is indeed (up to an additive constant) the well-known Ising
Hamiltonian $\tfrac12\I-\big[\ket{00}\bra{00}+\ket{11}\bra{11}\big]$.

\subsection{Uncle Hamiltonian for the GHZ state}

Let us now construct the uncle Hamiltonian for the GHZ state. According to
the definition, we first need to fix a perturbation $P$ of the MPS tensor
$A$,
\begin{equation}
    \label{eq:P-GHZ}
P_0=\left( \begin{matrix}
a_0 & b_0 \\ c_0 & d_0 \end{matrix} \right)\mbox{\quad and \quad}
P_1=\left( \begin{matrix}
a_1 & b_1 \\ c_1 & d_1 \end{matrix} \right)\ .
\end{equation}
Next, we need to consider the MPS $\ket{M(C_{\varepsilon,P})}$ with
$C_{\varepsilon,P}=A+\varepsilon P$, and construct its parent Hamiltonian.
For a generic $P$, we need to block two sites to reach injectivity.
Thus, following Remark~\ref{remark:parent-ham-weaker-cond}, we need to
construct the terms of the parent Hamiltonian as the projector onto the
complement of the span on three sites,
\[
\mc S_3(\varepsilon)=\spann\{C_{\varepsilon,P}\cont C_{\varepsilon,P}\cont 
C_{\varepsilon,P}\}\ .
\]
$S_3(\varepsilon)$ is spanned by the four vectors $v_{\alpha\beta}=
\bra\alpha C_{\varepsilon,P}\cont C_{\varepsilon,P}\cont 
C_{\varepsilon,P}\ket\beta$, which are straightforwardly found to be
\begin{align*}
v_{00}&=\ket{000}+O(\varepsilon)\ ,\\
v_{01}&= \varepsilon\big[b_0 \ket{000} +(b_0+b_1)  \ket{001} + 
    (b_0+b_1) \ket{011}+ b_1\ket{111}\big]+O(\varepsilon^2)\ ,\\
v_{10}&= \varepsilon\big[c_0 \ket{000} +(c_0+c_1)  \ket{100} + 
    (c_0+c_1) \ket{110}+ c_1\ket{111}\big]+O(\varepsilon^2)\ ;\\
v_{11}&=  \ket{111}+O(\varepsilon)\ .
\end{align*}
If $b_0+b_1\ne0$ and $c_0+c_1\ne0$ (which happens almost
surely\footnote{Note that whenever we say something happens almost surely
means not only that it happens with probability 1 but also that the set of
perturbations which may not satisfy the statement form a closed algebraic
variety of dimension strictly lower than the set of all possible
perturbations.}), this can be transformed into an alternative set
spanning $S_3(\veps)$,
\begin{equation}
\label{eq:ghz-parent-set}
\big\{\ket{000} + O(\varepsilon),\ket{001} + \ket{011}+O(\varepsilon),
\ket{100} +  \ket{110} + O(\varepsilon),\ket{111}+O(\varepsilon)\big\}\ .
\end{equation}
The parent Hamiltonian of the perturbed MPS $\ket{M(C_{\varepsilon,P})}$ is
thus 
$H_{P,\varepsilon}=\sum h_{P,\varepsilon}$, with each $h_{P,\veps}$ acting
locally on three consecutive sites, and projecting onto $S_3(\veps)^\perp$.

In order to obtain the uncle Hamiltonian we finally need to take the limit
$\veps\rightarrow0$. Then, the four states in
Eq.~(\ref{eq:ghz-parent-set}) become orthogonal, and the family
$h_{P,\varepsilon}$ converges to the projection onto the orthogonal
complement of 
\[
\spanned\{\ket{000},\ket{0{+}1},\ket{1{+}0},\ket{111}\}.
\]
Here, $\ket{0\!+\!1}\equiv\ket0\ket+\ket1$, with
$\ket+=(\ket0+\ket1)/\sqrt2$.  Thus, the uncle Hamiltonian has local terms
\begin{equation} \label{eq:ghz-uncle}
h'_P=\I-\big[\ket{000}\bra{000}+\ket{111}\bra{111}+\ket{0{+}1}\bra{0{+}1}
    +\ket{1{+}0}\bra{1{+}0}\big]
\end{equation}
Note that this limit does not depend on the perturbation $P$ (as long as
$b_0+b_1\neq 0$ and $c_0+c_1\neq 0$), and will be called $h'$ or
$h'_{i-1,i,i+1}$ in the following.  The parent Hamiltonian $W'$ is
obtained as the sum $H'=\sum h'\equiv\sum_i h'_{i-1,i,i+1}$.

\subsection{Ground space of the uncle Hamiltonian}

What is the ground space of the uncle Hamiltonian? Since
\[
\ker(h')=\spann \{\ket{000},\ket{111},\ket{0+1},\ket{1+0}\}
\supset
\spann \{\ket{000},\ket{111}\}=\ker h\ ,
\]
the ground states $\ket{0\dots0}$ and $\ket{1\dots1}$ of the GHZ parent
Hamiltonian are also ground states of the uncle; in particular, the uncle
is frustration free. However, $h'$ allows for additional ground states.
Indeed, if we consider the ground state space of $h'$ acting on $m$
consecutive sites (with open boundaries), the ground space is
\begin{equation}
\spanned \big\{ \ket{0\dots0}, \ket{1\dots 1}, \sum
\ket{0\cdots 01\cdots 1}, \sum\ket{1\cdots 10\cdots
0}\big\}\subseteq  (\C^2)^{\otimes m} ,  
\end{equation}
where the sums run over all positions of the ``boundary wall'' $01$ and
$10$, respectively. Yet, when closing the boundaries,  the additional
states $\sum\ket{0\cdots 01\cdots 1}$ and $\sum\ket{1\cdots 10\cdots 0}$
stop being in the intersection of the kernels, and the ground space of the
uncle Hamiltonian coincides with the ground space of the parent
Hamiltonian. Intuitively, with periodic boundary conditions the boundary
walls need to come in pairs, and it is impossible to give both of them
momentum zero as they meet which is not in the ground space of $h'$.

\subsection{Spectrum of the uncle Hamiltonian}
Let us now show that the uncle Hamiltonian of the GHZ state is gapless.
To this end, we consider the unnormalized states
\begin{equation}\label{lowenergystatesGHZfinitechain}
\ket{\phi_{N}}=\sum_{\substack{-N \le i \le -1\\1 \le j \le N}}  
\ket{\phi_{i,j}}
\end{equation}
with 
\[
\ket{\phi_{i,j}}=\ket{0^{-N-1}0\cdots 0^i1\cdots 10^j\cdots 0^{N+1}}
\]
on a chain of length $2N+3$, where the superscripts indicate the position
of the corresponding qubit.  (This notation will be used throughout this
work.) These states are orthogonal to the ground space,  and
$\braket{\phi_N}{\phi_N}=N^2$.  Further, $\ket{\phi_N}$ is \emph{almost} a
zero momentum state of the two boundaries: It is a ground state of all
terms in $H'$ except $h_{-1,0,1}$, $h_{N,N+1,-N-1}$, and
$h_{N+1,-N-1,-N}$, and by counting the violating configurations, we find
that $\bra{\phi_N}H' \ket{\phi_N}=O(N)$.  Hence, for the energies of these
states we have
\begin{equation}
\frac{\bra{\phi_N}H' \ket{\phi_N}}{\braket{\phi_N}{\phi_N}}=O(1/N).
\end{equation}
This implies that (on a chain of length $2N+3$)
$H'$ has at least one eigenvalue $\lambda_N\le
\frac{\bra{\varphi_N}H' \ket{\varphi_N}}{\braket{\varphi_N}{\varphi_N}}=
O(1/N)$, i.e., the  family  of uncle Hamiltonians is gapless. 

Does $H'$ have a continuous spectrum? One idea to prove so would be to to
construct momentum eigenstates with an energy scaling like
$\Theta(k^2/N^2)$ ($\Theta$ denotes the exact scaling rather than an upper
bound).  To this end, we give the sum in
Eq.~(\ref{lowenergystatesGHZfinitechain}) a momentum,
\begin{equation}\label{eq:GHZ-momentumstates}
\ket{\phi_{N,k}}=\sum_{\substack{-N< m<-1\\1< n < N}}  
e^{2\pi i m k/N}\ket{\phi_{m,n}}\ .
\end{equation}
It is straightforward to see that in addition to the $O(N)$ contribution from
before, the $N-1$ terms $h{-N-1,-N,-N+1},\dots,h_{-2,-1,0}$ all give a
contribution $\Theta(k^2N)$, resulting in an energy
$\frac{\bra{\varphi_N}H' \ket{\varphi_N}}{\braket{\varphi_N}{\varphi_N}}=
\Theta(k^2/N^2)$.  

Unfortunately, the existence of states with energy $\Theta(k^2/N^2)$ does
not allow to conclude that the spectrum of $H'$ is dense: The existence
of a state with energy $E$ only implies the existence of eigenvalues
$\lambda_1\le E$ and $\lambda_2\ge E$, but tells us nothing about their
exact value. (The reason this worked for the gapless excitations was that
$H'\ge0$, and that $\ket{\phi_N}$ was orthogonal to the ground space.)
It is, however, indeed possible to show that $H'$ has a continuous
spectrum.  For the GHZ example, this can be done by mapping it to the XY
model as discussed in Sec.~\ref{sec:GHZ-and-XY}, and in Sec.~\ref{sec:uncle-spectrum}, we give a
proof that uncle Hamiltonians of arbitrary MPS have a continuous spectrum
which works directly in the thermodynamic limit.

\subsection{Gapless uncles for unique ground
states\label{sec:uncle-for-000}}

Can we obtain uncle Hamiltonians with similar properties in the case of
MPS which are unique ground states? Lemma~\ref{lemma:span-continuity}
tells us that this cannot happen as long as the MPS tensors are injective,
which is always the case as long as such an MPS is in its standard form:
In that case, the uncle Hamiltonian is equal to the parent Hamiltonian.
However, as we will demonstrate in the following, interesting uncle
Hamiltonians can be obtained by choosing a different MPS representation.

Consider a qubit chain $(\mathbb C^2)^{\otimes N}$, and a state
$\ket{M(A)}=\ket{0\dots0}$.  Clearly, this is a unique ground state of a
gapped local Hamiltonian, with standard MPS representation $A_0=(1)$,
$A_1=(0)$.  However, we can write the same state with bond dimension $2$
and 
\[
A_0=\left(\begin{matrix} 1 & 0 \\ 0 & 1\end{matrix}\right)\ 
\mbox{\ and\ } 
A_1=\left(\begin{matrix} 0 & 0 \\ 0 & 0\end{matrix}\right)\ 
\]
We can now perturb $A$ with a perturbation $P$ as in
Eq.~\eqref{eq:P-GHZ}, with $a_i=d_i$ and $b_i=c_i$, $i=0,1$. A calculation similar to the one for the GHZ state
shows that in the limit of a vanishing perturbation, the ground space on
three sites is $S_3=\spann\{\ket{000},\ket{W}\}$, where
$\ket{W}=(\ket{001}+\ket{010}+\ket{001})/\sqrt{3}$ (as long as $b_1\ne0$
or $c_1\ne0$); the uncle Hamiltonian $h'$ is the projector onto
$S_3^\perp$,
\begin{equation}
\label{eq:000-uncle-3sites}
h'=\I - \big[\ket{000}\bra{000}+\ket{W}\bra{W}\big]\ .
\end{equation}
For an open chain of length $n$, $H'=\sum h'$ has the two
ground states
\begin{align*}
\ket{0_n}&=\ket{00\cdots 0} \ \ \mathrm{   and} 
\\
 \ket{\mathrm{W}_n}&=\ket{100\cdots 0}+
    \ket{010\cdots0}+\ldots+\ket{000\cdots 01}\ .
\end{align*}
Different from the GHZ case, the extra state $\ket{W_N}$ does not
disappear from the kernel when closing the boundaries -- the uncle Hamiltonian on a chain
with periodic boundaries has a two-dimensional ground space
$\spann\{\ket{0_N},\ket{W_N}\}$. Note, however, that the thermodynamic
limit of $\ket{0_N}$ and $\ket{W_N}$ is the same, and thus, the ground
space collapses to the original one in the thermodynamic limit.

Again, we can construct gapless excitations by considering the states
\begin{equation}
\ket{\phi_n}=\sum_{i\ne j} \ket{0\cdots 01^i00\cdots 01^j0\cdots 0}\ .
\end{equation}
As before, they are orthogonal to the ground space, and their energy is 
$O(1/N)$. Alternatively, we could have also choosen a W state with
momentum, 
\[
 \ket{\varphi_{N,k}}=\sum_j e^\frac{2\pi i kj}{N}
\ket{0\cdots 001^j00\cdots 0}, \ k\ne 0\ ,
\]
which are also orthogonal to the ground space, and have energy $O(k^2/N^2)$.

Again, the spectrum is dense in the thermodynamic limit.  This can once
more be verified by a mapping to the XY model or, directly in the
thermodynamic limit, using the methods described in Sec.~\ref{sec:uncle-spectrum}.

\subsection{Relation to the XY model\label{sec:GHZ-and-XY}}

Both the uncle Hamiltonian for the GHZ state and for the $\ket{0\dots0}$
state are closely related to the XY model (or, equivalently, to
non-interacting fermions), which can be used to immediately infer that
they are gapless models with continuous spectra. Let us first consider the
uncle Hamiltonian of the GHZ state, Eq.~(\ref{eq:ghz-uncle}): It can be
rewritten as 
\begin{equation}
    \label{eq:xy-model}
h'=-\tfrac14\big[\I\otimes Z\otimes Z + 
    Z\otimes Z\otimes \I + \I\otimes X\otimes \I -Z
\otimes X\otimes Z\big] + \tfrac12\I\otimes\I\otimes\I\ .
\end{equation}
This is exactly the Hamiltonian discussed in Eq. (11) from \cite{wolf:phasetransitions-mps} at
$g=0$\footnote{In fact, the construction in Eq.~(10) of \cite{wolf:phasetransitions-mps} is, up
to a gauge transformation, equivalent to the uncle construction, with
$\varepsilon=\sqrt{g}$.}. It can be solved either by transforming it to
non-interacting fermions, or by a duality transformation to the XY model
\cite{peschel:entanglemententropy-dualitymap} \footnote{The partial isometry $T:\ket{i_1,\dots, i_N}\mapsto \ket{i_1+i_2,\dots,i_N+i_1}$ from the eigenspace of $X^{\otimes N}$ associated to the value $1$  to the even $Z$ parity space induces the duality mapping $X_i\mapsto T\circ X_i\circ T^{-1}=X_{i-1}\otimes X_{i}$ and $Z_i \otimes
Z_{i+1}\mapsto T\circ(Z_i \otimes
Z_{i+1})\circ T^{-1}= Z_i$.}.  The resulting
Hamiltonian is 
\[
H_{XY}=-\tfrac14\sum_{i}\big[\, 
    X_i\otimes X_{i+1}+Y_i\otimes Y_{i+1} +2 Z_i\,\big]
    +\mathrm{const.}
\]
Indeed, this point in the XY model, which can be solved exactly by mapping
it to non-interacting fermions~\cite{lieb:XY}, is known to be gapless with
a continuous spectrum.

Let us now turn to the uncle Hamiltonian (\ref{eq:000-uncle-3sites}) for
the $\ket{0\dots0}$ state.  Let us first replace the uncle Hamiltonian
with a simpler one with the same spectral properties. Namely, let
\begin{equation}
\label{eq:000-uncle-2sites}
\tilde h'=\I-\ket{00}\bra{00}-\ket{\Phi^+}\bra{\Phi^+}\ ,
\end{equation}
with $\ket{\Phi^+}=(\ket{01}+\ket{10})/\sqrt{2}$. We have that
\[
\tfrac12 h'\le 
\tfrac12 (\tilde h'_{12}+\tilde h'_{23}) \le  h'\ ,
\]
which implies that for any finite chain, the ordered eigenvalues
$\lambda_i$ of $H'=\sum h'$ and $\tilde \lambda_i$ of $\tilde H'=\sum
\tilde h'$ are related by
\[
\tfrac12 \lambda_i\le \tilde \lambda_i \le \lambda_i\ .
\]
I.e., if we want to determine essential spectral properties of $H'$, such
as whether it has a continuous spectrum, we can equally well study $\tilde
H'$. Since $\tilde h'$ can be rewritten as
\[
\tilde h' =-\tfrac14\big[X\otimes X + Y\otimes Y + Z\otimes \I +
    \I\otimes Z\big] + \tfrac12 \I\otimes \I\ ,
\]
this yet again gives rise to the same point of the XY model,
Eq.~(\ref{eq:xy-model}), proving that the
Hamiltonians~Eqs.~(\ref{eq:000-uncle-3sites}) and
(\ref{eq:000-uncle-2sites}) have a continuous spectrum.

\section{ \label{sec:main}
Properties of the uncle Hamiltonian}

In this section, we will see that the observations we made for the GHZ
uncle Hamiltonian generalize to uncle Hamiltonians of arbitrary MPS with
degenerate ground states (under some generic conditions): Their ground
state space is equal to the ground state space of the parent Hamiltonian,
they are gapless, and they have a continuous spectrum. This section will
also contain the proofs which have been omitted for the special case of
the GHZ state.

For simplicity, we will focus here on the case where the MPS tensor $C_i$
in its standard form, Theorem~\ref{thm:std-form-inj}, has two blocks,
$C_i=A_i\oplus B_i$, but the same procedure can be followed in the general
case: The results are completely analogous in case of multiple different
blocks, but there are some differerences if there are blocks with a
multiplicity larger than one.  We will comment on this particular case in
Section~\ref{sec:injectiveMPS}.

Thus, in this section we will be dealing with an MPS $\ket{M(C)}$,
\[
C=\left( \begin{array}{cc} A & 0 \\
0 & B
\end{array} \right)\ ,
\]
where both $A$ and $B$ are injective.  We will choose $A$ and $B$ in the
normal form of Theorem~\ref{thm:std-form-inj}.
The \emph{parent} Hamiltonian of this MPS consists of local projectors
$\Pi[C\cont C]$ with kernels 
\[
\spann\{C\cont C\} = \spann\{A\cont A\}+\spann\{B\cont B\}\ ,
\]
where the two-dimensional ground state space is
spanned by $\ket{M(A)}$ and $\ket{M(B)}$.

We will need two basic lemmas, which we show next.

\begin{lemma}[Consequences of injectivity]
\label{lemma:inj-consequences}
The following three properties are equivalent, 
$1\Leftrightarrow 2\Leftrightarrow 3$:
\begin{enumerate}
\item $A$ is injective.
\item For any $X$, there exists an $\ket a$ such that 
\begin{equation}
    \label{eq:injective-aA-X}
\sum_i \braket{a}{i}A_i 
\equiv\ 
\raisebox{0.6em}{\figgg{0.18}{Aa}}
\ = \ 
\figgg{0.18}{X}
\ \equiv  X\ .
\end{equation}
\item There exists a tensor $A^{-1}$ such that
\begin{equation}
    \label{eq:injective-Ainv}
\sum_i (A_i)_{\alpha\beta}
    ( (A^{-1})_i)_{\alpha'\beta'} \equiv\ 
\figgg{0.18}{invA} \ = \
\figgg{0.18}{invId} \ = \delta_{\alpha\alpha'}\delta_{\beta\beta'}
\end{equation}
(``left inverse to $A$'').
\end{enumerate}
Also, the following three are equivalent, 
$4\Leftrightarrow 5\Leftrightarrow 6$:
\begin{enumerate}
\setcounter{enumi}{3}
\item $\left(\begin{smallmatrix}A & R\\L & B\end{smallmatrix}\right)$
    is injective
\item For any $X$ of the dimensions of $A$, there exists an $\ket{a}$ such
that (\ref{eq:injective-aA-X}) holds, and additionally
\[
\raisebox{0.6em}{\figgg{0.18}{Ba}}
\ \equiv\ 
\raisebox{0.6em}{\figgg{0.18}{Ra}}
\ \equiv\ 
\raisebox{0.6em}{\figgg{0.18}{La}}
\ \equiv 0\ ,
\]
and the corresponding statement holds for the other three blocks.
\item There exists a tensor $A^{-1}$ such that~(\ref{eq:injective-Ainv})
holds, and additionally 
\[
\figgg{0.18}{invAB}
\ \equiv\ 
\figgg{0.18}{invAR}
\ \equiv\ 
\figgg{0.18}{invAL}
\ \equiv 0\ ,
\]
and the corresponding statement holds for the other three blocks.
\end{enumerate}
\end{lemma}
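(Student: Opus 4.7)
The plan is to translate each of the statements (1)--(3) into an equivalent assertion about the linear map
\[
\Phi_A:\C^d\longrightarrow \mathcal M_D,\qquad \ket a\longmapsto \sum_i \braket{a}{i}\,A_i,
\]
and then reduce the whole lemma to two standard facts from finite-dimensional linear algebra: the image of a linear map equals the linear span of the images of a basis; and a linear map between finite-dimensional spaces is surjective if and only if it admits a right inverse.

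For the first chain, statement (1) means, by Theorem \ref{thm:std-form-inj}, that $\mathrm{span}_i A_i=\mathcal M_D$, i.e.\ $\mathrm{Im}(\Phi_A)=\mathcal M_D$, which is exactly the surjectivity of $\Phi_A$ -- and this is precisely what statement (2) asserts. Statement (3) encodes graphically that the tensor $A^{-1}$, viewed as the linear map $\Psi:\mathcal M_D\to\C^d$, $X\mapsto \sum_{i,\alpha',\beta'}((A^{-1})_i)_{\alpha'\beta'}X_{\alpha'\beta'}\ket i$, is a right inverse of $\Phi_A$: evaluating $\Phi_A\circ\Psi$ on the elementary matrix $E_{\alpha'\beta'}$ produces exactly the left-hand side of (\ref{eq:injective-Ainv}), which the equation equates to $\delta_{\alpha\alpha'}\delta_{\beta\beta'}$, i.e.\ to $E_{\alpha'\beta'}$. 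Hence (2)$\Leftrightarrow$(3) is the surjectivity-vs.-right-inverse dichotomy, and the first chain closes.

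For the second chain, apply exactly the same reasoning to the larger tensor $C$ with components $C_i=\left(\begin{smallmatrix}A_i & R_i\\ L_i & B_i\end{smallmatrix}\right)$ and its associated map $\Phi_C:\C^d\to\mathcal M_{D_A+D_B}$. Statement (4) is the surjectivity of $\Phi_C$. Statement (5), together with its three symmetric partners, says precisely that every matrix supported on a single block -- of the form $\left(\begin{smallmatrix}X&0\\0&0\end{smallmatrix}\right)$, $\left(\begin{smallmatrix}0&0\\0&Y\end{smallmatrix}\right)$, $\left(\begin{smallmatrix}0&R'\\0&0\end{smallmatrix}\right)$, or $\left(\begin{smallmatrix}0&0\\L'&0\end{smallmatrix}\right)$ -- lies in $\mathrm{Im}(\Phi_C)$; since $\mathcal M_{D_A+D_B}$ is the direct sum of these four block subspaces, this is equivalent to surjectivity of $\Phi_C$. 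Similarly, statement (6) and its three partners exhibit right inverses of $\Phi_C$ onto each of the four block subspaces, which glue (via the decomposition of the identity into the four block projections) to a right inverse of $\Phi_C$ itself, and conversely any right inverse of $\Phi_C$ followed by the block projections yields the four tensors required in (6).

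The only real obstacle, if one can call it that, is notational: one must carefully match the graphical equations in (\ref{eq:injective-aA-X})--(\ref{eq:injective-Ainv}) and their block analogues in (5), (6) with the corresponding linear-algebraic statements about $\Phi_A$, $\Phi_C$ and their right inverses. Once this dictionary is in place, both equivalences reduce to the elementary linear algebra recalled above.
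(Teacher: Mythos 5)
Your proposal is correct and follows essentially the same route as the paper's proof: the paper's explicit constructions (choosing $\ket{a_{\alpha\beta}}$ realizing the elementary matrices $\ket{\alpha}\bra{\beta}$ to assemble $A^{-1}$, recovering $\ket{a}$ from $A^{-1}$ via $\braket{a}{i}=\tr[(A^{-1})_i X^T]$, and block-decomposing an arbitrary matrix for the second chain) are precisely the concrete instantiations of your two linear-algebra facts about the maps $\Phi_A$ and $\Phi_C$. No gap.
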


\begin{proof}
$1\Leftrightarrow 2$ since by definition, injectivity means that the $A_i$
span the whole matrix algebra. $2\Rightarrow 3$ by defining
$\ket{a_{\alpha\beta}}$ such that\ in (\ref{eq:injective-aA-X}),
$X=\ket{\alpha}\bra{\beta}$, and choosing $((A^{-1})_i)_{\alpha\beta} =
\braket{a_{\alpha\beta}}{i}$, and $3\Rightarrow 2$ by setting
$\braket{a}{i}=\tr[(A^{-1})_i X^T]$.
$4\Rightarrow 5$ by considering equivalence between 1 and 2 and the matrix \[\tilde{X}=\left( \begin{array}{cc}
X & 0\\
0 & 0 
\end{array} \right), \]
or the corresponding matrices for the other blocks.

$5\Rightarrow 4$ since any matrix can be block-decomposed as 
\[\left( \begin{array}{cc}
X & Z\\
W & Y 
\end{array} \right),\] and for these blocks there exist vectors $\ket{a_X}, \ket{b_Y}, \ket{r_Z}, \ket{l_W}$ that give rise to $X$, $Y$, $Z$ and $W$ when applied to $A$, $B$, $R$ and $L$ respectively, and 0 when applied to the other blocks. Thus we can consider the sum $\ket{a_X}+\ket{b_Y}+\ket{r_Z}+\ket{l_W}$ to satisfy condition 2, and therefore injectivity of tensor in 4.

$5\Rightarrow 6$ by defining
$\ket{a_{\alpha\beta}}$ such that in (\ref{eq:injective-aA-X})
$X=\ket{\alpha}\bra{\beta}$ if both indices correspond to the $A$ block or 0 otherwise, and choosing $((A^{-1})_i)_{\alpha\beta} =
\braket{a_{\alpha\beta}}{i}$,  and $6\Rightarrow 5$ by setting
$\braket{a}{i}=\tr[(A^{-1})_i \tilde{X}^T]$, with $\tilde{X}=\left( \begin{array}{cc}
X & 0\\
0 & 0 
\end{array} \right)$.
\hspace*{\fill}\qed
\end{proof}

\begin{lemma}[Gauge transformations for span]
\label{lemma:span-gaugeinvariance}
Let $\mc L:\mc M_D\rightarrow \mc M_D$ be an invertible map on $D\times D$
matrices.  Then, $\spann\{T\}=\spann\{\mc L(T)\}$, and equally
$\Pi[T]=\Pi[\mc L(T)]$, where the natural action of $\mc L$ on three-index tensors $(T_i)_{\alpha\beta}$ is given by
$[\mc L(T)]_i=\mc L(T_i)$.
\end{lemma}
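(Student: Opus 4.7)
The plan is to unfold both sides of the claimed identity using the definition of $\spann\{T\}$ and move $\mathcal{L}$ from the tensor onto the parameter matrix $X$ via the Hilbert--Schmidt duality. Concretely, I would introduce the adjoint $\mathcal{L}^*:\mathcal{M}_D\to\mathcal{M}_D$ characterized by $\tr[\mathcal{L}(Y)X]=\tr[Y\,\mathcal{L}^*(X)]$ for all $X,Y\in\mathcal{M}_D$. This is a standard object: since $\mathcal{L}$ is a linear map on the finite-dimensional inner product space $(\mathcal{M}_D,\langle\cdot,\cdot\rangle_{\mathrm{HS}})$, $\mathcal{L}^*$ exists and is invertible whenever $\mathcal{L}$ is, because $(\mathcal{L}^*)^{-1}=(\mathcal{L}^{-1})^*$.

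Then for any fixed $X\in\mathcal{M}_D$ we compute
\[
\sum_i\tr[\mathcal{L}(T_i)\,X]\,\ket{i}
\;=\;\sum_i\tr[T_i\,\mathcal{L}^*(X)]\,\ket{i},
\]
which, as $X$ ranges over $\mathcal{M}_D$, produces exactly the same family of vectors as $\sum_i\tr[T_i Y]\,\ket{i}$ ranging over $Y\in\mathcal{M}_D$, because $\mathcal{L}^*$ is a bijection of $\mathcal{M}_D$. Taking linear spans on both sides gives $\spann\{\mathcal{L}(T)\}=\spann\{T\}$. The statement about projectors is then immediate: if two subspaces of the physical Hilbert space coincide, so do their orthogonal complements, and hence so do the orthogonal projectors onto those complements, i.e.\ $\Pi[\mathcal{L}(T)]=\Pi[T]$.

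There is really no obstacle here; the only subtle point is making the role of the Hilbert--Schmidt adjoint explicit, since the definition of $\spann\{T\}$ uses the pairing $X\mapsto\tr[T_iX]$ rather than $X\mapsto\tr[T_iX^*]$. One should just note that this pairing is a nondegenerate bilinear form on $\mathcal{M}_D$, so the transpose $\mathcal{L}^\top$ defined by $\tr[\mathcal{L}(Y)X]=\tr[Y\,\mathcal{L}^\top(X)]$ is invertible iff $\mathcal{L}$ is, which is all that the argument actually uses.
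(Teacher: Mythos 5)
Your argument is correct; the paper actually states this lemma without proof, and your computation (moving the invertible map $\mc L$ onto the parameter matrix $X$ via the trace pairing, so that the defining family of vectors for $\spann\{\mc L(T)\}$ is just a reparametrization of that for $\spann\{T\}$) is precisely the intended one-line justification. Your closing remark correctly identifies the only subtlety — that what is needed is the transpose with respect to the bilinear form $(Y,X)\mapsto\tr[YX]$ rather than the Hilbert--Schmidt adjoint, and that this transpose is invertible iff $\mc L$ is.
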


\begin{proof}
This follows from $\trace[\mc L(T_i)X]=\trace[T_i(\mc L^*(X^*))^*]$, for every $X\in \mc M_D$.
\end{proof}

\subsection{Form of the uncle Hamiltonian}

We will now determine the general form of the uncle Hamiltonian
(including whether the limit in its definition exists).

\begin{theorem}\label{thm:unclehamiltonian}
Let $\ket{M(C)}$, be an MPS with block-injective $C=A\oplus B$, 
 and let
\[
P=\left( \begin{array}{cc}
P^A & R \\
L & P^B
\end{array} \right)
\]
be an arbitrary tensor, such that the ``uncle tensor''
\begin{equation}
\label{eq:uncletensor-2site}
U=\left( \begin{array}{cc}
    A\cont A & A \cont  R+R\cont B \\
    B\cont L+L\cont A & B\cont B
    \end{array} \right)\ .
\end{equation}
is injective.  Then, the uncle Hamiltonian induced by $P$ exists and is a
sum of local terms $h_P' = \Pi[U]$.
\end{theorem}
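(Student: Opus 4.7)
The plan is to absorb the $\veps$-dependence of the off-diagonal blocks of $C_\veps\cont C_\veps$ into an invertible gauge transformation on the virtual matrix algebra, so that the gauge-transformed family extends continuously down to $\veps=0$ with limit the uncle tensor $U$. Gauge invariance of $\Pi$ (Lemma~\ref{lemma:span-gaugeinvariance}) together with continuity of $\Pi$ at injective tensors (Lemma~\ref{lemma:span-continuity}) will then conclude the argument.

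Concretely, expanding $C_\veps\cont C_\veps$ in the $A$/$B$ block decomposition of $C=A\oplus B$ gives
\[
(C_\veps\cont C_\veps)_{ij} =
\begin{pmatrix}
(A\cont A)_{ij}+O(\veps) & \veps(A\cont R+R\cont B)_{ij}+O(\veps^2) \\
\veps(L\cont A+B\cont L)_{ij}+O(\veps^2) & (B\cont B)_{ij}+O(\veps)
\end{pmatrix},
\]
so that the off-diagonal blocks carry a uniform factor of $\veps$ while the diagonal blocks do not. I would therefore define $\mc L_\veps:\mc M_{D_A+D_B}\to\mc M_{D_A+D_B}$ to be the linear map that fixes the $AA$ and $BB$ blocks and multiplies the $AB$ and $BA$ blocks by $\veps^{-1}$. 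This map is invertible for every $\veps>0$, and a direct check gives
\[
T(\veps):=\mc L_\veps(C_\veps\cont C_\veps) = U + O(\veps),
\]
extending continuously down to $\veps=0$ with $T(0)=U$.

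Lemma~\ref{lemma:span-gaugeinvariance} then yields $\Pi[C_\veps\cont C_\veps]=\Pi[T(\veps)]$ for every $\veps>0$. Since $T$ is continuous at $0$ with $T(0)=U$ injective by hypothesis, Lemma~\ref{lemma:span-continuity} ensures that $\veps\mapsto\Pi[T(\veps)]$ is continuous at $0$, and hence
\[
\lim_{\veps\to 0}\Pi[C_\veps\cont C_\veps]=\Pi[U],
\]
which is exactly the statement that the limit in the definition of $h'_P$ exists and equals $\Pi[U]$. As a by-product, the same continuity argument forces $T(\veps)$, and hence $C_\veps\cont C_\veps$ via gauge invariance, to be injective for all sufficiently small $\veps>0$; so $\Pi[C_\veps\cont C_\veps]$ genuinely coincides with a local term of the parent Hamiltonian of $\ket{M(C_\veps)}$ for those $\veps$.

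The main conceptual step is spotting the right gauge transformation; once $\mc L_\veps$ is written down, the block expansion is routine algebra and the rest of the proof collapses to a mechanical application of the two continuity lemmas of Section~\ref{sec:defs}. No spectral or thermodynamic machinery is needed at this stage, since the claim is a purely local one about the convergence of a single two-site term.
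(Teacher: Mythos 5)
Your proof is correct and follows essentially the same route as the paper: expand $C_\veps\cont C_\veps$ in blocks, apply the invertible gauge map $\mc L_\veps$ rescaling the off-diagonal blocks by $\veps^{-1}$, and conclude via Lemma~\ref{lemma:span-gaugeinvariance} and the continuity of $\Pi$ at the injective tensor $U$ from Lemma~\ref{lemma:span-continuity}. The only addition is your closing remark on injectivity of $C_\veps\cont C_\veps$ for small $\veps$, which is a harmless by-product.
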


\begin{proof}
Consider a perturbation $C_\varepsilon = C + \varepsilon P$ of the MPS
$\ket{M(C)}$.  We have that
\begin{equation}
C_\varepsilon \cont C_\varepsilon=
    \left( \begin{array}{cc}
	A\cont A + O(\varepsilon)  & 
	\varepsilon (A\cont R+R\cont B) + O(\varepsilon ^2) \\
	\varepsilon(B\cont L+L\cont A) + O(\varepsilon ^2) & 
	B\cont B + O(\varepsilon)
    \end{array} \right)\ ;
\end{equation}
clearly, for $\veps\ne0$ the map $\mc L_\veps$ which multiplies the
off-diagonal blocks by $1/\veps$ is invertible and thus (following 
Lemma~\ref{lemma:span-gaugeinvariance})
\[
\Pi[C_\veps\cont C_\veps] =
\Pi[\mc L_\veps(C_\veps\cont C_\veps)]
=\Pi[U+O(\veps)]\ .
\]
Following Lemma~\ref{lemma:span-continuity}, the limit $\lim_{\varepsilon \to 0}
\Pi[U+O(\veps)]$ exists whenever $U$ is injective, and equals $\Pi[U]$.
\hspace*{\fill}\qed
\end{proof}

The required injectivity of the tensor in Eq.~(\ref{eq:uncletensor-2site})
follows in particular from the following condition on the perturbation.
\begin{definition}[Injective perturbation]
A perturbation $P$ of an MPS $\ket{M(C)}$ (in the notation of
Theorem~\ref{thm:gs-subspace-parent}) is called \emph{injective} if
\begin{equation}
\label{eq:inj-condition-tensor}
\left(\begin{matrix}A&R\\L&B\end{matrix}\right)
\end{equation}
is an injective tensor.
\end{definition}

\begin{lemma}
\label{lemma:inj-perturbation}
If a perturbation $P$ is injective, then the resulting ``uncle tensor''
$U$, Eq.~(\ref{eq:uncletensor-2site}), is injective.
\end{lemma}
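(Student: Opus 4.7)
The plan is to apply Lemma~\ref{lemma:inj-consequences} twice. First, the hypothesis that $P$ is an injective perturbation is exactly property~4 of that lemma applied to $\tilde{C}=\left(\begin{smallmatrix}A&R\\L&B\end{smallmatrix}\right)$, so the implication $4\Rightarrow 6$ yields four single-site left-inverse tensors, which I denote $A^{-1}$, $B^{-1}$, $R^{-1}$, $L^{-1}$: each one contracts with its namesake block to give $\delta_{\alpha\alpha'}\delta_{\beta\beta'}$ and with the other three blocks of $\tilde{C}$ to give zero. Second, the uncle tensor $U$ in Eq.~(\ref{eq:uncletensor-2site}) has exactly the same block structure as $\tilde{C}$, so to conclude its injectivity via the implication $6\Rightarrow 4$ it suffices to exhibit, for each of its four blocks, a two-site left-inverse satisfying the orthogonality conditions of item~6.

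The natural candidates are tensor products of single-site inverses matched to the symbols appearing in the corresponding block, for example
\[
V^{AA}_{ij}\propto (A^{-1})_i (A^{-1})_j,\qquad V^{AB}_{ij}\propto (A^{-1})_i (R^{-1})_j,
\]
and analogously $V^{BB}$ (built from $B^{-1}$) and $V^{BA}$ (built from $L^{-1}$ and $A^{-1}$); other valid choices exist, e.g.\ $(R^{-1})_i(B^{-1})_j$ for the AB-block. The verification factors over the two sites: contracting $V^{XY}$ against any monomial $X'\cont Y'$ appearing in a block of $U$ splits into a product of two independent single-site sums, each of which is either a Kronecker delta pair (when the symbol matches its inverse) or zero (when it does not, by the orthogonality clause of item~6, e.g.\ $\sum_i R_i\,A^{-1}_i=0$). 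Summing over the intermediate virtual index collapses a product of two matched Kronecker deltas into a nonzero scalar multiple of $\delta_{\alpha\alpha'}\delta_{\beta\beta'}$, which is absorbed into the proportionality constant; all unmatched terms vanish.

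Hence each $V^{XY}$ is a left-inverse for the $XY$-block of $U$ in the sense of item~6, and by $6\Rightarrow 4$ of Lemma~\ref{lemma:inj-consequences} the tensor $U$ is injective. I do not anticipate a serious obstacle here; the only care needed is bookkeeping the dimension of the intermediate virtual index (which runs over the $A$- or $B$-block depending on the pair $(X',Y')$), but this only affects the overall scalar prefactor and not the structural argument.
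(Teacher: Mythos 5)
Your proof is correct, but it takes a slightly different route through Lemma~\ref{lemma:inj-consequences} than the paper does. The paper works with item~5: given $X$, it picks single-site vectors $\ket a$ (producing $X$ on the $A$-block and $0$ elsewhere) and $\ket{a'}$ (producing $\I$ on the $A$-block and $0$ elsewhere), and checks that $\ket a\otimes\ket{a'}$ applied to $U$ yields $\bigl(\begin{smallmatrix}X&0\\0&0\end{smallmatrix}\bigr)$, the cross terms vanishing because each monomial in an off-diagonal block of $U$ contains at least one factor annihilated by $\ket a$ or $\ket{a'}$; the other blocks are handled analogously. You instead go through item~6 and build explicit two-site left inverses as products of the single-site inverses $A^{-1},B^{-1},R^{-1},L^{-1}$. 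The verification is the same cancellation mechanism seen from the dual side, and your one delicate point — that $\sum_{\beta\beta'}\delta_{\beta\beta'}\delta_{\beta\beta'}$ over the intermediate virtual index produces a dimension-dependent scalar that must be normalized away — is correctly identified and harmless. The paper's choice of item~5 avoids that prefactor bookkeeping entirely; your choice of item~6 has the side benefit of producing the explicit block left-inverses for $U$ (and, iterated, for $k$-site blockings), which is exactly the form of the statement the appendix later invokes when proving the intersection property. Both arguments are complete and interchangeable.
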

\begin{proof}

Let us consider condition 5 from Lemma \ref{lemma:inj-consequences}, and any $X$ of the dimensions of $A\cont A$. Since the perturbation tensor is injective, there exist vectors $\ket{a}$ and $\ket{a'}$ such that 
\[
\raisebox{0.6em}{\figgg{0.18}{Aa}}
\ = \ 
\figgg{0.18}{X}
\ \equiv  X\ ,\  
\raisebox{0.6em}{\figgg{0.18}{Ba}}
\ \equiv\ 
\raisebox{0.6em}{\figgg{0.18}{Ra}}
\ \equiv\ 
\raisebox{0.6em}{\figgg{0.18}{La}}
\ \equiv 0,
\]
\[
\raisebox{0.6em}{\figgg{0.18}{Aap}}
\ = \I \ ,\ \mathrm{and}\   
\raisebox{0.6em}{\figgg{0.18}{Bap}}
\ \equiv\ 
\raisebox{0.6em}{\figgg{0.18}{Rap}}
\ \equiv\ 
\raisebox{0.6em}{\figgg{0.18}{Lap}}
\ \equiv 0\ .
\]

The product $\ket{a}\otimes\ket{a'}$ yields then for the uncle tensor
\[
\raisebox{0.6em}{\figgg{0.18}{Uaap}}\ = \ 
\left( \begin{array}{cc}
X & 0 \\
0 & 0
\end{array} \right).
\]

A similar reasoning can be followed for the rest of blocks, thus satisfying condition 5 from Lemma \ref{lemma:inj-consequences}.
\hspace*{\fill}\qed
\end{proof}

A perturbation $P$ is generically injective if $d\ge D^2$, and we will
assume injective perturbations in the following.  Note that unlike for the
$\ket{\mathrm{GHZ}}$ state, the uncle Hamiltonian does in general depend on the
perturbation (though only on its off-diagonal blocks $R$ and
$L$).\footnote{
If $D\le d< D^2$ and $U$ [Eq.~(\ref{eq:uncletensor-2site})]
is injective one can construct both parent and uncle Hamiltonians
from projectors onto the span of three consecutive sites -- this
is what we did for the GHZ example.  Then, the uncle 
Hamiltonian is the projector associated
to 
\[
\left( \begin{array}{c@{\quad}c}
A\cont A\cont A & A\cont A\cont R+A\cont R\cont B+R\cont B\cont B\\
B\cont B\cont L+B\cont L\cont A+L\cont A\cont A & B\cont B\cont B
\end{array} \right)\ .
\]
While we will restrict to injective perturbations for clarity, the same
steps can be followed assuming only injectivity of $U$.
}

\subsection{Ground space of the uncle Hamiltonian}

In the following, we study the ground state space of the uncle
Hamiltonian. Throughout, we will restrict to injective perturbations.
\begin{theorem}
\label{thm:gs-subspace-uncle}
Let $P$ be an injective perturbation of an MPS $\ket{M(C)}$, $C=A\oplus
B$.  Then, the ground space of the uncle Hamiltonian $H'_P$ is spanned by
$\ket{M(A)}$ and $\ket{M(B)}$, and thus equal to the ground space of the
parent Hamiltonian.
\end{theorem}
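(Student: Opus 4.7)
The inclusion $\supseteq$ is essentially free. Because $\spann\{C\cont C\}=\spann\{A\cont A\}\oplus\spann\{B\cont B\}$ is contained in $\spann\{U\}$, the local uncle projector $h'_P=\Pi[U]$ satisfies $h'_P\le h=\Pi[C\cont C]$ as operators. Hence $\ker H'_P\supseteq \ker H$, and by Theorem~\ref{thm:gs-subspace-parent} the right-hand side equals $\spann\{\ket{M(A)},\ket{M(B)}\}$. So both $\ket{M(A)}$ and $\ket{M(B)}$ are ground states of $H'_P$, and the Hamiltonian is frustration-free.

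For the reverse inclusion, the plan is to mimic the computation done explicitly for the GHZ state in Section~\ref{sec:example}, now in the general block-injective setting. First I would analyze the common kernel of the local terms $\Pi[U]$ on an \emph{open} chain. Since $U$ is injective by Lemma~\ref{lemma:inj-perturbation}, the standard intersection argument for parent Hamiltonians of injective MPS (see~\cite{perez-garcia:mps-reps,nachtergaele:spectralgap}) applied to the $2D$-dimensional virtual space of $U$ forces any $\ket\psi\in \ker H'_P$ to admit an MPS representation whose virtual indices carry a block label in $\{A,B\}$. The four blocks of $U$ in~(\ref{eq:uncletensor-2site}) control the transitions: the diagonal blocks $A\cont A$ and $B\cont B$ propagate ``within'' a sector, while the off-diagonal pieces $A\cont R+R\cont B$ and $B\cont L+L\cont A$ are the only allowed ``domain walls'' between the $A$- and $B$-sectors. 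Thus on an open chain the kernel is spanned by $\ket{M(A)}$, $\ket{M(B)}$, and by superpositions of configurations containing a finite number of such walls, placed at arbitrary positions.

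Closing the boundary then rules out every state other than the two pure configurations, exactly as for the GHZ example. On a periodic chain any domain-wall contribution must come in matched $A{\to}B$/$B{\to}A$ pairs, and the specific $+$-structure of the off-diagonal blocks of $U$ dictates that each wall appears in a fixed ``momentum-zero'' shape (the analogue of $\ket{0{+}1}$ and $\ket{1{+}0}$ in Section~\ref{sec:example}); the local constraint $\Pi[U]\ket\psi=0$ on the two sites where two such walls collide can then be checked to fail. Consequently only the $\ket{M(A)}$ and $\ket{M(B)}$ components survive, and $\ker H'_P=\spann\{\ket{M(A)},\ket{M(B)}\}$.

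The main obstacle is the open-chain characterization step: the uncle's local kernel strictly enlarges the parent's, so one cannot invoke the intersection property of~\cite{perez-garcia:mps-reps} as a black box but has to reprove a block-structured version of it that separately tracks the $A$- and $B$-sectors and the allowed transitions between them through $R$ and $L$. Once this is in place, the subsequent ``no paired walls with zero momentum'' argument is essentially a direct generalization of the explicit GHZ calculation, but identifying the right invariant substitutes for the $\ket{0{+}1}$ and $\ket{1{+}0}$ patterns in the general block-injective case is what does the real work.
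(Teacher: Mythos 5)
Your easy direction ($\ker H'_P\supseteq\ker H$, frustration freeness) is correct and is exactly how the paper starts. Your plan for the converse also has the right two-step architecture --- characterize the common kernel on an open chain, then close the boundary --- which is the paper's strategy (Lemma~\ref{lemma:intersection} and Lemma~\ref{lemma:closure}). However, there are two genuine gaps. First, your description of the open-chain kernel is wrong: it is \emph{not} spanned by configurations with ``a finite number of walls placed at arbitrary positions.'' The correct statement is that $\bigcap_i\ker h'_{i,i+1}$ on $n$ open sites equals $A_n+B_n+R_n+L_n$: besides the pure $A$- and $B$-sectors it contains \emph{at most one} domain wall, and only in the equal-weight (zero-momentum) superposition over all wall positions --- individual wall positions and multi-wall configurations are already excluded on the open chain. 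Relatedly, you cannot invoke ``the standard intersection argument applied to the virtual space of $U$'' as a black box: the uncle term $\Pi[U]$ acts on two \emph{physical} sites, i.e.\ a single blocked $U$-site, and consecutive terms overlap on one physical site, so this is not the parent-Hamiltonian setup for the $U$-chain. Indeed $\spann\{U\cont U\cdots\}$ contains multi-wall states (the $(1,1)$ block of $U\cont U$ already contains $(A\cont R+R\cont B)\cont(B\cont L+L\cont A)$), whereas the true kernel does not; so the black-box argument would give a strictly larger, incorrect space.

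Second, the two lemmas you defer as ``the real work'' are the entire content of the theorem, and they do not follow from the GHZ template by pattern matching. The paper proves the intersection property in Appendix~\ref{appendix:intersection-closure} by exploiting injectivity of the perturbation through condition 6 of Lemma~\ref{lemma:inj-consequences}: one applies the left inverses $R^{-1}$, $L^{-1}$, $A^{-1}$, $B^{-1}$ (each of which annihilates the other three blocks) at chosen sites, together with the distinguished vectors $\ket a$, $\ket b$, to extract the coefficient tensors $X,Y,Z,W$ from the two overlapping decompositions of $\ket\varphi$ and force the factorizations such as Eq.~(\ref{eq:Xprime-decomposition}); this is what pins down the ``single wall, zero momentum'' structure. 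The closure property is then proved by the same inverse-tensor technique applied to the two cyclic orderings, showing the wall coefficients $Z$, $W$ must vanish and $X$, $Y$ must be the identity. Your ``two walls collide'' heuristic points in the right direction intuitively, but as stated it neither establishes the open-chain structure nor replaces the closure computation.
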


\begin{proof}
The parent Hamiltonian is frustration free, i.e., its ground states
minimize the energy of each local term. The ground space is thus
\[
\ker (H) = \bigcap \ker (h)\ .
\]
Since $\ker(h) \subset \ker(h'_{P})$, it
follows that 
\[
\ker (H'_P) = \bigcap \ker ( h'_{P})
\supset \ker(H)\ ,
\]
i.e., the uncle Hamiltonian is frustration free, and any ground state of
the parent is also a ground state of the uncle.

In order to classify all states in $\ker (H'_P) = \bigcap \ker (
h'_{P})$, we will follow the same steps as for the proof of
the ground space structure of the parent Hamiltonian~\cite{schuch:peps-sym}:
First, we will prove inductively how the ground space on a chain with open
boundaries, $\bigcap_{i=1}^k \ker{h'_{P,i,i+1}}$, grows -- the
\emph{intersection property}.  Then, we will show how the ground space
changes when we close the boundaries -- the \emph{closure property}.

Recall that the base of induction is the projector on two sites which follows from the uncle tensor in Eq.~(\ref{eq:uncletensor-2site}).

\begin{lemma}[Intersection property]
\label{lemma:intersection}
Given a chain of length $n$, let $S_n$ be the vector space
\begin{align}
S_n&=A_n+ B_n+ R_n+ L_n\ , \ \mathrm{ where}
    \label{eq:Sn-def} \\
A_n&=\Big\{\ \figgg{0.18}{A_AX} /X\in \mathcal{M}_l\Big\},
    \nonumber\\
B_n&=\Big\{\ \figgg{0.18}{B_BX} /X\in \mathcal{M}_m\Big\},
    \nonumber\\
R_n&=\Big\{\sum_{\mathrm{pos\,R}}\figgg{0.18}{ARB_BX}\  /X\in \mathcal{M}_{l\times m}\Big\},
    \nonumber\\
L_n&=\Big\{\sum_{\mathrm{pos\,L}}\figgg{0.18}{BLA_AX}\  /X\in \mathcal{M}_{m\times l}\Big\},
    \nonumber
\end{align}
where the sums run over all possible positions of the $R$ or $L$, padded
with $A$'s and $B$' to the left and right as indicated. Further, let 
$\left( \begin{smallmatrix} A & R \\ L & B \end{smallmatrix} \right)$ be 
injective. Then, the \emph{intersection property} 
$S_n\otimes \C ^d \cap \C^d\otimes S_n= S_{n+1}$ holds. 
\end{lemma}
The proof of the Lemma is given in
Appendix~\ref{appendix:intersection-closure}.

Starting from $S_2=\ker(h'_{1,2})$, and using that 
\[
\bigcap_{i=1}^{k} \ker(h'_{i,i+1}) = 
\bigcap_{i=1}^{k-1} \ker(h'_{i,i+1}) \,\cap\,
\bigcap_{i=2}^{k} \ker(h'_{i,i+1})\ ,
\]
the lemma allows to inductively prove that the ground
space on $n$ consecutive sites with open boundaries is
$\bigcap_{i=1}^{n-1}\ker(h'_{i,i+1}) = S_n$:  It differs from the ground space of the
parent Hamiltonian by the presence of the ``zero momentum domain wall
states`` $R_n$ and $L_n$, analogous to the domain wall states for the GHZ
uncle. It remains to show that these states disappear from the kernel when closing the
boundaries.

\begin{lemma}[Closure property]
\label{lemma:closure} 
Consider a chain of length $N$, and let $S_\mathrm{left}=S_N$ defined on
sites $1,\dots,N$, and $S_\mathrm{right}=S_N$ defined on sites
$2,\dots,N,1$, using the definitions of the previous lemma. (I.e., for
$S_\mathrm{right}$ the ordering of sites is shifted cyclically by one.)
Then, 
\[
S_\mathrm{left}\cap S_\mathrm{right} = \spanned\left\{\ket{M(A)},\ket{M(B)}\right\}.
\]
\end{lemma}
The proof is again given in the Appendix~\ref{appendix:intersection-closure}.

The closure property shows that if we close the boundaries on a chain of
length $N$, we indeed recover the ground space of the parent Hamiltonian,
since
\[
\bigcap_{i=1}^{N} \ker(h'_{i,i+1}) = 
\bigcap_{i=1}^{N-1} \ker(h'_{i,i+1}) \,\cap\,
\bigcap_{i=2}^{N} \ker(h'_{i,i+1}) = S_\mathrm{left}\cap S_\mathrm{right}\ .
\]
Together, the two lemmas thus prove Theorem~\ref{thm:gs-subspace-uncle}.
\hspace*{\fill}\qed
\end{proof}

\subsection{Gaplessness of the uncle Hamiltonian}

One of the key properties of the parent Hamiltonian is that it exhibits a
spectral gap above the ground space \cite{nachtergaele:spectralgap}. On
the other hand, as we will prove in the following the uncle Hamiltonian is
generically gapless.

\begin{theorem}
\label{thm:gapless}
The uncle Hamiltonian $H_P'$ is gapless for almost every $P$.
\end{theorem}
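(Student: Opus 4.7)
The plan is to imitate the GHZ gapless-excitation argument from Section \ref{sec:example}, replacing the explicit domain-wall product states by their MPS-contraction analogues, which are exactly the elements of the spaces $R_n$ and $L_n$ identified in the intersection property (Lemma \ref{lemma:intersection}). The key point is that, just as for the GHZ state, on a closed chain there exist zero-momentum superpositions of ``domain-wall pairs'' which are (almost) orthogonal to the ground space and whose energy is driven entirely by rare events in which the two walls come close enough for a single local term $h'_P$ to see both at once.

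First, on a closed chain of length $N$, I would build variational states
\[
\ket{\phi_N} \;=\; \sum_{(i,j)\in \Omega_N} \ket{\phi_{i,j}},
\]
where $\ket{\phi_{i,j}}$ denotes the MPS configuration obtained by contracting $R$ at site $i$, $L$ at site $j$, $B$'s on the arc strictly between $i$ and $j$, and $A$'s on the remaining sites; $\Omega_N$ consists of those ordered pairs $(i,j)$ with the two walls separated by at least a fixed distance $c$ on both arcs of the cycle, so that $|\Omega_N|=\Theta(N^2)$. Each $\ket{\phi_{i,j}}$ lies in the open-chain kernel $S_N$ of Lemma \ref{lemma:intersection}. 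To check that $\ket{\phi_N}$ is effectively orthogonal to $\spann\{\ket{M(A)},\ket{M(B)}\}$, I would observe that $\braket{M(A)}{\phi_{i,j}}$ and $\braket{M(B)}{\phi_{i,j}}$ are traces of contractions that each contain at least one off-diagonal transfer operator $E^A_B$ or $E^B_A$, which by Lemma \ref{lemma:offdiagonal-damping} has spectral radius strictly less than one. The total overlap with the ground space is therefore exponentially small in $c$, and projecting $\ket{\phi_N}$ onto the orthogonal complement of the ground space changes its norm and energy only at subleading order.

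For the norm and energy I would follow the GHZ pattern. Using Lemma \ref{lemma:exptransferoperator} for the pure-block transfer operators and Lemma \ref{lemma:offdiagonal-damping} for the off-diagonal ones, the diagonal pairs $(i,j)=(i',j')$ in $\braket{\phi_N}{\phi_N}$ contribute $\Theta(N^2)$ while cross terms are exponentially suppressed in the distance between distinct walls, yielding $\braket{\phi_N}{\phi_N}=\Theta(N^2)$. For the energy, a local term $h'_{P,k,k+1}=\Pi[U]$ annihilates $\ket{\phi_{i,j}}$ whenever the sites $k,k+1$ either lie inside a pure $A$ or $B$ stretch or straddle a single wall alone, since the two-site reduced tensor then lies in $\spann\{U\}$; this is precisely the intersection property applied to one wall at a time. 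The only surviving contributions come from local terms that see both walls simultaneously, which the separation constraint in $\Omega_N$ restricts to an $O(N)$-sized set of configuration-site pairs, each contributing $O(1)$. Combining, $\bra{\phi_N}H'_P\ket{\phi_N}=O(N)$, and hence the variational energy is $O(1/N)$, forcing the gap above the ground space to close in the thermodynamic limit.

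The main obstacle will be the technical bookkeeping needed to control the cross terms $\braket{\phi_{i',j'}}{\phi_{i,j}}$: unlike the computational-basis GHZ situation, the $\ket{\phi_{i,j}}$ are not mutually orthogonal, so the $\Theta(N^2)$ lower bound on $\braket{\phi_N}{\phi_N}$ must be established by invoking injectivity of $A$, $B$ and of the uncle tensor $U$, together with the spectral gap of $E^A_A$, $E^B_B$ and the strict contraction of $E^A_B$, $E^B_A$. This is also where the hypothesis ``almost every $P$'' enters: the set of perturbations for which $R$ or $L$ degenerate or align pathologically with the fixed points of the on-block transfer operators, in a way that would collapse the rank of the domain-wall families $R_N$ and $L_N$, forms a proper algebraic subvariety and must be excluded.
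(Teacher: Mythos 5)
Your overall architecture is the same as the paper's: a zero-momentum superposition of domain-wall-pair configurations summed over a $\Theta(N^2)$-sized set of wall positions, with norm $\Theta(N^2)$, energy $O(N)$, and (near-)orthogonality to the ground space, plus the observation that the ``almost every $P$'' caveat lives in the non-vanishing of the quadratic coefficient in the norm lower bound. However, your energy estimate rests on a mechanism that is not the right one and, taken literally, would fail. You claim the only surviving contributions to $\bra{\phi_N}H'_P\ket{\phi_N}$ come from ``local terms that see both walls simultaneously.'' But your set $\Omega_N$ enforces a minimal separation $c$ between the walls, so for $c\ge 3$ no two-site term ever sees both walls, and your accounting would give energy zero --- which is impossible, since $\ket{\phi_N}$ is not in the ground space. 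The actual source of the energy is the \emph{truncation of the zero-momentum sums at the boundary of the summation domain}: a local term $h'_{k,k+1}$ annihilates the sum only when, for every configuration placing $R$ (or $L$) at $k$ or $k+1$, the companion configuration with the wall shifted by one is also present in $\Omega_N$ with the same environment; the failures occur precisely where the constraint defining $\Omega_N$ cuts off one of the two companions. This is exactly why the paper confines $i\in[-2N,-N]$ and $j\in[N,2N]$ and finds that only the four terms at the edges of these windows ($h'_{-2N-1,-2N}$, $h'_{-N,-N+1}$, $h'_{N-1,N}$, $h'_{2N,2N+1}$) contribute, each receiving $N+1$ summands whose mutual overlaps decay exponentially in wall separation, yielding $O(N)$. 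With your $\Omega_N$ the count also comes out to $O(N)$ (each of the $N$ local terms has $O(1)$ boundary configurations), but you must redo the bookkeeping with the correct mechanism.

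A second, smaller issue: with a fixed separation $c$, each $\ket{\phi_{i,j}}$ with a short $B$-stretch has overlap $O(e^{-c})$ with $\ket{M(A)}$, and there are $\Theta(N)$ such near-diagonal pairs, so the normalized overlap of $\ket{\phi_N}$ with the ground space is $O(e^{-c})$ --- a non-vanishing constant, not ``subleading.'' This is still rescuable: since $H'_P\ge 0$ annihilates the ground space, projecting out the ground-space component leaves the energy unchanged and reduces the norm only by a factor $1-O(e^{-2c})$, so the Rayleigh quotient of the projected (exactly orthogonal) state is still $O(1/N)$. Alternatively, let the buffer grow with $N$ as the paper does, which makes the overlap $O(e^{-N})$ and renders the projection step trivial.
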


In order to prove this result, we need two basic lemmas which follow directly from the normal form of MPS, and which concern the transfer operators, and another lemma in which we find some low energy states --not eigenstates-- which let us finally prove the theorem. Let us first show them.

\begin{lemma}\label{lemma:exptransferoperator}
Let $A$ denote an injective block of an MPS. Then
\begin{equation}
\label{eq:EAA-errorterm}
(E^A_A)^k\equiv
\left(\figgg{0.18}{AA2}\right)^k = 
    \figgg{0.18}{eaa-fixedpoint} \ + O(e^{-k})\ .
\end{equation}
\end{lemma}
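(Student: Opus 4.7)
The plan is to translate property 3 of the standard form in Theorem \ref{thm:std-form-inj} into the language of spectral projectors for the completely positive map $\mathcal{E}:=E_A^A$ viewed as a linear operator on the space $\mathcal{M}_D$ of $D\times D$ matrices. Property 3 asserts that $\mathcal{E}$ has spectral radius $1$, that $1$ is the unique eigenvalue of modulus $1$, and that its eigenspace is one-dimensional, spanned by $\I$ on the right and by $\Lambda_A$ on the left with $\operatorname{tr}(\Lambda_A)=1$. In the usual terminology, $1$ is a simple eigenvalue of $\mathcal{E}$.

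First I would invoke the Riesz/Dunford decomposition associated with the isolated eigenvalue $1$ to write
\[
\mathcal{E} = P + Q, \qquad PQ = QP = 0,
\]
where $P$ is the rank-one spectral projector onto the fixed-point subspace and $Q$ carries the rest of the spectrum, which by hypothesis sits in a closed disc of radius $r<1$. Biorthogonality of the simple eigenvalue $1$ forces $P$ to act as $P:X\mapsto \I\,\operatorname{tr}(\Lambda_A X)$. Graphically this is precisely the rank-one tensor with an identity cap on one virtual pair and a $\Lambda_A$ cap on the other, i.e.\ the first summand on the right-hand side of \eqref{eq:EAA-errorterm}.

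Next I would iterate. Since $P$ is a projector with $PQ=QP=0$, one obtains
\[
\mathcal{E}^k = P + Q^k \qquad (k\ge 1),
\]
and the lemma reduces to showing $\|Q^k\|=O(e^{-k})$. The Jordan normal form of $Q$ together with $\mathrm{spr}(Q)=r<1$ yields $\|Q^k\|\le C\,k^{m-1}r^k$ for some integer $m\le D^2$ and some constant $C>0$; picking any $\tilde r\in(r,1)$ absorbs the polynomial factor and gives $\|Q^k\|\le C'\tilde r^{k}$, i.e.\ exponential decay in $k$. The symbol $O(e^{-k})$ in the statement of the lemma is, as is common in this literature, just shorthand for ``exponentially decaying in $k$''.

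The whole argument is essentially routine once property 3 of Theorem \ref{thm:std-form-inj} is in hand, so there is no substantial obstacle. The only step that warrants a moment's care is verifying that the biorthogonal form of $P$ reproduces exactly the graphical object on the right-hand side of \eqref{eq:EAA-errorterm}, which is immediate from the normalisations $\operatorname{tr}(\Lambda_A)=1$ and $\mathcal{E}(\I)=\I$ already built into the standard form.
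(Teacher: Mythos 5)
Your proposal is correct and is exactly the argument the paper leaves implicit (the lemma is stated without proof, with only the remark that it follows from the uniqueness of the fixed points): the spectral projector $P:X\mapsto \I\,\trace(\Lambda_A X)$ associated with the simple peripheral eigenvalue $1$ gives the first term, and $\mathcal{E}^k=P+Q^k$ with $\mathrm{spr}(Q)<1$ gives the exponentially small remainder. The only point worth a word is that the algebraic simplicity of the eigenvalue $1$ (no Jordan block), which you use, is part of what the standard form of Theorem~\ref{thm:std-form-inj} guarantees for an injective block, so your appeal to it is legitimate.
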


Note that $O(e^{-k})$ denotes a bound up to a constant
in the exponent. This notation will be used throughout this work.

\begin{lemma}\label{lemma:offdiagonal-damping}
Under the conditions of Theorem~\ref{thm:std-form-inj},  the spectral radius
$\rho(E^{A^i}_{A^j})<1$ for $i\ne j$.
\end{lemma}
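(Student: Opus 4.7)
\noindent\emph{Proof plan.} My plan is to bound the mixed transfer operator by an operator Schwarz inequality against the diagonal operators $\mc E_j$ and $\mc E_i$, use the strictly positive fixed points supplied by Theorem~\ref{thm:std-form-inj}.3 to deduce $\rho(E^{A^i}_{A^j})\le 1$, and then rule out the equality case.

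The first step is to establish
\[
E^{A^i}_{A^j}(X)\bigl(E^{A^i}_{A^j}(X)\bigr)^{*}\;\le\;\mc E_j(XX^{*})
\]
for every $l_j\times l_i$ matrix $X$. Writing $\sum_k A^j_k X(A^i_k)^{*}=\mathbf Y\mathbf Z^{*}$ with the row blocks $\mathbf Y=(A^j_1 X,\dots,A^j_d X)$ and $\mathbf Z=(A^i_1,\dots,A^i_d)$, the elementary inequality $\mathbf Y\mathbf Z^{*}(\mathbf Y\mathbf Z^{*})^{*}\le \|\mathbf Z\mathbf Z^{*}\|_{\mathrm{op}}\,\mathbf Y\mathbf Y^{*}$ yields the bound, since the unital property in Theorem~\ref{thm:std-form-inj}.3 gives $\mathbf Z\mathbf Z^{*}=\mc E_i(\I)=\I$, while $\mathbf Y\mathbf Y^{*}=\mc E_j(XX^{*})$. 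Swapping the roles of $i$ and $j$ produces the twin bound $\bigl(E^{A^i}_{A^j}(X)\bigr)^{*}E^{A^i}_{A^j}(X)\le \mc E_i(X^{*}X)$.

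Next, if $E^{A^i}_{A^j}(X)=\mu X$ with $|\mu|\ge 1$ and $X\ne 0$, the first Schwarz bound gives $|\mu|^{2}XX^{*}\le \mc E_j(XX^{*})$; testing against the strictly positive left fixed point $\Lambda^{j}_A$ of $\mc E_j$ and using $\mc E_j^{*}(\Lambda^{j}_A)=\Lambda^{j}_A$ yields $|\mu|^{2}\tr[\Lambda^{j}_A XX^{*}]\le \tr[\Lambda^{j}_A XX^{*}]$, hence $|\mu|\le 1$. To promote this to a strict inequality I would examine the case $|\mu|=1$: the trace equality together with $\Lambda^{j}_A>0$ and $\mc E_j(XX^{*})-XX^{*}\ge 0$ forces the matrix equality $\mc E_j(XX^{*})=XX^{*}$, and uniqueness of the fixed point of $\mc E_j$ from Theorem~\ref{thm:std-form-inj}.3 then gives $XX^{*}=c\,\I_{l_j}$. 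The twin Schwarz bound similarly yields $X^{*}X=c'\,\I_{l_i}$, and these two identities jointly force $l_i=l_j$ and $X=\sqrt{c}\,U$ for some unitary $U$. Plugging back into the eigenvalue equation $\sum_k A^j_k U(A^i_k)^{*}=\mu U$ and expanding $\sum_k(U^{*}A^j_k U-\mu A^i_k)(U^{*}A^j_k U-\mu A^i_k)^{*}$ by means of $\sum_k A^j_k(A^j_k)^{*}=\sum_k A^i_k(A^i_k)^{*}=\I$ collapses the sum to $(1-|\mu|^2)\I=0$, yielding $A^j_k=\mu\,U A^i_k U^{*}$ for every $k$. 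But then the blocks $A^i$ and $A^j$ generate proportional states and would be identified in the standard form of Theorem~\ref{thm:std-form-inj}, contradicting $i\ne j$.

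The main obstacle is precisely this last step. The weak bound $\rho\le 1$ is a fairly routine consequence of the operator Schwarz inequality; the difficulty lies in excluding $\rho=1$, for which one must combine both one-sided Schwarz inequalities to squeeze $X$ into a scalar multiple of a unitary, and then leverage the minimality built into the normal form to exclude an exact gauge equivalence between $A^i$ and $A^j$.
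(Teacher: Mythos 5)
Your argument is correct, but it reaches the conclusion by a different route than the paper. The paper's proof is a single weighted Cauchy--Schwarz estimate in the Hilbert--Schmidt inner product: for an eigenvector $X$ of $E^{A^i}_{A^j}$ it writes $|\lambda|\,|\trace(X\Lambda_{A^i}X^*)|$ as a pairing of the families $\{\sqrt{\Lambda_{A^i}}A^i_kX^*\}_k$ and $\{\sqrt{\Lambda_{A^i}}X^*A^j_k\}_k$, bounds it by $\trace(X\Lambda_{A^i}X^*)$ using $\mathcal E_i^*(\Lambda_{A^i})=\Lambda_{A^i}$ and $\mathcal E_j(\I)=\I$, and asserts strictness directly because block-injectivity prevents the proportionality required for equality. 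You instead prove the two one-sided operator-Schwarz bounds $E^{A^i}_{A^j}(X)E^{A^i}_{A^j}(X)^*\le\mathcal E_j(XX^*)$ and $E^{A^i}_{A^j}(X)^*E^{A^i}_{A^j}(X)\le\mathcal E_i(X^*X)$, get $|\mu|\le1$ by tracing against the strictly positive fixed points, and then run the equality case to its logical end: fixed-point uniqueness forces $XX^*\propto\I$ and $X^*X\propto\I$, hence $X$ a multiple of a unitary, and the zero-sum-of-positives expansion forces the exact gauge relation $A^j_k=\mu\,U A^i_k U^*$. Both proofs ultimately hinge on the same ingredient (block-injectivity) to exclude modulus-one eigenvalues; what yours buys is a fully explicit equality-case analysis and the stronger structural statement that $\rho=1$ would make the two blocks gauge-equivalent, whereas the paper's version is shorter but leaves the strictness claim terse. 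One small sharpening: your closing sentence ("the blocks would be identified in the standard form") is best pinned to condition 2 of Theorem~\ref{thm:std-form-inj} rather than to an informal non-duplication clause --- if $A^j_k=\mu\,U A^i_k U^*$ for all $k$, then no linear combination of the $A_k$ can have $i$-block equal to $\I\otimes\Gamma_i$ while its $j$-block vanishes, so the $A_k$ cannot span $\bigoplus_m\mathcal M_{l_m}\otimes\Gamma_m$, contradicting the spanning (block-injectivity) property; with that one line added the proof is complete. (Also note $E^{A^i}_{A^j}$ acts on a finite-dimensional matrix space, so bounding all eigenvalue moduli strictly below $1$ does give $\rho(E^{A^i}_{A^j})<1$, as in the paper.)
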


\begin{proof}
Let us take $X$ such that $E^{A^i}_{A^j}(X)=\sum_k A^j_k X  (A^{i}_k)^*=\lambda X$. We have then, by using Cauchy-Schwarz inequality, that
\begin{gather*}
|\lambda|\trace(X\Lambda_{A^i}X^*)=|\sum_k\trace(A^j_k X  (A^{i}_k)^*\Lambda_{A^i}X^*)|=|\sum_k\trace(X(A^{i}_k)^*\sqrt{\Lambda_{A^i}} \sqrt{\Lambda_{A^i}}X^*A^j_k)|<
\\
<\left(\sum_k \trace(X(A_k^i)^*\Lambda_{A^i}A_k^i X^*)\right)^{1/2}
\left(\sum_k \trace((A_k^j)^*X\Lambda_{A^i}X^*A_k^j)\right)^{1/2}=\trace(X\Lambda_{A^i}X^*),
\end{gather*}
and therefore for any eigenvalue we have $|\lambda|<1$. Note that  the inequality is strict since we have that $\spanned\{\bra{k}A^i\ket{l}\}\cap \spanned\{\bra{m}A^j\ket{n}\}=\{0\}$ due to block injectivity. 
\hspace*{\fill}\qed
\end{proof}

\begin{lemma}\label{lemma:lowenergystates}
For a chain of length $6N+1$, let
\begin{equation} \label{eq:phiN-def}
\ket{\phi_{N}}=\sum_{\substack{-2N\le i\le -N \\ N\le j \le 2N}}
\ket{\zeta_{i,j}}\ ,
\end{equation}
where 
\begin{equation}
\ket{\zeta_{i,j}}\ =\ \figgg{0.18}{gapMPS}\quad.
\end{equation}
Then, for almost every $R$ and $L$ (and thus almost every $P$), the
following holds:
\begin{enumerate}
\item $\braket{\phi_N}{\phi_N}=\Theta(N^2)$.
\item $\braket{M(A)}{\phi_N}=O(e^{-N})$ and $\braket{M(B)}{\phi_N}=O(e^{-N})$.
\item $\bra{\phi_N}H'_P\ket{\phi_N}=O(N)$.
\end{enumerate}
Here, $\Theta(\cdot)$ denotes both lower and upper bounds on the scaling.
\end{lemma}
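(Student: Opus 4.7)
The state $\ket{\zeta_{i,j}}$ indicated by the picture is the cyclic MPS on $6N+1$ sites whose local tensor equals $A$ at every site outside the arc $[i,j]$, $R$ at position $i$, $B$ at every site of the inner arc $(i,j)$, and $L$ at position $j$; it is a double domain-wall configuration with the $R$-wall at $i$ and the $L$-wall at $j$, separated by an inner $B$-region and an outer $A$-region each of length at least $2N$. Items 1 and 2 will be proved as transfer-operator computations, using Lemma~\ref{lemma:exptransferoperator} for in-block stretches and Lemma~\ref{lemma:offdiagonal-damping} for off-block ones, while Item 3 uses a zero-momentum cancellation extending the GHZ argument.

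For Item 1, expand $\braket{\phi_N}{\phi_N}=\sum_{(i,j),(i',j')}\braket{\zeta_{i,j}}{\zeta_{i',j'}}$. Diagonal terms are cyclic traces in which both the $A$- and $B$-stretches have length $\ge 2N$; by Lemma~\ref{lemma:exptransferoperator} each equals, up to $O(e^{-N})$, a fixed-point expression in $R,L$ and the canonical forms which is generically a nonzero constant, so the diagonal contributes $\Theta(N^2)$. Off-diagonal terms $(i,j)\ne(i',j')$ carry the cross-block operator $E^A_B$ or $E^B_A$ raised to a positive mismatch length, which decays exponentially by Lemma~\ref{lemma:offdiagonal-damping}. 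Approximate translation invariance of the Gram matrix in $(i,j)$ then identifies $\braket{\phi_N}{\phi_N}$ with $\Theta(N^2)$ times its zero-momentum eigenvalue, which is generically strictly positive. For Item 2, $\braket{M(A)}{\zeta_{i,j}}$ is a trace in which the ket supplies at least $2N$ consecutive $B$ tensors while the bra has only $A$s on the same arc, producing $(E^A_B)^{\ge 2N}=O(e^{-cN})$ by Lemma~\ref{lemma:offdiagonal-damping}; summing over the $(N+1)^2$ indices yields $O(N^2 e^{-cN})=O(e^{-N})$, and the bound on $\braket{M(B)}{\phi_N}$ is symmetric.

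For Item 3, write $H'_P=\sum_p h'_{P,p}$ and use $\bra{\phi_N}h'_{P,p}\ket{\phi_N}=\|h'_{P,p}\ket{\phi_N}\|^2$. Wherever the pair $(p,p+1)$ lies entirely within the outer $A$-region or entirely within the inner $B$-region of every summand, the local two-site tensor is $A\cont A$ or $B\cont B$, both in $\spann\{U\}$, so $h'_{P,p}\ket{\zeta_{i,j}}$ vanishes for every $(i,j)$. Wherever $p$ is an interior position of the $R$-wall range $[-2N,-N]$, the two summands $\ket{\zeta_{p,j}}$ and $\ket{\zeta_{p+1,j}}$ share an identical environment outside sites $p, p+1$ (both begin the $B$-block at site $p+2$), so their sum has local two-site tensor $A\cont R + R\cont B\in\spann\{U\}$ and is annihilated by $h'_{P,p}$; the analogous cancellation covers the interior of the $L$-wall range. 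What survives is only the $O(1)$ edge positions $p\in\{-2N-1,\,-N,\,N-1,\,2N\}$ where the zero-momentum sum is truncated: at each such edge $h'_{P,p}\ket{\phi_N}=\sum_{j'} h'_{P,p}\ket{\zeta_{i_\star,j'}}$ for a single $i_\star$ (or the analogous $\sum_{i'}$), a sum of $\Theta(N)$ states whose pairwise overlaps decay exponentially in the index separation by Lemma~\ref{lemma:offdiagonal-damping}, so its norm squared is $O(N)$, and summing over the finitely many edges yields $\bra{\phi_N}H'_P\ket{\phi_N}=O(N)$. The main obstacle is this bookkeeping in Item 3: checking that the bulk cancellation is exact (which relies on $\ket{\zeta_{p,j}}$ and $\ket{\zeta_{p+1,j}}$ differing only at the two support sites of $h'_{P,p}$) and that the unavoidable boundary failures produce only $O(N)$ rather than $O(N^2)$; a secondary technicality is the genericity condition on $P$ needed to ensure the leading constant in Item 1 does not vanish.
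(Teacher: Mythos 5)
Your overall strategy coincides with the paper's: Item 2 via the factor $(E^A_B)^{2N}$ and Lemma~\ref{lemma:offdiagonal-damping}, and Item 3 via the telescoping of $A\cont R+R\cont B$ (resp.\ $B\cont L+L\cont A$) into $\spann\{U\}$ in the bulk of the wall ranges, leaving only the four truncation positions, each contributing a sum of $N+1$ terms of total squared norm $O(N)$. Your Item 3 is in fact spelled out more carefully than the paper's one-line version, and the key observation you highlight --- that $\ket{\zeta_{p,j}}$ and $\ket{\zeta_{p+1,j}}$ share the same environment outside sites $p,p+1$ --- is exactly what makes the bulk cancellation exact.

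The one genuine gap is the lower bound in Item 1. Your dichotomy ``diagonal terms give $\Theta(N^2)$, off-diagonal terms decay exponentially'' does not by itself yield $\braket{\phi_N}{\phi_N}=\Omega(N^2)$: summing the off-diagonal Gram entries over all mismatches $n,m$ gives $\sum_{n,m}(N-n)(N-m)\,O(\rho^{n+m})=O(N^2)$, i.e.\ the \emph{same} order as the diagonal, so a priori cancellation down to $o(N^2)$ cannot be excluded (the Gram matrix need not be diagonally dominant). The appeal to ``the zero-momentum eigenvalue of the Gram matrix being generically strictly positive'' is precisely the statement to be proved, not an argument for it. The paper closes this by computing the leading coefficient explicitly: the overlap factorizes (up to $O(e^{-2N})$) into a product $\Xi_R\Xi_L$ of single-wall contributions, the geometric sum $\sum_{n=0}^N(N-n)(E^A_B)^n=N(\I-E^A_B)^{-1}+O(1)$ gives $\Xi_R=C_RN+O(1)$ with an explicit $C_R$ involving $(\I-E^A_B)^{-1}$ and $(\I-E^B_A)^{-1}$, and then one argues that $C_R$ is a quadratic function of $R$ that does not vanish identically, so that $\{R:C_R=0\}$ is a proper algebraic subvariety and $C_R\neq0$ for almost every $R$ (similarly for $C_L$). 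This last step is also where the ``almost every $R$ and $L$'' hypothesis of the lemma actually enters; you flag the issue as a ``secondary technicality,'' but it is the substantive part of Item 1 and needs the explicit computation to be completed.
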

Note that $\ket{\phi_N}$ generalizes the GHZ ``boundary wall'' ansatz for
low energy states, Eq.~(\ref{lowenergystatesGHZfinitechain}). The range
for $i$ and $j$ in (\ref{eq:phiN-def}) is chosen such that $R$ and $L$
move over a region of size $N$ each, leaving two separating regions of
length $2N$ each which contain only $A$ or $B$ tensors, respectively.

\begin{proof}

\noindent
\emph{1) $\braket{\phi_N}{\phi_N}=\Theta(N^2)$:} Results from Lemma \ref{lemma:exptransferoperator} allow us to approximate 
\begin{equation}
\braket{\zeta_{i,j}}{\zeta_{k,l}}=
\figgg{0.18}{lowenergy1} \  \figgg{0.18}{lowenergy2}
+O(e^{-2N})\ ,
\end{equation}
for $i<k$ and $j<l$, and correspondingly for the other cases. [Note that
we have used $\rho(E_B^A)=\rho(E_A^B)\le 1$,
Lemma~\ref{lemma:offdiagonal-damping}, together with
Eq.~(\ref{eq:EAA-errorterm}) to bound the error term.] It follows that
$\braket{\phi_N}{\phi_N} = \Xi_R \Xi_L+O(e^{-2N})$, with 
\begin{align*}
\Xi_R := &
\left({N}  +1\right)\figgg{0.18}{sum1} \ + \ 
    \figgg{0.18}{sum21} \left(\sum_{n=0}^N 
    \left({N}-n\right)\figgg{0.18}{AB}^n \right) \figgg{0.18}{sum22} 
    \quad +
\\[1em]
    &  + \ \figgg{0.18}{sum31} 
    \left(\sum_{n=0}^N \left({N}-n\right)\figgg{0.18}{BA2}^n \right) 
    \figgg{0.18}{sum32}\quad ,
\end{align*}
and correspondingly for $\Xi_L$.  Using 
\[
\sum_{n=0}^N \left(N-n\right)\figgg{0.18}{AB}^n
= \frac{N\I - \frac{E^A_B -(E^A_B)^{N+1}}{\I-E^A_B}}{\I-E^A_B} 
=\frac{N \I }{\I-E^A_B}+O(1)\ ,
\]
we find that $\Xi_R=C_RN+O(1)$, with 
\[
C_R =\ \figgg{0.18}{sum1} \ + \figgg{0.18}{sum21}{(\I-E^A_B)}^{-1}
\figgg{0.18}{sum22} \ + \figgg{0.18}{sum31}{(\I-E^B_A)}^{-1}
\figgg{0.18}{sum32}\ .
\]
$C_R$ is a quadratic function in $R$ which does not vanish identically 
(e.g., there exists an $R$ for which $\Pi_R\Pi_A=0$ and the first term is
non-zero).  Thus, the $R$ for which $C_R=0$ form an algebraic variety of
smaller dimension, and $C_R\ne0$ for almost all $R$.\footnote{If
$\Pi_A\Pi_B=0$, i.e., $E^A_B=0$, such as for the GHZ state, one can prove
that $C_R\ne0$ for any injective perturbation.}

The same argument can be used to see that $\Xi_L=C_LN+O(1)$ where
$C_L\ne0$ for almost all $L$, and thus,
$\braket{\phi_N}{\phi_N}=\Theta(N^2)$ for almost every perturbation as
claimed.

\vspace{1em}\noindent \emph{2) $\braket{M(A)}{\phi_N}=O(e^{-N})$ and 
$\braket{M(B)}{\phi_N}=O(e^{-N})$:}
In the scalar product
\begin{equation}
\braket{M(A)}{\phi_{N}}=\sum_{\substack{-2N\le i\le -N \\ N\le j \le 2N}}
\ \figgg{0.18}{Aorthb}\quad,
\end{equation}
every summand contains $(E^A_B)^{2N}$. Using
Lemma~\ref{lemma:offdiagonal-damping} and the fact that there are only
$O(N^2)$ summands, $\braket{M(A)}{\phi_N}=O(e^{-N})$ follows, and
analogously for $\braket{M(B)}{\phi_N}$.

\vspace{1em}\noindent \emph{3) $\bra{\phi_N}H'_P\ket{\phi_N}=O(N)$:}
The only terms in $H'_P$ which give a non-zero energy are
$h'_{-2N-1,-2N}$, $h'_{-N,-N+1}$, $h'_{N-1,N}$, and $h'_{2N,2N+1}$.
For each of them, $N+1$ summands in \eqref{eq:phiN-def} contribute, and
thus, $\bra{\phi_{N}} H'_P\ket{\phi_{N}}=O(N)$.
\hspace*{\fill}\qed
\end{proof}

\begin{proof}[of Theorem~\ref{thm:gapless}]
For the normalized states
$\ket{\hat\phi_N}:=\ket{\phi_N}/\|\ket{\phi_N}\|$ on a chain of length
$6N+1$ (with $\ket{\phi_N}$ of Lemma~\ref{lemma:lowenergystates}), we have
that $\ket{\hat\phi_N}$ tends to be orthogonal to the ground space of
$H'_P$ and $\bra{\hat\phi_N}H'_P\ket{\hat\phi_N}\rightarrow 0$ as
$N\rightarrow\infty$.  Together with simple spectral decomposition
arguments, this implies the existence of a sequence $\delta_N\rightarrow
0$ such that $H'_P$ (on $6N+1$ sites) has at least one eigenvalue in the
interval $(0,\delta_N)$.  
\hspace*{\fill}\qed
\end{proof}

\subsection{The spectrum of the uncle Hamiltonians is $\R^+$}\label{sec:uncle-spectrum}
In order to study more properties of the spectra of the uncle Hamiltonians
we need to move on to the thermodynamic limit. A formal description of it
via GNS-representations with respect to ground states can be found in Appendix~\ref{appendix:GNS}. The spectrum in the
thermodynamic limit can be found to be the whole positive real line and
the spectra of the finite sized chains can be proven to tend to be dense
in the positive real line.

Through the GNS-representation the problem is translated into studying the action of $H'_P$ on the space
\begin{equation*}
S=\bigcup_{i\le j} S_{i,j}, \hbox{ where } S_{i,j}=\left\{\figgg{0.18}{GNSvector4},\ X\right\},
\end{equation*}
where $X$ can be any tensor of the corresponding dimensions. The spectrum one must study is that of its unique self-adjoint extension\footnote{We ommit the subscript $P$ for the thermodynamic limit Hamiltonians.} $H'_\w:\bar{S}\lra\bar{S}$, with $\w=\w_A=\figgg{0.18}{Alimit}$ the ground state used for the representation. There exists a unique self-adjoint extension because $S$ is a dense set of analytic vectors for $H'_P|_S$, and therefore it is essentially self-adjoint \cite{reedsimon}.

We must note that the choice of either $\w_A$ or $\w_B$ is irrelevant, due to the symmetric role blocks $A$ and $B$ have in the problem.

In first place, we must show that $H'_\w$ is gapless. A family of states related to those we used previously for finite chains in Lemma \ref{lemma:lowenergystates} and Theorem \ref{thm:gapless} let us show the absence of gap:
\begin{equation} \label{eq:phiNlimit-def}
\ket{\phi_{N}}=\sum_{\substack{-2N\le i\le -N \\ N\le j \le 2N}}
\ket{\zeta_{i,j}}\ ,
\end{equation}
where 
\begin{equation}
\ket{\zeta_{i,j}}\ =\ \figgg{0.18}{gap1b}\quad.
\end{equation}

\begin{theorem}
\label{thm:limitunclegapless}
The uncle Hamiltonian $H'_P$ is gapless in the thermodynamic limit for almost every $P$.
\end{theorem}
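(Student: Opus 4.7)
The plan is to adapt the finite-chain argument of Lemma~\ref{lemma:lowenergystates} and Theorem~\ref{thm:gapless} to the GNS Hilbert space $\bar S$ associated with the ground state $\omega=\omega_A$. In this representation the ground space of $H'_\omega$ is the one-dimensional subspace $\C\Omega_A$ spanned by the cyclic vector (this follows from Theorem~\ref{thm:gs-subspace-uncle} together with the standard fact that the GNS representation of a pure extremal infinite-volume ground state has a unique ground vector). It therefore suffices to exhibit a sequence of vectors $\ket{\phi_N}\in S$ that is asymptotically orthogonal to $\Omega_A$ and whose Rayleigh quotient for $H'_\omega$ tends to $0$; a standard spectral-theorem decomposition of $\ket{\hat\phi_N}$ against the spectral measure of $H'_\omega$ then forces $0$ to be an accumulation point of $\spec(H'_\omega)$, i.e.\ $H'_\omega$ is gapless.

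The candidates I would use are exactly the states $\ket{\phi_N}$ of \eqref{eq:phiNlimit-def}, which are the natural thermodynamic analogs of the ``boundary-wall'' states of Lemma~\ref{lemma:lowenergystates}: each $\ket{\zeta_{i,j}}$ is a local perturbation of $\omega_A$ obtained by inserting an $R$ at site $i$ and an $L$ at site $j$ with a region of $B$'s in between, and hence lies in $S_{i,j}\subset S$. The three estimates to establish, in direct parallel to the finite-volume lemma, are $\braket{\phi_N}{\phi_N}=\Theta(N^2)$ for almost every $P$, $|\braket{\Omega_A}{\phi_N}|=O(e^{-N})$, and $\bra{\phi_N}H'_\omega\ket{\phi_N}=O(N)$.

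Each of these reduces to the same diagrammatic transfer-operator computation as in the proof of Lemma~\ref{lemma:lowenergystates}, the only substantive change being that the periodic trace is replaced by contraction with the fixed points $\I$ and $\Lambda_A$ of $E_A^A$ that encode $\omega_A$ via Lemma~\ref{lemma:exptransferoperator}. For the norm, the series $\sum_{n=0}^N (N-n)(E_B^A)^n$ again produces the dominant $\Theta(N)$ factors in $\Xi_R$ and $\Xi_L$, and the vanishing locus of the resulting quadratic coefficient in $R$ (resp.\ $L$) is a proper algebraic subvariety, giving the almost-sure statement. For the overlap with $\Omega_A$ every diagram forces $(E_B^A)^{2N}$ to appear, which decays exponentially by Lemma~\ref{lemma:offdiagonal-damping}, while only $O(N^2)$ summands contribute. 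For the energy, only the $O(1)$ local uncle terms whose support intersects the movable regions of the $R$ or $L$ wall can contribute, and each receives input from at most $O(N)$ summands in \eqref{eq:phiNlimit-def}.

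Combining the three estimates, the normalized vectors $\ket{\hat\phi_N}=\ket{\phi_N}/\|\phi_N\|$ satisfy $|\braket{\Omega_A}{\hat\phi_N}|\to 0$ and $\bra{\hat\phi_N}H'_\omega\ket{\hat\phi_N}=O(1/N)\to 0$. Since $\C\Omega_A$ is the entire ground space of $H'_\omega$, the existence of a spectral gap $\delta>0$ would imply $\bra{\hat\phi_N}H'_\omega\ket{\hat\phi_N}\ge \delta(1-|\braket{\Omega_A}{\hat\phi_N}|^2)\to\delta$, contradicting the vanishing Rayleigh quotients; hence $H'_\omega$ is gapless. The main obstacle will not be the algebra, which is essentially a repetition of Lemma~\ref{lemma:lowenergystates}, but rather the thermodynamic bookkeeping: one has to verify that contracting the ``open ends'' of the MPS diagrams against the fixed points of $E_A^A$ really reproduces the same leading asymptotics as the periodic trace used at finite volume, and that the $\ket{\phi_N}$ indeed represent honest elements of $\bar S$ rather than only formal tensor-network expressions. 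This is precisely where the GNS construction of Appendix~\ref{appendix:GNS} carries the argument.
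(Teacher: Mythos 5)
Your proposal is correct and follows essentially the same route as the paper: it uses the same boundary-wall states of \eqref{eq:phiNlimit-def}, transfers the three estimates of Lemma~\ref{lemma:lowenergystates} to the GNS space, and concludes via the spectral measure of $H'_\w$ exactly as in Appendix~\ref{appendix:spectraltheorem}. The only cosmetic difference is that you phrase the final step as a direct gap-contradiction with the one-dimensional ground space $\C\Omega_A$, whereas the paper subtracts $E(\{0\})\ket{\phi_N}$ and argues with an arbitrary $r>a$; these are the same argument.
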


\begin{proof}
We consider the operator $H'_\w$, which models the thermodynamic limit of $H'_P$ with boundary conditions described by the $A$ tensor --i.e., taking $\w=\w_A$.

The bounds from Lemma \ref{lemma:lowenergystates} also apply to the states in (\ref{eq:phiNlimit-def}), and use of the spectral theorem for unbounded operators (Appendix~\ref{appendix:spectraltheorem}) leads to the fact that $H'_\w$ is gapless.
\hspace*{\fill}\qed
\end{proof}

This last result shows the existence of a sequence of elements in $\sigma(H'_\w)$ tending to 0
. They will allow us to prove that the spectrum is the entire positive real line, for which we need first the following result.

\begin{proposition} \label{summingeigenvalues}

If some values $\{\lambda_1, \dots , \lambda_n\}$  lie in the spectrum of $H'_\w$ then the sum $\sum_i \lambda_i$ also lies in the spectrum of $H'_\w$.
\end{proposition}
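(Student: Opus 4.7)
The plan is to use the Weyl-sequence characterization from Proposition~\ref{prop:finitelysupportedWeylseqs} together with the locality of $H'_\omega$ and cluster-type properties of the GNS vacuum $\omega$ in order to \emph{glue together} approximate eigenvectors at $\lambda_1,\dots,\lambda_n$ into an approximate eigenvector at $\sum_i\lambda_i$. The starting point is that, for each $i$, Proposition~\ref{prop:finitelysupportedWeylseqs} supplies a normalized Weyl sequence $\ket{\varphi_i^{(k)}}\in S_{a_i,b_i}$ with $\|(H'_\omega-\lambda_i I)\ket{\varphi_i^{(k)}}\|\to 0$. By definition of $S_{i,j}$, each $\ket{\varphi_i^{(k)}}$ is obtained from $\omega$ by replacing the background $A$-tensors on a finite interval by some local tensor $X_i^{(k)}$.

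Using translation invariance of $\omega$, I would shift the $\ket{\varphi_i^{(k)}}$ so that their supports become pairwise disjoint and separated by gaps $M_k\to\infty$; call the translates $\ket{\tilde\varphi_i^{(k)}}$. Then I would define
\[
\ket{\Phi_k}\in S
\]
by inserting \emph{all} the modifications $X_i^{(k)}$ simultaneously at their shifted positions, keeping background $A$-tensors in between. Lemma~\ref{lemma:exptransferoperator} applied to each gap allows $\braket{\Phi_k}{\Phi_k}$ to factorize: each block of $M_k$ consecutive transfer operators $(E_A^A)^{M_k}$ collapses to its rank-one fixed-point projector up to $O(e^{-M_k})$, giving $\|\Phi_k\|^2=\prod_i\|\tilde\varphi_i^{(k)}\|^2+O(e^{-M_k})=1+O(e^{-M_k})$.

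For the energy I would split $H'_\omega=\sum_j h'_j$ into: (i) terms whose support lies entirely in a background region, which annihilate $\ket{\Phi_k}$ by frustration-freeness of $\omega$; and (ii) terms supported inside a single modification region. By thickening each modification interval by a constant number of sites (which does not change $\ket{\tilde\varphi_i^{(k)}}$), every $h'_j$ falls into class (i) or (ii). For a type~(ii) term in the $i$-th region, the far-away modifications contribute to its action only through transfer operators $(E_A^A)^{M_k}$, so up to an $O(e^{-M_k})$ error the action of $h'_j$ on $\ket{\Phi_k}$ is the same as on $\ket{\tilde\varphi_i^{(k)}}$. Summing over $j$ and $i$ yields
\[
\Bigl\|\Bigl(H'_\omega-\sum_{i=1}^n\lambda_i I\Bigr)\ket{\Phi_k}\Bigr\|
\le \sum_{i=1}^n \bigl\|(H'_\omega-\lambda_i I)\ket{\tilde\varphi_i^{(k)}}\bigr\| + O(e^{-M_k}),
\]
which tends to $0$. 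Proposition~\ref{prop:finitelysupportedWeylseqs} then gives $\sum_i\lambda_i\in\spec(H'_\omega)$.

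The main obstacle I expect is the bookkeeping in the energy estimate: showing rigorously that a local term $h'_j$ acting inside the $i$-th modification region produces the ``same'' contribution whether or not the other, distant modifications are present. This is morally a consequence of Lemma~\ref{lemma:exptransferoperator} (the gaps of size $M_k$ contract through $(E_A^A)^{M_k}\to$ fixed-point projector with exponential accuracy), but the careful treatment of terms that straddle the boundary of a modification region, and checking that the total error still vanishes after summing $O(1)$ such boundary terms for each of $n$ regions, is where the real work lies.
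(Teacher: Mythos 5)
Your proposal is correct and follows essentially the same route as the paper: translate the Weyl sequences for the $\lambda_i$ so their modified regions are far apart, concatenate them into a single state $\ket{\Phi_k}$, use Lemma~\ref{lemma:exptransferoperator} to factorize norms and overlaps up to exponentially small errors, and split $H'_\omega\ket{\Phi_k}$ into per-region contributions bounded by the triangle inequality. The ``bookkeeping'' you flag as the remaining work is exactly what the paper carries out explicitly when expanding $\|\ket{\Phi'_{k,a}}-a\ket{\Phi_k}\|^2$ and showing it equals $\|(H'_\omega-a)\ket{\varphi_{a,k}}\|^2+O(e^{-k})$.
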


The proof of this result, which can be also found  in Appendix~\ref{appendix:GNS}, stems on the fact that one can find states in $S$, and not necessarily in $\bar S$, `evidencing' a given value is a spectral value. These states can be `concatenated' to prove that any sum of spectral values known to exist is also a spectral value.

\begin{theorem} \label{thm:spectrumuncleghz} The spectrum of the uncle Hamiltonian $H'_P$ in the thermodynamic limit is the whole positive real line for almost every perturbation $P$. 
\end{theorem}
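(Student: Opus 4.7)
The plan is to combine positivity of $H'_\w$ with the two structural facts we have just established about its spectrum: the existence of a sequence of approximate eigenvalues tending to $0$ (Theorem \ref{thm:limitunclegapless}), and closure of $\spec(H'_\w)$ under finite sums (Proposition \ref{summingeigenvalues}). Together with the elementary fact that the spectrum of a self-adjoint operator is closed in $\R$, this will force $\spec(H'_\w)=[0,+\infty)$.

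First I would observe that each local term $h'_P=\Pi[U]$ is an orthogonal projector, so $H'_\w\ge 0$ in the GNS representation; hence $\spec(H'_\w)\subseteq [0,+\infty)$. Next, Theorem \ref{thm:limitunclegapless} (via Proposition \ref{prop:finitelysupportedWeylseqs} applied to the states $\ket{\phi_N}/\|\ket{\phi_N}\|$ of Eq.~(\ref{eq:phiNlimit-def}), which have energy $O(1/N)$ and are asymptotically orthogonal to $\ket{M(A)}$ and $\ket{M(B)}$) gives a sequence $\{\mu_n\}\subset \spec(H'_\w)$ with $\mu_n>0$ and $\mu_n\to 0$.

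Now fix any $r>0$. For each $n$ large enough that $\mu_n<r$, write $r = m_n\mu_n + \delta_n$ with $m_n=\lfloor r/\mu_n\rfloor \in \N$ and $0\le\delta_n<\mu_n$. By Proposition \ref{summingeigenvalues} applied to the finite family consisting of $\mu_n$ repeated $m_n$ times, the value $m_n\mu_n$ lies in $\spec(H'_\w)$; and $m_n\mu_n\to r$ since $\delta_n\le\mu_n\to 0$. Because $\spec(H'_\w)$ is closed, $r\in\spec(H'_\w)$. The case $r=0$ is immediate from $\mu_n\to 0$. Combined with the containment $\spec(H'_\w)\subseteq[0,+\infty)$, this proves $\spec(H'_\w)=[0,+\infty)$.

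The one place that needs a little care, and which I see as the main obstacle, is ensuring that Proposition \ref{summingeigenvalues} genuinely applies for arbitrarily many summands: the error bound obtained in its proof is $O(n/k)$, so $n=m_n$ is allowed to grow but we must let $k\to\infty$ \emph{after} fixing $n=m_n$. This is fine because, for each $n$ separately, Proposition \ref{summingeigenvalues} yields a Weyl sequence for the single spectral value $m_n\mu_n$, and only afterwards do we let $n\to\infty$ to approximate $r$ and invoke closedness of the spectrum. Passing from the thermodynamic-limit statement to the remark about finite sizes is then straightforward: the sequence of finite-volume approximate eigenstates used to build the Weyl sequences can be truncated to finite chains (padded with the $A$ boundary conditions implicit in $\w_A$), showing that for any $r\ge 0$ and any $\veps>0$ the finite-volume spectra contain a point within $\veps$ of $r$ for all sufficiently large chain length.
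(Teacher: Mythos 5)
Your proposal is correct and follows essentially the same route as the paper: extract a sequence of spectral values tending to $0$ from gaplessness, use Proposition \ref{summingeigenvalues} to close the spectrum under finite sums, note that finite sums of a null sequence are dense in $\R^+$, and conclude by closedness of the spectrum together with positivity. Your explicit handling of the order of limits (fixing the number of summands $m_n$ before sending $k\to\infty$ in the $O(n/k)$ bound) is a worthwhile clarification of a point the paper leaves implicit, but it is not a different argument.
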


\begin{proof}
The set of finite sums of any sequence of real numbers tending to 0 is dense in $\R^+$. Since there exists a sequence of elements in $\sigma(H'_\w)$ tending to 0 --as it can be deduced from Theorem \ref{thm:limitunclegapless}-- and any finite sum of these elements lies also in $\sigma(H'_\w)$, which is closed, this last spectrum must be equal to $\R^+$. Therefore the spectrum of $H'_P$ is the whole positive real line in the thermodynamic limit.
\hspace*{\fill}\qed
\end{proof}

\subsection*{Spectra for finite chains} After this discussion on thermodynamic limit Hamiltonians we need to go back to the finite chains, and study how the spectra of the uncle Hamiltonians constructed on finite spin chains tend to be dense in $\R^+$ as the size of the chain grows. 

Given $i<j$, $S_{i,j}$ can be easily mapped to any finite chain of lenght $2N+1$ for $N>\max \{i,j\}$ via
\[
\begin{array}{cccc}
e_{N}:&S_{i,j}&\rightarrow& \mathcal{H}_{2{N+1}} \\
 & \figgg{0.17}{GNSvector4} & \mapsto & \figgg{0.18}{GNSvector4embedded}
\end{array},
\]
and this family of maps capture important information since they tend to be isometric embeddings.

\begin{lemma}\label{lemma:almostembeddings}
For fixed $i$ and $j$, $e_N$ is an isometry up to a correction $O(e^{-N})$.
\end{lemma}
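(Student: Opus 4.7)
The plan is to rewrite both inner products on $S_{i,j}$---the one induced on the finite chain via $e_N$ and the GNS inner product with respect to $\omega=\omega_A$---as traces involving the transfer operator $E_A^A$, and then to apply Lemma~\ref{lemma:exptransferoperator} to control their difference.

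First I write an arbitrary vector of $S_{i,j}$ as $v_X$, with $X$ the tensor living on sites $i,\ldots,j$ whose virtual legs match those of $A$. Translating the diagrammatic definition of $e_N(v_X)$ into an MPS trace and using cyclicity of the trace to gather all $2N+i-j$ surrounding bulk transfer operators, one obtains
\[
\langle e_N(v_X),\,e_N(v_Y)\rangle \;=\; \trace\!\bigl[E_X^Y\,(E_A^A)^{2N+i-j}\bigr],
\]
where $E_X^Y$ is the transfer operator obtained by contracting the physical indices of $X$ (on top) with those of $Y$ (on bottom). In the GNS representation, the infinite $A$-chain on both sides of the $X$-insertion is absorbed into the vacuum $\Omega$ through $\omega_A$, and the corresponding virtual contraction is closed on the left by the fixed point $\Lambda_A$ and on the right by $\I$, yielding
\[
\langle v_X,\,v_Y\rangle_{\omega} \;=\; \langle \Lambda_A|\,E_X^Y\,|\I\rangle,
\]
i.e.\ the same expression with $(E_A^A)^{2N+i-j}$ replaced by the rank-one projector $|\I\rangle\langle \Lambda_A|$.

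To conclude, I invoke Lemma~\ref{lemma:exptransferoperator}, which says precisely that $(E_A^A)^k = |\I\rangle\langle \Lambda_A| + O(e^{-k})$. Substituting $k=2N+i-j$ and subtracting the two displayed expressions gives
\[
\bigl|\langle e_N(v_X),\,e_N(v_Y)\rangle - \langle v_X,\,v_Y\rangle_{\omega}\bigr| \;=\; O(e^{-N})
\]
for fixed $i,j$ and fixed $X,Y$, which is exactly the claim. The computation itself is routine; the only conceptually non-trivial point is the identification of the GNS boundary conditions with the fixed-point projector of $E_A^A$---i.e.\ the fact that contracting the infinite $A$-chain in $\omega_A$ is literally the same as inserting $|\I\rangle\langle\Lambda_A|$---which is the step that the GNS construction in the appendix is designed to make automatic, and is where I expect most of the bookkeeping to live when writing out the full argument.
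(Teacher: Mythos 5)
Your proof is correct and follows essentially the same route as the paper, which simply cites Lemma~\ref{lemma:exptransferoperator}: both inner products reduce to $\trace[E_X^Y\,(E_A^A)^{2N+i-j}]$ versus the same expression with the fixed-point projector $|\I\rangle\langle\Lambda_A|$ inserted, and the lemma bounds the difference by $O(e^{-N})$ for fixed $i,j$ (uniformity over the unit ball of $S_{i,j}$ following from its finite dimension). The identification of the GNS inner product with the fixed-point contraction, which you rightly flag as the only conceptually nontrivial step, is exactly how $\omega_A$ is defined in Appendix~\ref{appendix:GNS}, so nothing is missing.
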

\begin{proof}
It follows from Lemma \ref{lemma:exptransferoperator}.
\hspace*{\fill}\qed
\end{proof}

Note that in the case of $G$-isometric MPS \cite{schuch:peps-sym}, these maps are isometries.

Finally, given a spectral value $\lambda$ in the thermodynamic limit spectrum, and a vector $\ket{\varphi}$ in some $S_{i,j}$ `close enough' to evidence $\lambda$ is in this spectrum, one can find that $e_N(\ket{\varphi})$ is also close to evidence $\lambda$ is in the spectrum for the finite chain of length $2N+1$. Clearly we cannot conclude $\lambda$ is this spectrum, but we can find some spectral value not far from it.

One can follow the same procedure for several spectral values $j\lambda$, $j=1,\dots,n$, and by choosing small values of $\lambda$ and carefully handling all these `closenesses', we can find our final result for spectra for finite sized chains, whose proof is detailed also in Appendix~\ref{appendix:GNS}.

\begin{theorem}\label{thm:finitesizespectra}
The spectra of the uncle Hamiltonians for finite size chains tend to be dense in the positive real line.
\end{theorem}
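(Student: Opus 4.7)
The plan is to combine the previous two lemmas in a straightforward way. Given any target $\mu\in\R^+$ and any tolerance $\veps>0$, I want to produce, for all sufficiently large chain lengths, a point of the spectrum within $\veps$ of $\mu$. The two ingredients are: Lemma \ref{lemma:finiteappeigenvectors}, which supplies approximate eigenvectors at the values $j\lambda$, $j=1,\dots,n$, with arbitrarily small error on any sufficiently long chain; and Lemma \ref{lemma:finiteappeigenvalues}, which converts an approximate eigenvector into a genuine spectral point nearby.

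First, I would pick $\lambda>0$ small enough and $n\in\N$ large enough that the arithmetic progression $\lambda,2\lambda,\dots,n\lambda$ comes within $\veps/2$ of $\mu$; concretely, taking $\lambda<\veps/2$ and $n=\lceil \mu/\lambda\rceil$ guarantees that some $j_0\lambda$ with $1\le j_0\le n$ satisfies $|j_0\lambda-\mu|<\veps/2$. Next, I would invoke Lemma \ref{lemma:finiteappeigenvectors} with this $\lambda$, this $n$, and with $\delta=\veps/2$, obtaining an integer $N_0$ such that, for every finite chain of length $2N+1\ge 2N_0+1$ and for every $j=1,\dots,n$, there is a normalized vector $\ket{\varphi_{j\lambda,\delta}}$ in the state space of that chain with
\[
\big\|(H'_{P}-j\lambda\,\I)\ket{\varphi_{j\lambda,\delta}}\big\|<\delta.
\]

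Then I would apply Lemma \ref{lemma:finiteappeigenvalues} to the self-adjoint finite-chain Hamiltonian $H'_P$ with the approximate eigenvector at $j_0\lambda$; this produces an element of $\sigma(H'_P)$ inside $(j_0\lambda-\delta,j_0\lambda+\delta)$. Combining with $|j_0\lambda-\mu|<\veps/2$ and $\delta=\veps/2$ via the triangle inequality, this spectral point lies within $\veps$ of $\mu$. Since $\mu\in\R^+$ and $\veps>0$ were arbitrary, the spectra of the uncle Hamiltonians on finite chains become dense in $\R^+$ as the chain length grows.

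I do not expect any serious obstacle: all the technical work (producing the approximate eigenvectors on long finite chains via the embedding $e_N$ of Lemma \ref{lemma:almostembeddings}, controlling the error in Lemma \ref{lemma:finiteappeigenvectors}, and turning approximate eigenvectors into spectral points in finite dimension via Lemma \ref{lemma:finiteappeigenvalues}) has already been done. The only care needed is in the ordering of quantifiers, namely first fixing $\mu$ and $\veps$, then choosing $\lambda,n,\delta$ so that the two $\veps/2$ errors add up correctly, and only afterwards invoking Lemma \ref{lemma:finiteappeigenvectors} to obtain the threshold $N_0$ that is uniform in $j\le n$.
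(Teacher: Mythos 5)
Your proposal is correct and follows essentially the same route as the paper: both combine Lemma \ref{lemma:finiteappeigenvectors} (approximate eigenvectors at the multiples $j\lambda$ on all sufficiently long chains) with Lemma \ref{lemma:finiteappeigenvalues} (converting each into a genuine spectral point within $\delta$), the only difference being that the paper phrases the choice of parameters as a grid $j/m$, $j=1,\dots,Lm$, refined by letting $L,m\to\infty$, whereas you fix a target $\mu$ and tolerance $\veps$ up front. Your handling of the quantifiers is sound and matches the intended argument.
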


\section{Uncle Hamiltonians for injective spin chains}\label{sec:injectiveMPS}

In this last section, just for completeness of the picture, we will sketch how one can also construct examples of gapless Hamiltonians for injective MPS. As we have seen, in the injective case the parent Hamiltonian construction is robust against perturbations in the matrices defining the MPS. Therefore, in order to construct such examples, one needs to leave the canonical MPS representation. Following the ideas of Sec. ~\ref{sec:uncle-for-000}, we will get such examples by considering particular perturbations of MPS representations with repeated blocks.

To this end, let us start
from an MPS $\ket{M(A)}$ with injective tensor $A$, and let
\[
C=\left( \begin{array}{cc} A & 0\\ 0 & A \end{array} \right)\ .
\]
Then, $\ket{M(C)}=\ket{M(A)}$ denotes another MPS description of the same
state.  We can now consider a perturbation $C+\varepsilon\left(
\begin{array}{cc} P & R\\ R & P \end{array} \right)$ and construct the
corresponding uncle Hamiltonian $H'$. Note that not any perturbation would lead to the same type of result. The tensor $C$ has additional symmetries since both diagonal blocks are the same, and therefore some symmetries are also needed for the perturbation.

The following is the result for the
second type of perturbation.

\begin{theorem}

Let $A$ be the injective tensor description of a given MPS, and let us consider this MPS as described by the non-injective tensor 
\begin{equation*}
\left(
\begin{array}{cc}
A & 0 \\
0 & A 
\end{array}
\right).
\end{equation*} Given a perturbation tensor $C=\left(
\begin{array}{cc} P & R\\ R & P \end{array} \right)$ such that $(\begin{array} {ccc} A & R  \end{array})$ is injective the ground space of the uncle Hamiltonian $H'$  for finite chains is spanned by
\begin{equation}
\raisebox{0.4em}{\figgg{0.18}{AA_A2}} \  \mathrm{ and }  \sum_{\mathrm{pos R}} \figgg{0.18}{ARA_A}\ .
\end{equation}

This uncle Hamiltonian is local, frustration free, and gapless. The spectra of the finite chains tend to be dense in $\R^+$. In the thermodynamic limit the ground space collapses to a one-dimensional space, which is exactly the ground space of the thermodynamic limit of the parent Hamiltonian, and the spectrum is the whole positive real line $\R^+$.

\end{theorem}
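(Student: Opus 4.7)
The plan is to follow Sections~\ref{sec:main} and~\ref{sec:uncle-for-000}, adapted to the two special features of this setting: both diagonal blocks of $C$ equal $A$, and the perturbation has matching $L$ and $R$ blocks. First, expanding $C_\varepsilon=C+\varepsilon\bigl(\begin{smallmatrix}P&R\\R&P\end{smallmatrix}\bigr)$, contracting two sites, rescaling the off-diagonals by $1/\varepsilon$ via Lemma~\ref{lemma:span-gaugeinvariance}, and invoking Lemma~\ref{lemma:span-continuity} yields the uncle tensor
\[
U=\begin{pmatrix} A\cont A & A\cont R+R\cont A\\ A\cont R+R\cont A & A\cont A\end{pmatrix},
\]
so $h'=\Pi[U]$ is well-defined, local, and frustration free. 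Injectivity of $U$ is obtained by adapting Lemma~\ref{lemma:inj-perturbation}: the hypothesis that $(A\;R)$ is injective delivers, through condition~6 of Lemma~\ref{lemma:inj-consequences}, vectors that independently generate any matrix in each block of $U$ while zeroing the others.

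For the ground space on finite chains the intersection/closure argument of Theorem~\ref{thm:gs-subspace-uncle} applies essentially verbatim, now with $S_n=A_n+\tilde R_n$, where $A_n$ is the space of $\ket{M(A)}$-type states with arbitrary boundary matrix and $\tilde R_n$ is the span of single-$R$ insertions $\sum_k\tr[A\cdots R^{(k)}\cdots A\,X]$ summed over position. Injectivity of $(A\;R)$ drives the same algebraic manipulation as in Lemma~\ref{lemma:intersection} to establish $S_n\otimes\C^d\cap\C^d\otimes S_n=S_{n+1}$, and the periodic closure keeps both summands (both are translation invariant), producing the $2$-dimensional finite-chain ground space stated in the theorem. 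The collapse in the thermodynamic limit is the genuinely new step: a transfer-operator computation based on Lemma~\ref{lemma:exptransferoperator} shows $\|\ket{W_N}\|^2=\Theta(N^2)$, and that for any quasi-local observable $O$ supported on a fixed window
\[
\frac{\langle W_N|O|W_N\rangle}{\langle W_N|W_N\rangle}=\frac{\langle M(A)|O|M(A)\rangle}{\langle M(A)|M(A)\rangle}+O(1/N),
\]
because only an $O(1/N)$ fraction of the summands in $\ket{W_N}$ place $R$ inside the support of $O$, while the remaining summands produce the $\ket{M(A)}$ expectation value up to $O(e^{-N})$. In the GNS representation (Appendix~\ref{appendix:GNS}) the normalized sequences $\hat W_N=\ket{W_N}/\|\ket{W_N}\|$ and $\ket{M(A)}/\|\ket{M(A)}\|$ therefore induce the same state on the quasi-local algebra, and the ground space of $H'_\omega$ collapses to the one-dimensional parent ground space.

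Gaplessness, density of finite-chain spectra, and the identification $\sigma(H'_\omega)=\R^+$ follow by the three-step scheme of Section~\ref{sec:uncle-spectrum}. Momentum-carrying single-$R$ states $\ket{\varphi_{N,k}}=\sum_j e^{2\pi ijk/N}\ket{A\cdots R^{(j)}\cdots A}$ are orthogonal to the single-$R$ ground state for $k\ne0$ and have energy $\Theta(k^2/N^2)$, establishing gaplessness for every finite chain and, after the GNS embedding, for $H'_\omega$. The concatenation construction of Proposition~\ref{summingeigenvalues} then places every finite sum of these spectral values in $\sigma(H'_\omega)$, and the density of such sums in $\R^+$ together with closedness of the spectrum gives $\sigma(H'_\omega)=\R^+$. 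Finally Lemmas~\ref{lemma:almostembeddings}--\ref{lemma:finiteappeigenvalues} lift these approximate eigenvectors back to finite chains to give density of the finite-chain spectra. The main obstacle will be the collapse step: unlike in Theorem~\ref{thm:gs-subspace-uncle} the finite-chain ground space is not preserved in the thermodynamic limit, and establishing the equivalence of $\hat W_N$ and $\ket{M(A)}/\|\ket{M(A)}\|$ on the quasi-local algebra uses the exponential decay of correlations of the injective tensor $A$ in an essential way.
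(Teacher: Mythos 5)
Your plan follows the same route the paper takes (the paper itself only sketches this proof, deferring to Sections~\ref{sec:main} and~\ref{sec:uncle-for-000}): the symmetric uncle tensor, the intersection/closure argument yielding the two-dimensional finite-chain kernel spanned by $\ket{M(A)}$ and the position-summed single-$R$ state, the single-$R$ momentum states with norm $\Theta(N)$ and energy expectation $O(1/N)$ for gaplessness, the concatenation argument of Proposition~\ref{summingeigenvalues} for $\sigma(H'_\omega)=\R^+$, and the weak-$*$ identification of $\ket{W_N}/\|W_N\|$ with $\omega_A$ for the thermodynamic collapse. The only part the paper writes out explicitly is the transfer-operator estimate for $\braket{\varphi_k}{\varphi_k}$ and $\bra{\varphi_k}H'\ket{\varphi_k}$, and your estimates agree with it.

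One step is misstated, and it is precisely the point where this case differs from Section~\ref{sec:main}: the tensor $U=\left(\begin{smallmatrix}A\cont A & A\cont R+R\cont A\\ A\cont R+R\cont A & A\cont A\end{smallmatrix}\right)$ is \emph{never} injective, because its two diagonal blocks (and its two off-diagonal blocks) are identical; the block matrix $X=\I\oplus(-\I)$ is annihilated by $X\mapsto\sum_i\tr[U_iX]\ket{i}$, and no choice of $\ket a$ can produce $X$ in the $(1,1)$ block while giving $0$ in the $(2,2)$ block, so condition 5 of Lemma~\ref{lemma:inj-consequences} cannot hold. This is exactly the ``blocks with multiplicity larger than one'' caveat the paper flags at the start of Section~\ref{sec:main}, and it is why the perturbation must be taken with the swap symmetry ($L=R$ and equal diagonal blocks): with that symmetry $\mc L_\veps(C_\veps\cont C_\veps)$ has only two distinct blocks, its span is the sum of the spans of those two blocks, and what Lemma~\ref{lemma:span-continuity} actually needs is injectivity of the \emph{reduced} tensor $\bigl(A\cont A \ \ \ A\cont R+R\cont A\bigr)$, with virtual dimensions $D\times 2D$ — which does follow from injectivity of $(A\ \ R)$ by the argument of Lemma~\ref{lemma:inj-perturbation}. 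Since $\Pi[U]$ (the projector onto $\spann\{U\}^\perp$) is perfectly well defined for non-injective $U$ and equals the projector built from the reduced tensor, the conclusion $h'=\Pi[U]$ survives; the same substitution (reduced tensor in place of the $2\times2$ block tensor) has to be made in the intersection/closure step. This is a local repair rather than a structural flaw, and the remainder of your argument — in particular the $\Theta(N^2)$ norm of the zero-momentum $W$-type state and the locality argument for the collapse onto $\omega_A$ — is the intended one.
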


One can check this result following essentially the steps from the preceeding sections. In a chain with length $N$, one can also consider as low energy states the states with momentum
$\ket{\varphi_k}=\sum_j e^{2jk\pi i/N} \ket{\zeta_j}$, with
\begin{equation}
\ket{\zeta_j}=\figgg{0.18}{injj}\ .
\end{equation}

The calculations related to states with momentum can be reduced considerably, since $\braket{\varphi_k}{\varphi_k}=N\braket{\zeta^1}{\varphi_k}$ and $\bra{\varphi_k}H'\ket{\varphi_k}=
N\bra{\zeta_1}H'\ket{\varphi_k}$.

When $N$ is large enough (and odd), we have the approximation
\begin{gather*}
\braket{\varphi_k}{\varphi_k}=N\braket{\varphi^1}{\varphi_k}\sim \\
\sim N\left( \figgg{0.18}{sum1} +\figgg{0.18}{sum21} \left(\sum_{n=0}^{\frac{N-1}{2}} e^{kn\frac{2\pi i}{N}} \figgg{0.18}{AA2}^n \right) \figgg{0.18}{sum22inj}  + \figgg{0.18}{sum31} \left(\sum_{n=0}^{\frac{N-1}{2}} e^{kn\frac{2\pi i}{N}} \figgg{0.18}{AA2}^n \right) \figgg{0.18}{sum32inj}\right)=\\
 =
N\left( \figgg{0.18}{sum1} +\figgg{0.18}{sum21} \frac{(e^{k\frac{2\pi i}{N}}E^A_A)^{\frac{N+1}{2}}-\I}{e^{k\frac{2\pi i}{N}}E^A_A-\I}\figgg{0.18}{sum22inj}  + \figgg{0.18}{sum31}  \frac{(e^{k\frac{2\pi i}{N}}E^A_A)^{\frac{N+1}{2}}-\I}{e^{k\frac{2\pi i}{N}}E^A_A-\I}\figgg{0.18}{sum32inj}\right)
\end{gather*}

\noindent where $E^A_A=\figgg{0.18}{AA2}$. The expression multiplying $N$ is either divergent or convergent to a constant different from 0 for almost every $R$. Therefore, $\braket{\varphi_k}{\varphi_k}=\Theta(N)$.

The value of $\bra{\varphi_k}H'\ket{\varphi_k}$ can also be approximated as
\begin{gather*}
\bra{\varphi_k}H'\ket{\varphi_k}=
N\bra{\zeta_1}H'\ket{\varphi_k}\sim \\
\sim 
N\left(\figgg{0.18}{hinj1}+
e^{k\frac{2\pi i}{N}}\figgg{0.18}{hinj2}+
e^{-k\frac{2\pi i}{N}}\figgg{0.18}{hinj3}+
\figgg{0.18}{hinj4}\right)=O(1/N)
\end{gather*}

Hence, there can be found low energy states from this family of states. They are orthogonal to the ground space, and can be used to follow the steps in the previous sections in order to prove that the uncle Hamiltonian is gapless and has spectrum $\R^+$ for most perturbations also for injective MPSs.

\section{Conclusions}

In this work we have shown how to construct new Hamiltonians for Matrix Product States, which we called 'uncle' Hamiltonians. These uncle Hamiltonians  share some of the properties of the parent Hamiltonian (frustration freeness, ground space) but have a completely different spectrum. Instead of having an energy gap above the zero energy space, they are gapless and their spectrum is $[0,\infty)$. We have shown how these uncle Hamiltonians are obtained by doing  linear perturbations on the matrices defining the MPS, considering the parent Hamiltonian of the perturbed MPS and then making the perturbation tend to zero. When the MPS are block-injective, the perturbations leading to an uncle Hamiltonian are essentially all that do not respect the block structure. The uncle Hamiltonian construction might have implications in the problem of classifying quantum phases of matter, due to the role the spectral gap has in this problem, and can provide new tools in order to get more precise classifications. The distinction between {\it good} directions for the perturbation (those preserving the block structure for which we recover the original gapped parent Hamiltonian) and {\it bad} directions (those incompatible with the symmetry and then leading to an uncle) seems to be of upmost importance in the analysis of the robustness of topological quantum phases in 2D \cite{fernandez:toric-code,robustness-2D}. Notice that the topological character of a PEPS -the 2D generalization of an MPS- is given exactly by the existence of such a discrete symmetry \cite{schuch:peps-sym}.

\section*{Acknowledgements}
We specially thank Bruno Nachtergaele for his help.  This work has been
partially funded by the Spanish grants MTM2011-26912
and QUITEMAD, the European projects QUEVADIS and CHIST-ERA CQC, the Alexander
von Humboldt Foundation, the Gordon and Betty Moore Foundation through
Caltech's Center for the Physics of Information, the NSF Grant No.\
PHY-0803371, and the ARO Grant No.\ W911NF-09-1-0442.  We also acknowledge
the hospitality of the Centro de Ciencias de Benasque Pedro Pascual and of
the Perimeter Institute, where part of this work was carried out.

\providecommand{\bysame}{\leavevmode\hbox
to3em{\hrulefill}\thinspace} \providecommand{\andname}{and }

\appendix

\rule{\textwidth}{0.2mm}

\bigskip
\bigskip

\section{Proof of the intersection and closure properties
\label{appendix:intersection-closure}}

\begin{proof}[Intersection property, Lemma~\ref{lemma:intersection}]
We start by proving $S_2\otimes \C ^d \cap \C^d\otimes S_2= S_3$. The
proof will straightforwardly generalize to $S_k\otimes \C ^d \cap
\C^d\otimes S_k= S_{k+1}$, $k>2$.  

Let us first show that $S_2\otimes \C ^d \cap \C^d\otimes S_2 \supset
S_3$.  To this end, let $\ket\varphi\in S_3$, i.e., there exist $X$, $Y$,
$Z$, and $W$ such that
\begin{equation}\label{eq:vector}
\ket{\varphi}= \ \figgg{0.15}{AAAX}\ + \
\ \figgg{0.15}{BBBY} \ +\ 
    \sum _{\mathrm{pos\,R}}  \figgg{0.15}{ARBZ}\ +\
 \sum_{\mathrm{pos\,L}} \figgg{0.15}{BLAW}\ ,
\end{equation}
where the sums run over the three possible positions of $R$ and $L$,
respectively. If we now define
\begin{equation}\label{eq:Xprime-decomposition}
\figgg{0.15}{XA}\ +\ \figgg{0.15}{ZR}=
    \figgg{0.15}{Xp2}\  ,
\end{equation}
we have that
\[
\figgg{0.15}{AAAX}\ +\ \figgg{0.15}{AARZ}=
    \figgg{0.15}{AAXp}\ \in A_2\otimes \C^d\ ,
\]
and similarly
\begin{gather*}
\figg{0.15}{Dib/BBBY}\ + \  \figg{0.15}{Dib/BBLW}\in B_2 \otimes \C^d\ ,
\\[1em]
\displaystyle\sum_{\mathrm{pos\, R} \neq 3} \figg{0.15}{Dib/ARBZ}\ \in
R_2\otimes \C^d\ ,
\mbox{ and } \displaystyle\sum_{\mathrm{pos\, L} \neq 3}
\figg{0.15}{Dib/BLAW}\ \in L_2\otimes
\C^d\ ,
\end{gather*}
showing that $\ket{\varphi}\in S_2\otimes \C^d$. Similarly, one can show
that $\ket{\varphi}\in \C^d\otimes S_2$, proving that
$S_2\otimes \C ^d \cap \C^d\otimes S_2 \supset S_3$.

Let us now show that conversely, $S_2\otimes \C ^d \cap \C^d\otimes S_2
\subset S_3$. To this end, let $\ket\varphi \in S_2\otimes \C ^d \cap
\C^d\otimes S_2$, i.e., there exist tensors
$X$, $Y$, etc., such that 
\begin{equation}
\label{eq:appA:vectorinintersection}
\begin{aligned} 
\ket{\varphi} 
&=
\figgg{0.15}{AAbigX}\ + \ \figgg{0.15}{BBbigY}\ +\ 
	 \figgg{0.15} {ARbigZ}\ + \cdots\\[1ex]
&\hspace*{1.9cm}\dots+\ \figgg{0.15} {RBbigZ} 
	+ \figgg{0.15}{BLbigW} + \figgg{0.15}{LAbigW}\\[1ex]
&= \figgg{0.15}{bigXpAA}\ +\ \figgg{0.15}{bigYpBB}\ + \
	 \figgg{0.15} {bigZpAR}\ +\cdots\\[1ex]
&\hspace*{1.9cm}\dots+\ \figgg{0.15} {bigZpRB} 
	\ +\ \figgg{0.15}{bigWpLA}\ +\ \figgg{0.15}{bigWpBL}\ . 
\end{aligned}
\end{equation}
We want to show that $\ket{\varphi}$ is of the form (\ref{eq:vector}),
i.e., we need to show that $X$ has a decomposition such as in
(\ref{eq:Xprime-decomposition}), and so on. To this end, we will make
heavy use of Lemma~\ref{lemma:inj-consequences}. In particular,
injectivity of $\left(\begin{smallmatrix}A&R\\L&B\end{smallmatrix}\right)$
implies the existence of a tensor $R^{-1}$ left-inverse to $R$, which at
the same time annihilates any of the other tensors $A$, $B$, and $L$, as
well as the existence of a vector $\ket{b}$ satisfying condition 5 in Lemma \ref{lemma:inj-consequences} for $\figgg{0.15}{Bb}=\figgg{0.15}{Id}$. We now apply $R^{-1}$ to the second
site in Eq.~(\ref{eq:appA:vectorinintersection}), which cancels all terms
except one at each side of (\ref{eq:appA:vectorinintersection}).
Additionally, we project the third site onto $\ket{b}$ and obtain
\begin{equation}\label{eq2}
\figgg{0.15}{AibigZb}=\figgg{0.15}{bigZpiBb}=\figgg{0.15}{bigZpi}
\end{equation}
By defining $\figgg{0.15}{Z_b}=\figgg{0.15}{bigZb}$\ , we therefore obtain
that 
\begin{equation}\label{eq:appA:Zprime-A-Zb}
\figgg{0.15}{bigZp}=\figgg{0.15}{AZ_b}\ .
\end{equation}
Similarly, we can apply $A^{-1}$ at the second site and a 
vector $\ket{a}$ verifying condition 5 in Lemma \ref{lemma:inj-consequences} such that $\figgg{0.15}{Aa}=\figgg{0.15}{Id}$ 
at the first site to see that 
\begin{equation}\label{eq:appA:Z-B-Zaprime}
\figgg{0.15}{bigZ}=
\figgg{0.15}{BZp_a}\ ,
\end{equation}
where $Z'_a$ is defined correspondingly.  Finally, we can project
Eq.~(\ref{eq:appA:Z-B-Zaprime} onto $\ket{b}$ to infer that 
\begin{equation}\label{eq:appA:Zb-eq-Zaprime}
\figgg{0.15}{Z_b}=\figgg{0.15}{Zp_a}\ .
\end{equation}
Corresponding expressions for the form of $W$ and $W'$ can be derived using
$L^{-1}$.


Now let us return to the identity (\ref{eq:appA:vectorinintersection}) and
apply $A^{-1}$ to the second site, which yields
\begin{equation*}
\figgg{0.15}{AibigX} \ +\ \figgg{0.15}{LibigW} 
\ =\  
\figgg{0.15}{bigXpiA}\ +\ \figgg{0.15}{bigZpiR}\ .
\end{equation*}
Using the analogue of
Eqs.~(\ref{eq:appA:Zprime-A-Zb}--\ref{eq:appA:Zb-eq-Zaprime}) for $W$,
this is equivalent to
\begin{equation*}
 \figgg{0.15}{AibigX}\ +\ \figgg{0.15}{LiAWp_b}
\ =\ 
\figgg{0.15}{bigXpiA}\ +\ \figgg{0.15}{AiRZ_b}\ .
\end{equation*}
Now, we apply $\ket{a}$ to the first site and obtain 
\begin{equation}\label{eq:appA:X-decomposition}
\figgg{0.15}{bigX}
\ =\ 
\figgg{0.15}{AXp_a}\ 
\ +\ \figgg{0.15}{RZ_b}\ ,
\end{equation}
where we have defined $X_a'$ 
accordingly.

Combining Eqs.~(\ref{eq:appA:Z-B-Zaprime}), (\ref{eq:appA:Zb-eq-Zaprime}),
and (\ref{eq:appA:X-decomposition}), we find that
\begin{align*}
&\figgg{0.15}{AAbigX}\ +\ \figgg{0.15} {ARbigZ}\ +\ \figgg{0.15} {RBbigZ}=
\\
&=\figgg{0.15}{AAAXp_a}\ 
\ +\ \figgg{0.15}{AARZ_b}\ +\ 
\sum_{\mathrm{pos\, R}\neq 3}\figgg{0.15} {ARBZ_b}\ ,
\end{align*}
which shows that the l.h.s., which is half of the terms in
Eq.~(\ref{eq:vector}), is contained in $A_3+R_3$. In the same way it can
be shown that the sum of other three terms in Eq.~(\ref{eq:vector}) is
contained in $B_3+L_3$, which proves that $S_2\otimes \C ^d \cap
\C^d\otimes S_2 \subset S_3$.

The proof that $S_k\otimes \C ^d \cap \C^d\otimes S_k \subset S_{k+1}$ can
be carried out in the same fashion, using that the tensors 
\begin{align*}
&\figgg{0.15}{AA_A}\ ,\quad \figgg{0.15}{BB_B}\ , \\[1ex]
&\sum_{\mathrm{pos\,R}}\figgg{0.15}{ARB_B} \ ,
\mbox{ and } \ \sum_{\mathrm{pos\,L}}\figgg{0.15}{BLA_A}
\end{align*}
have inverses, since  
injectivity of $\left(\begin{smallmatrix} A & R \\
L & B \end{smallmatrix} \right)$ implies injectivity of
\begin{equation}
\left( \begin{array}{cc}
A\cont A\cont \cdots \cont A 
& 
\ \sum_{\mathrm{pos\,R}} A\cont R\cont B\cont  \cdots \cont B \\
\sum_{\mathrm{pos\,L}} B\cont L\cont A\cont  \cdots \cont A\ 
& 
B\cont B\cont  \cdots \cont B
\end{array} \right)\ ;
\end{equation}
this can be proven in analogy to Lemma~\ref{lemma:inj-perturbation}.
\hspace*{\fill}\qed
\end{proof}

\begin{proof}[Closure property, Lemma~\ref{lemma:closure}]
It is clear that 
\[S_\mathrm{left}\cap S_\mathrm{right}\supset \spanned\left\{\figgg{0.15}{AA_A2},\figgg{0.15}{BB_B2}\right\}.
\]
To
show the converse, let $\ket{\varphi}\in S_\mathrm{left}\cap
S_\mathrm{right}$:
\begin{align*}
\ket{\varphi}\ =& \quad 
    \figgg{0.15}{A_AXp}\ +\ 
    \figgg{0.15}{B_BYp}\  +\  
\\[1ex]
&\qquad
    +\sum_{\mathrm{pos R}} \figgg{0.15}{ARB_BZp}
\ +\ \sum_{\mathrm{pos L}}\figgg{0.15}{BLA_AWp}
\\[1ex]
=& \quad  \figgg{0.15}{AXA_A}\ +\ \figgg{0.15}{BYB_B}\  +\\[1ex]
&\qquad  + \sum_{\mathrm{pos\,R}} \figgg{0.15}{BZARB_}
\ +\ \sum_{\mathrm{pos\,L}}\figgg{0.15}{AWBLA_}\ .
\end{align*}
By applying the inverse tensor corresponding to $\sum\limits_{\mathrm{pos\,
R}}\figgg{0.15}{ARB_B}$ at sites $2,\dots,N$, we get
\begin{equation*}
\figgg{0.15}{BZi_i}=\figgg{0.15}{Ai_iZp}\ .
\end{equation*}
Now let $\ket a$ be such that 
$\raisebox{0.5em}{\figgg{0.15}{Aa}}=\figgg{0.15}{Id}$ and $\raisebox{0.5em}{\figgg{0.15}{Ba}}=0$
(Lemma~\ref{lemma:inj-consequences}): Projecting the first site onto $\ket
a$ yields $\figgg{0.15}{Zp}=0$; with a corresponding $\ket b$, we find
that $\figgg{0.15}{Z}=0$. In the same way, we can prove that
$\figgg{0.15}{Wp}=\figgg{0.15}{W}=0$.

Now, we can apply $A^{-1}$ to all sites to find that
$\figgg{0.15}{X}=\figgg{0.15}{Xp}=\figgg{0.15}{Id}$, and $B^{-1}$ to
obtain $\figgg{0.15}{Y}=\figgg{0.15}{Yp}=\figgg{0.15}{Id}$ (a similar
proof can be found in \cite{schuch:peps-sym}), showing that 
$S_\mathrm{left}\cap S_\mathrm{right}\subset \spanned\left\{\ket{M(A)},\ket{M(B)}\right\}$.
\hspace*{\fill}\qed
\end{proof}

\section{The thermodynamic limit Hamiltonian via the GNS-representation}
\label{appendix:GNS}

The way to study the thermodynamic limit of finitely correlated spin chains is through the GNS-representation of the algebra of local observables \cite{nachtergaele:quantumspinsystems,bratteli:operatoralgebras}.

\begin{theorem}[Gelfand-Neimark-Segal Representation] \cite{conway}
Given a $\displaystyle C^*$-algebra\, $\mc  U$ with identity and a state $\w$ on it, there exists a essentialy unique, up to unitary equivalence, cyclic\footnote{A representation $(\mc H,\pi,e)$ of a $\displaystyle C^*$-algebra $\mc U$ is called cyclic if $\pi(U)e$ is dense in $\mc H$. Such vector $e$ is also called cyclic.} representation $(\mc H_\w,\pi_\w,\Omega_\w)$ such that $\w(A)=\braket{\Omega_\w}{\pi_\w(A)\Omega_\w}$ for all $A\in\mc U$. Consequently, $\|\Omega_\w\|^2=\|\w\|=1$. 
\end{theorem}

In order to construct it, a Hilbert space structure must be introduced via the state $\w$: $\braket{A}{B}=\w(A^*B)$ is the (possibly singular) inner product. The quotient of $\mathcal{A}$ by the subspace of elements such that $\braket{A}{A}=0$ is a pre-Hilbert space, which just needs to be completed to be the Hilbert space $\mathcal{H}_\w$ needed for the representation.

The class to which the identity in $\mathcal{A}$ belongs can be taken as the distinguised vector $\Omega_\w$.

The algebra we are dealing with is the algebra of local observables over an infinite spin chain:
\begin{equation}
\mathcal{A}=\bigcup _{i<j} \cdots \otimes \I \otimes \I \otimes \mathcal{A}_i \otimes \mathcal{A}_{i+1} \otimes \cdots \otimes \mathcal{A}_j \otimes \I \otimes \cdots,
\end{equation}

\noindent  where each $\mathcal{A}_k$ denotes the local algebra of observables at the respective site $k$. Since the dimension at each site is always the same in translationally invariant MPSs, this local algebra is the same for every site.

Let $\w_A$ be the state densely defined on local observables $O\in \mc A$ as 
$$\w_A(O)=\figgg{0.15}{Limitobs2new},$$
and extended continuously to $\bar{\mc A}$.

This state can be considered as the limit of the unnormalized states $\ket{M(A)}=\figgg{0.18}{AA_A2}\ $,  since for any local observable $O$ we have $\w_A(O)$ as the limit
\begin{equation*}
\figgg{0.18}{Limitobs1new}\to \figgg{0.18}{Limitobs2new}\ ,
\end{equation*}
\noindent because the many copies of the operator $\figgg{0.18}{AA2}$ tend to $\figgg{0.18}{ILA}$, and also $\trace(\Lambda_A)=1$ -- which means $\w_A(\I)=1$, and therefore $\w_A$ is normalized.

Thus, it is natural to describe $\w_A$ as $$\w_A=\figgg{0.18}{Alimit2new}\ .$$ Similarly, we can consider the analogue state $\w_B$ for the $B$ block.

In order to study the spectral properties of the thermodynamic limit of the uncle Hamiltonian, we can take $\w_A$ as the state to which the representation is associated. The choice of either $\w_A$ or $\w_B$ is irrelevant in this case, since they play a similar role in the MPS. However, for general Hamiltonians, the spectrum may depend on the ground state taken for the representation.

In the first place we have to construct the quotient of the algebra $\mathcal{A}$ of local observables by the ideal of those observables $O$ such that $\w_A(O^*O)=0$:
\begin{equation*}
\figgg{0.18}{GNS1new}=0.
\end{equation*}

We can see this as the norm of the 'vector' $\figgg{0.18}{GNSvectornew}$, and the representatives of the equivalence classes in the quotient will have this form. Since any operator $O$ can be considered, the Hilbert space $\mc H_\w$ of the representation will be seen as the completion of
\begin{equation*}
S=\bigcup_{i\le j} S_{i,j}, \hbox{ where } S_{i,j}={\rm span}\left\{\figgg{0.18}{GNSvector3new},\ X\right\},
\end{equation*}
where different tensors $X$ can lead to the same state (tensors $X$ come from the contraction of $O$ observables with those $A$ tensors in the sites $O$ is acting on), and they just need to match the needed dimensions.

The local observables will be represented in $B(\mathcal{H}_\w)$ by themselves tensored with the identity, acting on $\bar{S}=\mathcal{H}_\w$.

Since, given any element of $S$, only finitely many local Hamiltonians $h'_P$ do not anihilate on it, the global uncle Hamiltonian is well defined on $S$, and therefore densely defined on $\bar{S}=\mc H_\w$. In the case  $H'_P|_S$ is essentially self-adjoint and has a unique self-adjoint extension, there is no other choice for this extension but to be the thermodynamic limit $H'_\w$ we must study.

For $H'_P|_S$ to have a unique self-adjoint extension, it suffices to have a dense set of analytic elements in its domain \cite{reedsimon}. A vector $\ket{\varphi}$ is analytic if there exists some $r>0$ such that
\[
\sum_{n=0}^\infty \frac{r^n}{n!}\|H_P'^n(\ket{\varphi})\|<\infty.
\]

If $\ket{\varphi}$ belongs to $S$ it must be in some $S_{-M,M}$. Recall that, for every $N$, we have that $\|H'_P|_{S_{-N,N}}\|\le 2N+2$ and $H_P'(S_{-N,N})\subseteq S_{-N-1,N+1}$, and consequently $\|H'^n(\ket{\varphi})\|\le \Pi_{k=1}^n(2M+2k)\|\ket{\varphi}\|$. From this we have
\[
\frac{r^n}{n!}\|H'^n(\ket{\varphi})\|\le{r^n}\Pi_{k=1}^n\frac{2M+2k}{k}\le r^n \Pi_{k=1}^n (2M+2)=r^n(2M+2)^n,
\]
which is summable for $r<1/(2M+2)$. 

Thus every vector in $S$ is an analytic vector for $H'_P|_S$, which is essentially self-adjoint, and therefore the thermodynamic limit Hamiltonian $H'_\w$ of the uncle Hamiltonian is the unique self-adjoint extension of $H'_P|_S$, whose spectrum we must study.

\subsection*{Proof of Proposition \ref{summingeigenvalues}}

In order to proof Proposition \ref{summingeigenvalues}, we first need the next simple fact, which we already commented in the main text.

\begin{proposition} \label{prop:finitelysupportedWeylseqs} A real value $\lambda\in \spec(H'_\w)$ iff there exists a sequence of normalized states   $\{\ket{\varphi_{\lambda,k} } \} _k\in S$ such that $\|(H'_\w -\lambda I ) (\ket{\varphi_{\lambda,k}}) \|\to 0$.

\end{proposition}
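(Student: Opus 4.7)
The plan is to reduce both directions of the equivalence to the standard Weyl criterion for self-adjoint operators (Appendix~\ref{appendix:spectraltheorem}) and then, for the nontrivial direction, to upgrade the general approximating sequence it provides to one living in the finitely supported subspace $S$. This upgrade rests on a fact already built into the construction of $H'_\w$: since $H'|_S$ is essentially self-adjoint by the analytic-vector argument cited earlier, $S$ is automatically a core for $H'_\w$, that is, $S$ is dense in $D(H'_\w)$ with respect to the graph norm $\|\psi\|_{\mathrm{gr}}^2=\|\psi\|^2+\|H'_\w\psi\|^2$.

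The sufficiency direction is immediate. Any sequence $\{\ket{\varphi_{\lambda,k}}\}\subset S\subset D(H'_\w)$ of unit vectors with $\|(H'_\w-\lambda I)\ket{\varphi_{\lambda,k}}\|\to 0$ is by definition a Weyl sequence, so Weyl's criterion gives $\lambda\in\spec(H'_\w)$.

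For the necessity direction, I would start from a Weyl sequence $\ket{\psi_k}\in D(H'_\w)$ supplied by the standard criterion, and then approximate each $\ket{\psi_k}$ in the graph norm by a vector $\ket{\tilde\varphi_k}\in S$, choosing both $\|\ket{\tilde\varphi_k}-\ket{\psi_k}\|$ and $\|H'_\w(\ket{\tilde\varphi_k}-\ket{\psi_k})\|$ smaller than $1/k$. A one-line triangle-inequality estimate then yields $\|(H'_\w-\lambda I)\ket{\tilde\varphi_k}\|\to 0$, and since norm convergence forces $\|\ket{\tilde\varphi_k}\|\to 1$, normalizing produces the desired sequence $\ket{\varphi_{\lambda,k}}=\ket{\tilde\varphi_k}/\|\ket{\tilde\varphi_k}\|$ in $S$.

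The only nontrivial ingredient is the graph-norm density of $S$ in $D(H'_\w)$, i.e.\ the core property, but this is exactly what essential self-adjointness of $H'|_S$ means, so no analytic work beyond what is already invoked to define $H'_\w$ is required. The rest is routine functional analysis, and there is no serious obstacle to anticipate.
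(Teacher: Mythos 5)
Your proof is correct and follows essentially the same route as the paper's: both directions reduce to Weyl's criterion for self-adjoint operators, and the necessity direction uses the fact that $\graph(H'_\w)=\overline{\graph(H'|_S)}$ (equivalently, that $S$ is a core) to push the Weyl sequence into $S$ via graph-norm approximation. Your write-up merely makes explicit the triangle-inequality and normalization steps that the paper leaves implicit.
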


\begin{proof}
This follows from the fact that the residual spectrum of $H'_\w$ is empty --since it is self-adjoint--, the definition of a value lying in the point or continuous spectrum (Appendix~\ref{appendix:spectraltheorem}), and the fact that $H'_\w$ is the closure of $H'_P$ acting on $S$, that is, $\graph(H'_\w)=\overline{\graph(H'{|_S})}\subset \bar S\times\bar S$.
\hspace*{\fill}\qed
\end{proof}

\begin{proof} {\it (Proposition \ref{summingeigenvalues})}

We will state the proof for two values $a=\lambda_1$ and $b=\lambda_2$. The construction for the sum of more values is completely analogous.

For both $a$ and $b$ we can find some sequences of normal states $\{\ket{\varphi_{a,k} } \}_k$ and $\{\ket{\varphi_{b,k} } \}_k\subset S$ verifying the previous proposition, with $\|H'_\w(\ket{\varphi_{c,k}}) - \ket{\varphi_{c,k} }\|<1/k$ for $c=a,b$. We can assume, due to translationally invariance of $H'_P$, that the first sequence lies in $\bigcup_{i<-k}S_{i,-k}$, and the second one is contained in $\bigcup_{j>k}S_{k,j}$. These states would then have the form 
\[\ket{\varphi_{a,k} }=\figgg{0.18}{conc}\ 
\]
\[
 \ket{\varphi_{b,k} }=\figgg{0.18}{cat}\ ,
\]
for some tensors  $X_{a,k}$, $X_{b,k}$. The normalization conditions would be
\[
\figgg{0.18}{concconc}=1=\figgg{0.18}{catcat}\ .
\]

From these states we can consider the 'concatenated' states
\begin{equation}
\ket{\Phi_k}=\figgg{0.18}{concat2}\ ,
\end{equation}
Note that the separation between the $X$ blocks is increasingly growing. 

And due to the structure of $H'_\w$ the image of 
$\bigcup_{i<-k} S_{i,-k}$ is contained in $\bigcup_{i<-k} S_{i,-k+1}$, and $H'_\w(\bigcup_{j>k}S_{k,j}) \subseteq \bigcup_{j>k}S_{k-1,j}$. Moreover, there exist tensors $X'_{a,k}$ and $X'_{b,k}$ such that
\[
H'_\w\left(\figgg{0.18}{conc}\right)=\figgg{0.18}{Hconc}\  
\]
\[
H'_\w\left(\figgg{0.18}{cat}\right)=\figgg{0.18}{Hcat}\ .
\]

These new tensors also allow us to describe the image of the concatenations:
\[
H'_\w(\figgg{0.18}{concat2})=
\]
\[
=\figgg{0.18}{Hconcat2a}+
\]
\[
=\figgg{0.18}{Hconcat2b}\ .
\]
Let us call $\ket{\Phi'_{k,a}}$ and $\ket{\Phi'_{k,b}}$ these two summands.

We then have that $\|H'_\w(\ket{\Phi_k}) -(a+b)\ket{\Phi_k}\|\le 
\|\ket{\Phi'_{k,a}} -a\ket{\Phi_k}\|+\|\ket{\Phi'_{k,b}} -b\ket{\Phi_k}\|$. 

We can derive a bound for the first summand:
\[
\|\ket{\Phi'_{k,a}} -a\ket{\Phi_k}\|^2=\braket{\Phi'_{k,a}}{\Phi'_{k,a}}+|a|^2\braket{\Phi_k}{\Phi_k}-2\Real(a\braket{\Phi'_{k,a}}{\Phi_{k}})=
\]
\[
\figgg{0.17}{Hconcat2aHconcat2a} + |a|^2 (1+O(e^{-k}))-
\]
\[- 2 \Real \left(a
\figgg{0.17}{Hconcat2aconcat2} \right )=
\]
\[
= \figgg{0.18}{HconcHconc}\ \figgg{0.18}{catcat} + |a|^2 -
\]
\[ - 2 \Real \left(a\ 
\figgg{0.18}{Hconcconc}\ \figgg{0.18}{catcat} \right) +O(e^{-k})=
\]
\[ \figgg{0.18}{HconcHconc}\, + |a|^2- 2 \Real \left(a\ 
\figgg{0.18}{Hconcconc} \right)+O(e^{-k})=
\]
\[
=\| H'_\w(\ket{\varphi_{a,k}})-a\ket{\varphi_{a,k}}\|^2+O(e^{-k})<1/k^2 +O(e^{-k}),
\]
where $\Real(\cdot)$ denotes the real part.

A similar bound can be found for the second summand, and we can derive the bound
\[
\|H'_\w(\ket{\Phi_k}) -(a+b)\ket{\Phi_k}\|=O(1/k).
\]

We also have that $\|\ket{\Phi_k}\|\to 1$. Therefore, the sequence $\ket{\Phi_k}/\|\ket{\Phi_k}\|$ satisfy the conditions in Proposition \ref{prop:finitelysupportedWeylseqs} for $a+b=\lambda_1+\lambda_2$, and consequently this sum lies in the spectrum of $H'_\w$.

Longer concatenations would prove the result for any finite sum among values from $\{\lambda_1,\dots,\lambda_n\}$. Note that the bound we get depends on the number of values $\lambda_i$ we are summing,
\[
\|H'_\w(\ket{\Phi_k}) -(\sum_{i=1}^n\lambda_i)\ket{\Phi_k}\|=O(n/k).
\]

\hspace*{\fill}\qed
\end{proof}

\subsection*{Proof of Theorem \ref{thm:finitesizespectra}}

Since we will need to keep track of how close some vector evidencing the existence of elements in the spectrum is from being an eigenvector we introduce the following definition.

\begin{definition}\label{definition:almosteigenvecto}A normalized vector will be called an approximated eigenvector for an operator $A$ and a given value $\lambda$, and for an error $\varepsilon$ if $\|(A -\lambda I ) \ket{\varphi_{\lambda,k}} \|<\varepsilon$. In the case of non-normalized vectors they must satisfy 
\[
\frac{\|(A -\lambda I ) (\ket{\varphi_{\lambda,k}}) \|}{\|\ket{\varphi_{\lambda,k}}\|}<\varepsilon
\]
\end{definition}

\begin{lemma}\label{lemma:finiteappeigenvectors}
For any given values $\lambda\in\R^+$, $n\in\N$ and $\delta>0$, there exists a value $N_0$ such that we can find approximated eigenvectors for the values $j\lambda$, $j=1,\dots,n$, and for an error at most $\delta$ for every finite chain with length greater that $2N_0+1$.
\end{lemma}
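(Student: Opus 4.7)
The plan is to lift each approximate eigenvector constructed in the thermodynamic limit to the finite chain through the embeddings $e_N$, and to show that this embedding preserves the approximate-eigenvector property up to exponentially small corrections.

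First, I would fix $n$, $\lambda>0$, and $\delta>0$, and use Proposition~\ref{summingeigenvalues} to produce a GNS-level approximate eigenvector $\ket{\Phi_j}\in S$ for each $j\in\{1,\dots,n\}$ corresponding to the eigenvalue $j\lambda$. Concretely, starting from a Weyl sequence for $\lambda$ (which exists by Theorem~\ref{thm:spectrumuncleghz} and Proposition~\ref{prop:finitelysupportedWeylseqs}), the concatenation procedure of Proposition~\ref{summingeigenvalues} yields a vector $\ket{\Phi_j}$ satisfying
$$\frac{\|(H'_\w-j\lambda I)\ket{\Phi_j}\|}{\|\ket{\Phi_j}\|}\le \frac{Cj}{k}\le \frac{Cn}{k}$$
for some absolute constant $C$, where $k$ is a free parameter. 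I would fix $k$ large enough that $Cn/k<\delta/2$; since each $\ket{\Phi_j}$ now belongs to some $S_{i_j,k_j}$, letting $N_0$ exceed all $|i_j|,|k_j|$ makes the supports simultaneously fit into any chain of length $2N+1$ with $N>N_0$.

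Next, for such an $N$ I would consider the embedded vectors $e_N(\ket{\Phi_j})\in\mathcal{H}_{2N+1}$. Lemma~\ref{lemma:almostembeddings} guarantees $\|e_N(\ket{\Phi_j})\|=\|\ket{\Phi_j}\|(1+O(e^{-N}))$. The crucial observation is that on the finite chain (with periodic boundary), any local term $h'_{a,a+1}$ whose two sites lie entirely in the $A$-background of $e_N(\ket{\Phi_j})$ annihilates the state: the reduced state on such a pair is supported in $\spann\{A\cont A\}\subseteq\spann\{U\}$, which lies in $\ker h'$. Hence the only contributing terms of $H'$ are those overlapping the (fixed) support of $\ket{\Phi_j}$, and, site by site, these act identically to the corresponding terms appearing in $H'_\w\ket{\Phi_j}$, up to the transfer-operator corrections of order $O(e^{-N})$ produced by Lemma~\ref{lemma:exptransferoperator} (and Lemma~\ref{lemma:offdiagonal-damping} for any off-diagonal pieces).

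Combining these estimates yields
$$\frac{\|(H'-j\lambda I)\,e_N(\ket{\Phi_j})\|}{\|e_N(\ket{\Phi_j})\|}<\frac{\delta}{2}+O(e^{-N})<\delta$$
for $N$ sufficiently large, uniformly in $j\le n$, which is the statement of the lemma after enlarging $N_0$ to absorb the exponential correction. The main technical obstacle I anticipate is making the matching between the finite-chain action of $H'$ and the GNS action $H'_\w$ quantitatively precise: one must track carefully, via Lemmas~\ref{lemma:exptransferoperator} and~\ref{lemma:offdiagonal-damping}, that the contributions near the "seam" where the circular chain closes on the $A$-background are $O(e^{-N})$ in both the energy numerator and the norm denominator, and that the resulting bounds are uniform over the finitely many $j$ and the (already fixed) $\ket{\Phi_j}$.
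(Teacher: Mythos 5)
Your proposal is correct and follows essentially the same route as the paper: obtain a finitely supported approximate eigenvector for $\lambda$ from Proposition~\ref{prop:finitelysupportedWeylseqs}, build approximate eigenvectors for $j\lambda$ by the concatenation construction of Proposition~\ref{summingeigenvalues} with the error budgeted as a fixed fraction of $\delta$, and then push them onto finite chains via the almost-isometric embeddings $e_N$ of Lemma~\ref{lemma:almostembeddings}, absorbing the $O(e^{-N})$ seam corrections by enlarging $N_0$. Your additional remarks on why the local terms in the $A$-background annihilate the embedded state only make explicit what the paper leaves implicit in its appeal to Lemma~\ref{lemma:almostembeddings}.
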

\begin{proof}
Let us take $\delta'=\delta/4n$. For $\lambda$ and an error $\delta'$, a normalized approximated eigenvector  $\ket{\varphi_{\lambda,\delta}}=\figgg{0.18}{GNSvectork}$ can be found in some $S_{-M,M}$ such that $\|(H'_\w -\lambda \I )\ket{\varphi_{\lambda,\delta}} \| < \delta'$. We can now find a value $r$ such that the following vectors are respectively approximated eigenvectors, not necessarily normalized, for the values $j\lambda$, $j=2,\dots,n$, and an error at most
$j\delta'<\delta/2$:
\begin{align*}
\ket{\varphi_{\lambda,\delta}^{(2)}}&=\figgg{0.18}{GNSvectork2},\\
\ket{\varphi_{\lambda,\delta}^{(3)}}&=\figgg{0.18}{GNSvectork3}, ...
\\
\ket{\varphi_{\lambda,\delta}^{(N)}}&=\figgg{0.18}{GNSvectorkN},
\end{align*}
where the $r$ denotes how many $A$ tensors are missing in the diagram, and the $N-3$ refers to the number of $X$ blocks, with $r$ $A$ tensors between every two of them, which are also missing.

A value $M'$ can be found such that all these states belong to $S_{-M',M'}$. And, due to Lemma \ref{lemma:almostembeddings}, there exists a value $N_0$ such that the maps
\[
\begin{array}{cccc}
e_{N}:&S_{-M',M'}&\rightarrow& S_{2N+1} \\
 & \figgg{0.17}{GNSvectorkbis} & \mapsto & \figgg{0.17}{GNSvectorkbisembedded}
\end{array}
\]
make each $e(\ket{\varphi_{\lambda,\delta}^{(i)}})$ an approximated eigenvector for $j\lambda$ and an error $\delta$, for the uncle Hamiltonian for the corresponding finite size chain, and for every $N\ge N_0$. The corresponding normalized vectors are approximated eigenvectors for the same values, and therefore satisfy the statement in the lemma.
\hspace*{\fill}\qed
\end{proof}

However, these $j\lambda$ need not be in the spectrum, but indicate the existence of elements in the spectrum close to them, as it is shown in the following lemma.

\begin{lemma} \label{lemma:finiteappeigenvalues} Let $A$ be a self-adjoint operator on a finite dimensional Hilbert space, and $\lambda$ a positive real value such that there exists a unitary vector $\ket{\varphi_{\lambda,\delta}}$ with $\|(A-\lambda\I)\ket{\varphi_{\lambda,\delta}}\| <\delta$. Then $\sigma(A)\cap (\lambda-\delta,\lambda+\delta)\ne \emptyset$.
\end{lemma}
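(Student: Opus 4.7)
The plan is to argue by contrapositive using the spectral theorem in finite dimensions. Assume, for contradiction, that $\sigma(A)\cap(\lambda-\delta,\lambda+\delta)=\emptyset$, i.e.\ every eigenvalue $\mu$ of $A$ satisfies $|\mu-\lambda|\ge \delta$. The goal is then to derive $\|(A-\lambda\I)\ket{\varphi_{\lambda,\delta}}\|\ge \delta$, contradicting the hypothesis.

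Since $A$ is self-adjoint on a finite-dimensional Hilbert space, I would invoke the spectral theorem to write $A=\sum_j \mu_j \ket{e_j}\bra{e_j}$ with real eigenvalues $\mu_j$ and an orthonormal eigenbasis $\{\ket{e_j}\}$. Expanding the unit vector as $\ket{\varphi_{\lambda,\delta}}=\sum_j c_j\ket{e_j}$ with $\sum_j |c_j|^2=1$, one gets
\[
(A-\lambda\I)\ket{\varphi_{\lambda,\delta}}=\sum_j (\mu_j-\lambda)c_j\ket{e_j},
\]
and hence, by orthonormality,
\[
\|(A-\lambda\I)\ket{\varphi_{\lambda,\delta}}\|^2=\sum_j |\mu_j-\lambda|^2 |c_j|^2.
\]

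Under the contradiction hypothesis, $|\mu_j-\lambda|\ge \delta$ for every $j$, so the right-hand side is bounded below by $\delta^2 \sum_j |c_j|^2=\delta^2$, giving $\|(A-\lambda\I)\ket{\varphi_{\lambda,\delta}}\|\ge\delta$. This contradicts the assumed strict inequality $\|(A-\lambda\I)\ket{\varphi_{\lambda,\delta}}\|<\delta$, so $\sigma(A)\cap(\lambda-\delta,\lambda+\delta)$ cannot be empty.

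There is essentially no obstacle here: the argument is entirely elementary and only uses the orthonormal eigenbasis of a finite-dimensional self-adjoint operator together with Parseval's identity. The only thing worth flagging is that finite dimensionality is used solely to guarantee a genuine eigenbasis (so that ``$\mu_j\in\sigma(A)$'' suffices); in infinite dimensions the same statement holds but one would need to replace the sum by a spectral integral and work with the projection-valued measure of $A$, which is the viewpoint already adopted in Appendix~\ref{appendix:spectraltheorem}.
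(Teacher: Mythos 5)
Your proof is correct and follows essentially the same route as the paper: expand the unit vector in an orthonormal eigenbasis of the finite-dimensional self-adjoint operator, and use $|\mu_j-\lambda|\ge\delta$ together with Parseval to lower-bound $\|(A-\lambda\I)\ket{\varphi_{\lambda,\delta}}\|$ by $\delta$, contradicting the hypothesis. If anything, your use of the non-strict bound $\ge\delta$ is slightly more careful than the paper's strict inequality at the same step, and both suffice against the strict hypothesis $<\delta$.
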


\begin{proof}
Let $\{\lambda_i\}$ be the set of eigenvalues (possibly repeated) of $A$, and $\{\ket{\phi_i}\}$ a corresponding orthonormal basis of eigenvectors. Then 
$\ket{\varphi_{\lambda,\delta}}$ can be written as:
\begin{equation}
\ket{\varphi_{\lambda,\delta}}= \sum_i a_i \ket{\phi_i},
\end{equation}

\noindent and $A(\ket{\varphi_{\lambda,\delta}})=\sum_i a_i \lambda_i \ket{\phi_i}.$

If  $\sigma(A)\cap (\lambda-\delta,\lambda+\delta)$ were the empty set, we would have that 
\begin{equation}
\|(A-\lambda \I)\ket{\varphi_{\lambda,\delta}}\|=\|\sum_i a_i (\lambda_i-\lambda) \ket{\phi_i}\| > \|\sum_i a_i \delta \ket{\phi_i}\| =\delta ,
\end{equation}
which contradicts the conditions of the lemma.
\hspace*{\fill}\qed
\end{proof}


We are now ready to prove Theorem \ref{thm:finitesizespectra}, which we first recall.

\bigskip

\noindent {\bf Theorem \ref{thm:finitesizespectra}} {\it The spectra of the uncle Hamiltonians for finite size chains tend to be dense in the positive real line.}

\begin{proof}
For any given values $L, m \in \N$, we can set $n=Lm$, $\lambda=1/m$ and $\delta=1/3m$. For these values, approximated eigenvectors can be found following Lemma \ref{lemma:finiteappeigenvectors}, which indicate, due to Lemma \ref{lemma:finiteappeigenvalues}, that $(j\lambda-\delta,j\lambda+\delta)\cap \sigma(H')\ne\emptyset$, $j=1,\dots,n$, for every long enough chain. Therefore eigenvalues for every long enough chain can be found distributed in disjoint intervals centered on the set ${j/m, j=1,\dots,mL}$ as depicted in the following diagram. These sets of eigenvalues, however possibly different for every chain length, tend to be dense in $\R^+$, as we consider higher values for $L$ and $m$.
\hspace*{\fill}\qed
\end{proof}

\begin{figure}[h]
\includegraphics[width=10cm]{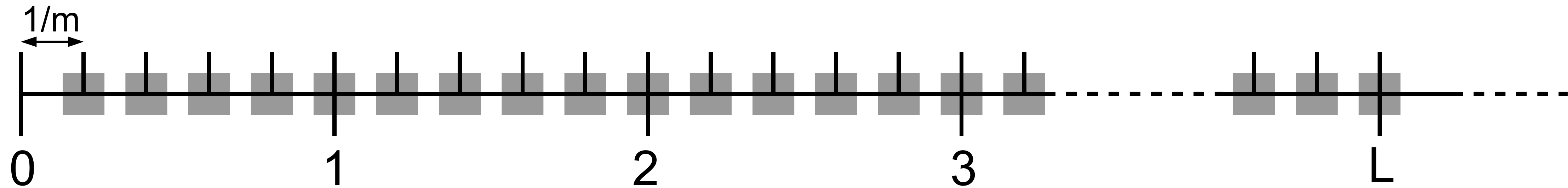}\\
\caption{Eigenvalues tending to be dense in $\R^+$.}\label{denseeigenvalues}
\end{figure}

\section{SPECTRAL REPRESENTATION THEOREM FOR UNBOUNDED OPERATORS} \label{appendix:spectraltheorem}

The uncle Hamiltonian whose spectrum we want to study is an unbounded self-adjoint operator. Therefore its residual spectrum is empty, and the whole spectrum is real. Moreover, since it is a positive operator the spectrum is contained in $\R^+$.

The elements in the union of both the continuous and the point spectra of an operator $A$ (called respectively $\sigma_c(A)$ and $\sigma_p(A)$), which in this case is the whole spectrum, can be characterized in this way: $\lambda \in \sigma_p(A) \cup \sigma_c(A)$ iff there exists a sequence of elements with norm one $\{\ket{\varphi_{\lambda,k}}\}_k\in A$ such that $\|(A -\lambda I ) \ket{\varphi_{\lambda,k}} \|\to 0$. Such sequences are called Weyl sequences. In the case that $\lambda \in \sigma_p ( A)$, a constant sequence exists verifying this condition --always the eigenvector. In the case $\lambda \in \sigma_c(A)$ the elements of the Weyl sequences can be thought of as 'almost eigenvectors', or 'approximated eigenvectors'.

Besides the projectors onto the eigenspaces, a spectral measure with some associated projectors can be defined.

\begin{definition} \cite{conway}
If $X$ is a set, $\Omega$ is a $\sigma$-algebra of subsets of $X$, and $\mathcal{H}$ is a Hilbert space, a spectral measure for $(X,\Omega,\mathcal{H})$ is a function $E:\Omega\to B(\mathcal{H})$ such that:

a) for each $\Delta$ in $\Omega$, $E(\Delta)$ is a projection;

b) $E(\emptyset)=0$ and $E(X)=\I$;

c) $E(\Delta_1\cap\Delta_2)=E(\Delta_1)E(\Delta_2)$ for $\Delta_1,\Delta_2\in\Omega$ (and therefore every pair of such projectors commute, since intersection of sets is commutative);

d) if $\{\Delta_n\}_{n=1}^\infty$ are pairwise disjoint sets from $\Omega$, then $E(\bigcup_{n=1}^\infty \Delta_n)=\sum_{n=1}^\infty E(\Delta_n)$.

\end{definition}

And the spectral representation theorem for unbounded operators states:

\begin{theorem} \cite{conway} Spectral theorem (for unbounded operators). If $N$ is a normal operator on $\mathcal{H}$, then there is a unique spectral measure $E$ defined on the Borel subsets of $\C$ such that:

a) $N=\int zdE(z)$

b) $E(\Delta)=0$ if $\Delta\cap\sigma(N)=\emptyset$

c) if $U$ is an open subset of $\C$ and $U\cap\sigma(N)\ne\emptyset$, then $E(U)\ne 0$

\end{theorem}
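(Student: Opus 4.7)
The plan is to reduce the unbounded case to the bounded spectral theorem, which I would prove first via the Gelfand--Naimark theorem, and then patch things up by a resolvent/Cayley-type trick to handle the unboundedness. I will not write out every $\varepsilon$; the standard proof is lengthy but consists of well-compartmentalized steps.

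\textbf{Step 1 (bounded normal case).} First I would establish the theorem for a bounded normal operator $T\in B(\mathcal{H})$. The $C^*$-algebra $\mathcal{B}$ generated by $T$, $T^*$ and $\I$ is commutative, so by Gelfand--Naimark it is isometrically $*$-isomorphic to $C(\sigma(T))$ via a map $\Phi:C(\sigma(T))\to\mathcal B$ with $\Phi(z)=T$. For every pair of vectors $\xi,\eta\in\mathcal H$, the functional $f\mapsto \langle \xi,\Phi(f)\eta\rangle$ is bounded on $C(\sigma(T))$, so by Riesz representation there is a unique complex Borel measure $\mu_{\xi,\eta}$ on $\sigma(T)$ representing it. Sesquilinearity in $(\xi,\eta)$ together with $|\mu_{\xi,\eta}|\le\|\xi\|\|\eta\|$ lets me define, for each Borel set $\Delta$, a bounded sesquilinear form $(\xi,\eta)\mapsto \mu_{\xi,\eta}(\Delta)$ and hence a bounded operator $E(\Delta)$ by the Riesz lemma. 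Checking (i) $E(\Delta)^2=E(\Delta)=E(\Delta)^*$ (using multiplicativity of $\Phi$ applied to indicator functions, approximated by continuous ones), (ii) countable additivity in the strong operator topology, and (iii) $E(\emptyset)=0$, $E(\sigma(T))=\I$, gives a spectral measure; extending $\Phi$ from $C(\sigma(T))$ to bounded Borel functions by $\Phi(f)=\int f\,dE$ and specializing to $f(z)=z$ yields $T=\int z\,dE(z)$. Conditions (b) and (c) of the statement for the bounded case follow from the identification of $\sigma(T)$ with the maximal ideal space.

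\textbf{Step 2 (reduction of the unbounded case).} For unbounded normal $N$ with dense domain $\mathcal D(N)$, the key tool is that $(I+N^*N)^{-1}$ is bounded, positive, and commutes with any bounded $S$ that commutes with $N$ (this is a consequence of $N$ being normal and hence $N^*N=NN^*$ with $\mathcal D(N^*N)=\mathcal D(NN^*)$). I would set $B=(I+N^*N)^{-1}$ and $C=NB$, both bounded; $B$ is self-adjoint positive with $0\le B\le I$ and $\ker B=\{0\}$, and $C$ is bounded normal with $C^*=N^*B$ and $CC^*=C^*C=N^*N B^2 = (I-B)B$. Then $B$ and $C$ generate a commutative $C^*$-algebra of bounded normal operators, to which Step 1 applies, producing a joint spectral measure $\widetilde E$ on $\mathbb C$ supported on the joint spectrum of $(B,C)$. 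Using the algebraic relation between $B$, $C$ and $N$, I push $\widetilde E$ forward by the Borel map $\Phi(b,c)=c/b$ (defined off the null set $\{b=0\}$, which has $\widetilde E$-measure zero because $\ker B=\{0\}$) to obtain a spectral measure $E$ on $\mathbb C$. The defining identity $N = \int z\, dE(z)$ is then verified by checking that $\mathcal D(N)=\{\xi:\int |z|^2\, d\langle\xi,E\xi\rangle<\infty\}$ and that the two sides agree as quadratic forms on this domain, using $N\xi = B^{-1}C\xi$ for $\xi\in\mathcal D(N)$.

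\textbf{Step 3 (support and uniqueness).} For (b) and (c), I note that $\lambda\notin\sigma(N)$ iff $(N-\lambda I)^{-1}$ is bounded; representing this resolvent as $\int (z-\lambda)^{-1} dE(z)$ and bounding its norm shows that any open neighbourhood of $\lambda$ on which $E$ is nonzero forces $\lambda\in\sigma(N)$, giving (c). Conversely, if $\Delta\cap\sigma(N)=\emptyset$, compactness-style arguments on $\overline\Delta$ via the resolvent yield $E(\Delta)=0$, giving (b). Uniqueness of $E$ follows from uniqueness of the bounded functional calculus: any two spectral measures satisfying (a) give the same $\int f\,dE$ for every $f\in C_0(\mathbb C)$ (by applying both sides to the resolvent and using Stone--Weierstrass), and hence agree on Borel sets by the Riesz--Markov uniqueness.

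\textbf{Main obstacle.} The delicate step is Step 2: carefully handling domains so that $B^{-1}C$ really is $N$ on $\mathcal D(N)$, and justifying the pushforward by the unbounded map $(b,c)\mapsto c/b$ so that the resulting $E$ is a genuine spectral measure and the integral $\int z\,dE(z)$ recovers $N$ (rather than a proper extension or restriction). Everything else is bookkeeping once the bounded spectral theorem is in hand.
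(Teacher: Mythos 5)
Your outline is correct and is essentially the standard proof from Conway's book, which is precisely the reference the paper cites for this theorem; the paper itself states the result without proof, as a classical tool. Your Steps 1--3 (bounded case via Gelfand--Naimark and Riesz representation, reduction of the unbounded case through the bounded transform $B=(I+N^*N)^{-1}$, $C=NB$ with $CC^*=C^*C=(I-B)B$, and the pushforward by $(b,c)\mapsto c/b$) faithfully reproduce that argument, and you correctly identify the domain bookkeeping in Step 2 as the only genuinely delicate point.
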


The spectral representation theorem for bounded operators is completely analogous. Using the right one we can proof that the uncle Hamiltonians are gapless in the thermodynamic limit -- one can sometimes restrict the uncle Hamiltonian to a subspace in which it is bounded.

With these tools we are able to show that the uncle Hamiltonians are generally gapless in the thermodynamic limit.
\bigskip

\begin{proof} {\it $H'_\w$ is gapless.}

\noindent The representation of $H'_P$ with respect to a ground state $\omega$, say $H'_\omega$, can be uniquely extended to a self-adjoint --and therefore normal-- unbounded operator \cite{conway}, with positive spectrum. 

With the spectral theorem above, it can be proven that if we have a unitary vector $\ket{\varphi}\in \bar{S}$ with $\bra{\varphi}H'_\w\ket{\varphi}=a$ which is orthogonal to the ground space of $H'_\w$, then $(0,a]\cap \sigma(H'_\w)\ne\emptyset$. 

Let us suppose that $\sigma(H'_\omega)\subseteq \{0\}\cup (a,\infty)$. In such a case the norm of $\ket{\varphi}$ would be
\begin{equation}
\braket{\varphi}{\varphi}=\bra{\varphi}\int_{(a,\infty)} dE(z)\ket{\varphi}=1
\end{equation}
\noindent because $\ket{\varphi}$ is orthogonal to the ground space, and we would have that
\begin{gather*}
a=\bra{\varphi}H'_\w\ket{\varphi}=\bra{\varphi}\int_{\R^+} zdE(z)\ket{\varphi}\stackrel{\ket{\varphi}\in  \ker(H'_\w)^\bot}{=}
\\
=\bra{\varphi}\left(\int_{(0,a]} zdE(z)+\int_{(a,\infty)} zdE(z)\right)\ket{\varphi}  \stackrel{\sigma(H'_\omega)\cap (0,a]=\emptyset}{=} \bra{\varphi}  \int_{(a,\infty)} zdE(z)\ket{\varphi}>
\\
> \bra{\varphi}\int_{(a,\infty)} adE(z)\ket{\varphi}=a \braket{\varphi}{\varphi} =a. \quad 
\end{gather*}

Hence if such $\ket{\varphi}$ exists there must be some part of the spectrum lying in $(0,a]$.

This was the case of the low energy states found for the uncle Hamiltonian for the GHZ state, since they were orthogonal to the ground space. However, for a general MPS, the low energy states found are not orthogonal to the ground space. Therefore we cannot state that $E(\{0\})$ in the spectral measure plays no role when looking at these states, but we can say that it will be negligible.

Since the family of states $\ket{\phi_n}$ tends to be orthogonal to the ground space, we have that $\bra{\phi_n}E(\{0\})\ket{\phi_n}$ tends also to 0. We can consider $\ket{\psi_n}=\ket{\phi_n}-E(\{0\})\ket{\phi_n}$, with norm tending to 1 and orthogonal to $\w_A$. 

The only difference is that from the fact that the energies of these states are close to $a$ we cannot infer directly that an element from $(0,a]$ lies in the spectrum, but we can prove that some element from $(0,r]$ does for every $r > a$ and, therefore, also an element from $(0,a]$ does. If this were not the case, we would have that
\begin{gather*}
a\sim \bra{\psi_n}H'_\w\ket{\psi_n}=\bra{\psi_n}\int_{\R^+} zdE(z)\ket{\psi_n}\stackrel{\ket{\psi_n}\in \ker(H'_\w)^\bot}{=}
\\
=\bra{\psi_n}\left(\int_{(0,r]} zdE(z) +\int_{(r,\infty)} zdE(z)\right)\ket{\psi_n}
\stackrel{\sigma(H'_\omega)\cap (0,r]=\emptyset}{=}
\\
= \bra{\psi_n}\int_{(r,\infty)} zdE(z)\ket{\psi_n}\ge
\\
\ge \bra{\psi_n}\int_{(r,\infty)} rdE(z)\ket{\psi_n}=r \braket{\psi_n}{\psi_n}\to r,
\end{gather*}
\noindent which contradicts the hypothesis $r>a$. 
\end{proof}
\bigskip
\bigskip

\end{document}